\def\naive{na\"{\i}ve }
\newcommand{\N}{\mathbb N}
\newcommand{\Mat}{\mathbb M}
\newcommand{\A}{\mathbb A}
\newcommand{\Z}{\mathbb Z}
\newcommand{\Zp}{\mathbb{Z}_+}
\renewcommand{\P}{\mathbb P}
\newcommand{\F}{\mathbb F}
\newcommand{\abs}[1]{\lvert #1 \rvert}
\newcommand{\ceil}[1]{\lceil #1 \rceil}
\newcommand{\C}{\mathcal C}
\newcommand{\M}{\mathcal M}
\newcommand{\degree}{\delta} 
\newcommand{\di}{\Delta} 
\newcommand{\Hit}{\mathcal{H}} 
\newcommand{\coeff}{\operatorname{coef}}
\newcommand{\lis}[3]{{#1}_1 #2 {#1}_2 #2\dots #2 {#1}_{#3}}
\newcommand{\comment}[1]{}
\newcommand{%
  \input{.pstex_t}%
}[1]{%
  \input{#1.pstex_t}%
}
\newcommand{\Dx}{D(\mathbf{x})}
\DeclareMathOperator*{\argmin}{arg \, min}
\DeclareMathOperator{\lc}{lc}
\DeclareMathOperator{\Span}{span}
\DeclareMathOperator{\poly}{poly}
\DeclareMathOperator{\Part}{Part}
\DeclareMathOperator{\dist}{d}
\DeclareMathOperator{\nbd}{nbd}
\DeclareMathOperator{\Color}{color}
\DeclareMathOperator{\paths}{paths}
\DeclareMathOperator{\Supp}{S} 
\DeclareMathOperator{\suppo}{s} 
\DeclareMathOperator{\bs}{bs}
\DeclareMathOperator{\bS}{bS}
\DeclareMathOperator{\descend}{descend}
\renewcommand{\sp}{\mathsf{s}} 
\newcommand{\Sp}{\mathsf{S}}
\newcommand{\x}{\mathbf{x}}
\newcommand{\xe}{\mathbf{x}^e}
\newcommand{\xei}{ \mathbf{x}_i^{e_i} }
\renewcommand{\H}{\mathbb H}
\renewcommand{\o}[1]{\overline{#1}}
\newtheorem{construction}[theorem]{Construction}
\newtheorem{observation}[theorem]{Observation}
\newtheorem{claim}[theorem]{Claim}
\newenvironment{reptheorem}[1]{T\textsc{heorem} \ref{#1} (restated).\em}{}
\newenvironment{remark}{R\textsc{emark}.\em}{}
\newenvironment{notation}{N\textsc{otation}.\em}{}
\title{Hitting-sets for ROABP and Sum of Set-Multilinear circuits}
\author{Manindra Agrawal \and Rohit Gurjar \and Arpita Korwar \and Nitin Saxena }
\begin{document}
\maketitle

\begin{abstract}
We give a $n^{O(\log n)}$-time ($n$ is the input size) blackbox 
polynomial identity testing algorithm for unknown-order read-once oblivious
algebraic branching programs (ROABP). The best time-complexity known for this class
was $n^{O(\log^2 n)}$ due to Forbes-Saptharishi-Shpilka (STOC 2014), and
that too only for multilinear ROABP. 
We get rid of their exponential dependence on the individual degree.
With this, we match the time-complexity for the unknown order ROABP with 
the known order ROABP (due to Forbes-Shpilka (FOCS 2013)) and also with the
depth-$3$ set-multilinear circuits (due to Agrawal-Saha-Saxena (STOC 2013)).
Our proof is simpler and involves a new technique called basis isolation.

The depth-$3$ model has recently gained much importance, 
as it has become a stepping-stone to understanding general arithmetic circuits. 
Its restriction to {\em multilinearity} has known exponential lower bounds but 
no nontrivial blackbox identity tests. 
In this paper, we take a step towards designing such hitting-sets. 
We give the first subexponential whitebox PIT 
for the sum of constantly many set-multilinear depth-$3$ circuits.
To achieve this, we define notions of {\em distance} and {\em base sets}. 
Distance, for a multilinear 
depth-$3$ circuit (say, in $n$ variables and $k$ product gates),  
measures how far are the partitions from a mere {\em refinement}. 
The $1$-distance strictly subsumes the set-multilinear model, 
while $n$-distance captures general multilinear depth-$3$. 
We design a hitting-set in time $(nk)^{O(\di \log n)}$ for $\di$-distance. 
Further, we give an extension of our result to models where the distance is large 
(close to $n$) but it is small when restricted to certain base sets (of variables).

We also explore a new model of read-once algebraic branching programs (ROABP) where the factor-matrices are {\em invertible} (called invertible-factor ROABP). We design a hitting-set in time poly($n^{w^2}$) for width-$w$ invertible-factor ROABP. Further, we could do {\em without} the invertibility restriction when $w=2$. Previously, the best result for width-$2$ ROABP was quasi-polynomial time (Forbes-Saptharishi-Shpilka, STOC 2014).

\end{abstract}


\section{Introduction}
The problem of {\em Polynomial Identity Testing} is that of
deciding if a given polynomial is nonzero. The complexity of 
the question depends crucially on the way the polynomial is input to the PIT test.
For example, if the polynomial is given as a set of coefficients
of the monomials, then we can easily check whether the polynomial is nonzero
in polynomial time. 
The problem has been studied for different input models.
Most prominent among them is the model of arithmetic circuits. 
Arithmetic circuits are the arithmetic analog of boolean circuits
and are defined over a field $\F$.
They are directed acyclic graphs, where every node
is a `$+$' or `$\times$' gate and each input gate is 
a constant from the field $\F$ or a variable from
$\mathbf{x} = \{\lis{x}{,}{n}\}$. 
Every edge has a weight from the underlying field $\F$.
The computation is done in the natural way.
Clearly, the output gate computes a polynomial
in $\F[\o{x}]$. 
We can restate the PIT problem as: Given an arithmetic circuit $\C$, 
decide if the polynomial computed by $\C$ is nonzero in time polynomial
in the circuit size. 
Note that, given a circuit, computing the polynomial explicitly
is not possible, as it can have exponentially many monomials. 
However, given the circuit, 
it is easy to compute an evaluation of the polynomial
by substituting the variables with constants. 

Though there is no known {\em deterministic} algorithm for PIT,
there are easy randomized algorithms, e.g.\ \cite{Sch80}.
These randomized algorithms are based on the theorem:
A nonzero polynomial,
evaluated at a random point, gives a nonzero value with a good probability. 
Observe that such an algorithm does not need to access 
the structure of the circuit, it just uses the evaluations;
it is a {\em blackbox} algorithm.
The other kind of algorithms, where the structure of the
input is used, are called {\em whitebox} algorithms. 
Whitebox algorithms for PIT have many known applications. E.g.\ graph matching reduces to PIT.
On the other hand, blackbox algorithms (or \emph{hitting-sets}) have connections to circuit lower bound proofs. Arguably, this is currently the only concrete approach towards lower bounds, see \cite{mul12, Mul12b}. See the surveys by Saxena \cite{Sax09, Sax14} and Shpilka \& Yehudayoff \cite{SY10} for more
applications.

An Arithmetic Branching Program (ABP) 
is another interesting model of computing polynomials. 
It consists of a directed acyclic graph with a source and a sink. 
The edges of the graph have polynomials as their weights.
The weight of a path is the product of the weights of the edges
present in the path. 
The polynomial computed by the ABP
is the sum of the weights of all the paths from the source to the sink.
It is well known that for an ABP, the underlying graph 
can seen as a layered graph such that all paths from the source to
the sink have exactly one edge in each layer. 
And the polynomial computed by the ABP can be written as 
a \emph{matrix product}, where each matrix corresponds to a layer. 
The entries in the matrices are weights of the corresponding edges. 
The maximum number of vertices in a layer, 
or equivalently, the dimension of the corresponding matrices 
is called the \emph{width }of the ABP.
It is known that symbolic determinant and ABP are equivalent models of computation
\cite{Tod91, MV97}.
Ben-Or \& Cleve \cite{BOC92} have shown that 
a polynomial computed by a formula of logarithmic depth and constant fan-in,
can also be computed by a width-$3$ ABP.
Thus, ABP is a strong model for computing polynomials.
The following chain of reductions shows the power of ABP and its constant-width 
version relative to
other arithmetic computation models (see \cite{BOC92} and \cite[Lemma 1]{Nis91}).
\begin{eqnarray*} 
&\text{Constant-depth Arithmetic Circuits} 
\leq_p  \text{Constant-width ABP} &\\
 & \leq_p  \text{Formulas} 
 \leq_p  \text{ABP}
 \leq_p  \text{Arithmetic Circuits}&
\end{eqnarray*}

Our first result is for a special class of ABP called 
{\em Read Once Oblivious Arithmetic Branching Programs (ROABP)}.
An ABP is a read once ABP (ROABP) if the weights in its $n$ layers
are univariate polynomials in $n$ distinct variables, i.e.\ 
the $i$-th layer has weights coming from $\F[x_{\pi(i)}]$, 
where $\pi$ is a permutation on the set $\{1, 2, \dots, n\}$.
When we know this permutation $\pi$, we call it an ROABP with {\em known} variable order
(it is significant only in the blackbox setting).

Raz and Shpilka \cite{RS05} gave a $\poly(n, w, \degree)$-time whitebox
algorithm for $n$-variate polynomials computed by a width-$w$ ROABP with
individual degree bound $\degree$.
Recently, Forbes and Shpilka \cite{FS12,FS13} gave a $\poly(n, w, \degree)^{\log n}$-time blackbox 
algorithm for the same, when the variable order is known.
Subsequently, Forbes et al.\ \cite{FSS13} gave a 
blackbox test for the case of unknown variable order, but 
with time complexity being $\poly(n)^{\degree \log w \log n}$.
Note the exponential dependence on the degree. 
Their time complexity becomes quasi-polynomial in case of multilinear polynomials, 
i.e.\ $\degree =1$.

In another work Jansen et al. \cite{JQS10a} gave quasi-polynomial time blackbox test for a sum of
constantly many multilinear ``ROABP". Their definition of ``ROABP" is more stringent. They assume
that every variable appears in at most once in the ABP. 
Later, this result was generalized to ``read-$r$ OABP" \cite{JQS10b}, where a variable can occur 
in at most one layer, and on at most $r$ edges.
Our definition of ROABP seems much more powerful than both of these.

We improve the result of \cite{FSS13} and match the time complexity for the unknown
order case with the known order case (given by \cite{FS12,FS13}).
Unlike \cite{FSS13}, we do not have exponential dependence on the individual degree.
Formally,

\begin{theorem}
\label{thm:ROABPhs}
Let $C(\mathbf{x})$ be an $n$-variate polynomial computed by a
width-$w$ ROABP (unknown order) with the degree of each variable bounded by $\degree$. 
Then there is a $\poly(n,w,\degree)^{\log n}$-time hitting set for $C$.
\end{theorem}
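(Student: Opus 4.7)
The plan is to produce a blackbox hitting set via a small family of weight assignments, one of which performs what I would call \emph{basis isolation}. The starting point is a dimension bound: for $C(\mathbf{x}) = u^T M_1(x_{\pi(1)}) \cdots M_n(x_{\pi(n)}) v$ computed by a width-$w$ ROABP, the $\F$-linear span of the matrix coefficients of $M := \prod_i M_i(x_{\pi(i)})$ lives in $\F^{w \times w}$ and hence has dimension at most $w^2$, irrespective of the unknown order $\pi$. So some $w^2$ monomials form a basis of this coefficient-matrix space, and if $C$ is nonzero then some basis monomial contributes a nonzero scalar coefficient $u^T A v$ in $C$.

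First I would prove an isolation-to-univariate lemma. Call a weight function $\rho : \{x_1, \ldots, x_n\} \to \N$, extended multiplicatively to monomials, \emph{basis-isolating} for $C$ if the $w^2$ least-$\rho$-weight monomials of $C$ have linearly independent coefficient matrices. Given such a $\rho$, a substitution of the form $x_i \mapsto \alpha_i y^{\rho(x_i)}$ (with suitable $\alpha_i \in \F$) should yield a nonzero univariate of degree $\poly(n, \delta, \max_i \rho(x_i))$: the smallest-weight basis monomial cannot be cancelled by higher-weight monomials, since every coefficient matrix is an $\F$-combination of basis ones appearing at strictly larger weight. Each basis-isolating $\rho$ therefore gives a trivially small hitting set via evaluation in $y$.

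The main construction is a deterministic family $\mathcal{F}$ of weight assignments, of size $\poly(n, w, \delta)^{\log n}$, guaranteed to contain a basis-isolating $\rho$ for every width-$w$, degree-$\delta$ ROABP on $n$ variables. I would build $\mathcal{F}$ recursively in $O(\log n)$ rounds, each round roughly halving the number of ``active'' variables. In each round, pairs of variables are fused via a small-degree auxiliary weight assignment; the invariant to maintain is that after fusion the polynomial is still a width-$w$ ROABP (in the fused variables), and a basis-isolating assignment for it lifts to a basis-isolating assignment for the original. After $\log n$ such halvings one is reduced to a univariate, where PIT is trivial. The degree blow-up per round is $\poly(n,w,\delta)$, giving the claimed overall complexity.

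The main obstacle is that the fusion step must be oblivious to the unknown variable order $\pi$. Unlike the known-order case of Forbes--Shpilka, where one can pair up matrices adjacent in the product, here \emph{every} pairing of variables must be covered. So the technical core is a combinatorial isolation lemma: a polynomially-sized deterministic family of weight assignments, indexed by small integers, isolates a basis of the $w^2$-dimensional coefficient-matrix space for \emph{any} permutation $\pi$. I expect this to follow from a union-bound / Schwartz--Zippel style argument, exploiting that the space of possible coefficient-matrix bases of a width-$w$ ROABP is low-dimensional, combined with careful bookkeeping of how basis isolation is preserved across the $\log n$ recursive rounds.
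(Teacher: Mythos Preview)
Your high-level architecture---basis isolation plus $O(\log n)$ halving rounds---matches the paper's, but two concrete pieces are off, and the second is a real gap.

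First, your definition of basis-isolating is too restrictive. Requiring that the $w^2$ smallest-weight monomials have independent coefficient matrices is unattainable whenever the coefficient span has dimension $< w^2$ (e.g.\ $D = (I + Ax_1)(I + Ax_2)$ with $A^2 = 0$). The paper's workable definition is weaker: there is \emph{some} set $S$ of at most $w^2$ monomials whose coefficients form a basis, the monomials in $S$ receive distinct weights, and every monomial outside $S$ has its coefficient in the span of those basis coefficients with \emph{strictly smaller} weight than its own. Your intuition ``the smallest-weight basis monomial cannot be cancelled'' is then exactly right and is Lemma~\ref{lem:basisIsolation}.

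Second, and more seriously: your plan for the recursive step is to fuse variable pairs while maintaining the invariant that the result is still a width-$w$ ROABP, and you correctly note that this forces you to pair variables that are adjacent in the unknown order $\pi$. Your proposed escape---cover ``every pairing'' via a union bound---would cost you a factor growing with the number of matchings, not $\poly(n,w,\delta)$ per round. The paper sidesteps this entirely. It never reduces to a smaller ROABP; it works directly with the product structure of the coefficients, $\coeff_D(\prod_i \mathbf{x}_i^{e_i}) = \prod_i \coeff_{D_i}(\mathbf{x}_i^{e_i})$. In round $r$ it (conceptually) pairs adjacent \emph{factor blocks} of size $2^{r-1}$ and, within each resulting block, greedily selects a basis of $\le w^2$ monomials with respect to the round-$r$ weight function $w_r$. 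The only requirement on $w_r$ is that it give distinct weights to the at most $(w^2)^2$ monomials inside each block; across all $d/2^r$ blocks that is $O(d w^8)$ monomial pairs. A Kronecker/prime-hashing family (Lemma~\ref{lem:kronecker}) of $\poly(n,w,\delta)$ weight functions contains one that separates \emph{any} fixed set of that many pairs---so in particular the set arising from whatever order $\pi$ the ROABP happens to have. You do not need to know or enumerate pairings; you only need the \emph{count} of pairs to be polynomial, and then you try all $\poly^{O(\log n)}$ combinations of the per-round candidates. The greedy basis selection is purely an analysis device: the algorithm is just the enumeration of weight functions followed by the substitution $x_i \mapsto t^{w(x_i)}$.
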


\begin{remark}
Our algorithm also works when the layers have their weights as general sparse 
polynomials (still over disjoint sets of variables) instead of 
univariate polynomials (see the detailed version in Section~\ref{sec:ROABP}).
\end{remark}

A polynomial computed by a width-$w$ ABP can be written as $S^{\top} D(\x) T$,
 where $S,T \in \F^w$ and $D(\x) \in \F^{w \times w}[\x]$ is a polynomial over 
the matrix algebra.
Like \cite{ASS13,FSS13}, we try to construct a basis (or extract the rank)
 for the coefficient vectors in $D(\x)$.
We actually construct a weight assignment on the variables, which {\em isolates} a
basis in the coefficients in $D(\x)$. 
This idea is inspired from the rank extractor techniques in \cite{ASS13,FSS13}.
Our approach is to directly work with $D(\x)$, while \cite{ASS13,FSS13}
have applied a rank extractor to small subcircuits of $D(\x)$, by shifting it carefully.
In fact, the idea of {\em basis isolating weight assignment} evolved when
we tried to find a direct proof, for the rank extractor in \cite{ASS13}, which
does not involve subcircuits.
But, our techniques go much further than both \cite{ASS13,FSS13}, as is evident
from our strictly better time-complexity results.

The boolean analog of ROABP, read once ordered branching programs (ROBP)
have been studied extensively, with regard to the RL vs. L question.
For ROBP, a pseudorandom generator (PRG) with seed length $O(\log^2 n)$ 
($n^{O(\log n)}$ size sample set)
is known in the case of known variable order \cite{N90}.
This is analogous to the \cite{FS13} result for known order ROABP.
On the other hand, in the unknown order case, the best known seed length
is of size $n^{1/2 + o(1)})$ ($2^{{n}^{1/2+o(1)}}$ size sample set) \cite{IMZ12}.
One can ask: Can the result for the unknown order case be matched with
the known order case in the boolean setting as well.
Recently, there has been a partial progress in this direction by \cite{SVW14}.

The PIT problem has also been studied for various restricted classes of circuits. 
One such class is depth-$3$ circuits. 
Our second result is about a special case of this class.
A depth-$3$ circuit is usually
defined as a $\Sigma \Pi \Sigma$ circuit: The circuit gates are in 
three layers, the top layer has an output gate which is $+$, second layer has
all $\times$ gates and the last layer has all $+$ gates. 
In other words, the polynomial computed by a $\Sigma\Pi\Sigma$ circuit is of the form 
$C(\overline{x}) = \sum_{i=1}^k a_i \prod_{j=1}^{n_i}  \ell_{ij}$, 
where $n_i$ is the number of input lines to the $i$-{th} product gate 
and $\ell_{ij}$ is a linear polynomial of the form $b_0 + \sum_{r=1}^n b_r x_r$.
An efficient solution for depth-$3$ PIT is still not known.
Recently, it was shown by Gupta et al.\ \cite{GKKS13}, that depth-3 circuits are almost as powerful as general circuits.
A polynomial time hitting-set for a depth-$3$ circuit implies a quasi-poly-time hitting-set for general circuits.
Till now, for depth-$3$ circuits, efficient PIT is known
when the top fan-in is assumed to be constant \cite{DS07, KS07, KS09, KS11, SS11, SS12, SS13}
and for certain other restrictions \cite{Sax08, SSS13, ASSS12}.

On the other hand, there are exponential lower bounds for depth-$3$ 
{\em multilinear} circuits \cite{RY09}. 
Since there is a connection between lower bounds and PIT \cite{Agr05}, 
we can hope that solving PIT for depth-$3$ multilinear circuits 
should also be feasible. 
This should also lead to new tools for general depth-$3$. 

A polynomial is said to be multilinear if 
the degree of every variable in every term is at most $1$.
The circuit $C(\o{x})$ is a multilinear circuit 
if the polynomial computed at every gate is multilinear.
A polynomial time algorithm is known only for a sub-class of multilinear 
depth-$3$ circuits, called {\em depth-$3$ set-multilinear circuits}.
This algorithm is due to Raz and Shpilka \cite{RS05} and is whitebox.
In a depth-$3$ multilinear circuit, 
since every product gate computes a multilinear polynomial, 
a variable occurs in at most one of the $n_i$ 
linear polynomials input to it. 
Thus, each product gate naturally induces a 
{\em partition} of the variables, where
each {\em color} (i.e.~part) of the partition contains the variables
present in a linear polynomial $\ell_{ij}$.  
Further, if the partitions induced by all the 
$k$ product gates are the same then the circuit is
called a depth-$3$ set-multilinear circuit.

Agrawal et al.\ \cite{ASS13} gave a quasi-polynomial time blackbox 
algorithm for the class of depth-$3$ set-multilinear circuits.  
But till now, no subexponential time test (not even whitebox) was known even for sum of two 
set-multilinear circuits.
We give a subexponential time whitebox PIT for sum of constantly many
set-multilinear circuits.

\begin{theorem}
\label{thm:c-setmultihs}
Let $C(\mathbf{x})$ be a $n$-variate polynomial, which is a sum of $c$ 
set-multinear depth-$3$ circuits, each having top fan-in $k$. 
Then there is a $n^{O(2^{c-1} n^{1 - \epsilon} \log k) }$-time whitebox test for $C$,
where $\epsilon := 1/2^{c-1}$.
\end{theorem}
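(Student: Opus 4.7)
The plan is to prove Theorem~\ref{thm:c-setmultihs} by induction on $c$, combining the distance-based hitting set proved earlier in the paper with a base-set decomposition of the variable set. The base case $c = 1$ is a single set-multilinear circuit, which has distance $1$ as a multilinear depth-$3$ circuit; the distance-based hitting set handles it well within the claimed bound, since $\epsilon = 1$ makes $n^{O(\log k)}$ the target.

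For the inductive step, view $C = C_1 + \cdots + C_c$ as one multilinear depth-$3$ circuit with $ck$ product gates carrying $c$ underlying partitions $\mathcal{P}_1, \ldots, \mathcal{P}_c$. The key insight is that while the overall distance of $C$ can be as large as $\Theta(n)$, there should exist a family of base sets of size $s \approx n^{1 - 1/2^{c-1}}$ on which, after an appropriate substitution for variables outside the base set, $C$ restricts to a sum of only $c-1$ set-multilinear circuits. A natural construction is to choose each base set $B$ as a union of parts of one of the partitions, say $\mathcal{P}_c$, so that $C_c|_B$ is itself set-multilinear with respect to $\mathcal{P}_c|_B$ and its contribution can be folded into the other $c-1$ circuits once the complement $[n] \setminus B$ is fixed. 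Applying the base-set extension of the distance-based hitting set together with the inductive hypothesis on each base set gives a recurrence whose unfolding yields the stated time bound: the base-set size shrinks by roughly a square root at each level, accounting for the $n^{1 - 1/2^{c-1}}$ factor in the exponent, while the factor $2^{c-1}$ emerges from summing the enumeration and hitting-set costs across the $c-1$ recursion levels.

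The main obstacle is the combinatorial collapse step: given arbitrary partitions $\mathcal{P}_1, \ldots, \mathcal{P}_c$, one must exhibit a base-set decomposition of $[n]$ of the prescribed size such that restricting $C$ to each base set and substituting for its complement yields a sum of only $c-1$ set-multilinear circuits with top fan-in still $O(k)$. The delicacy is to ensure that the other $c-1$ partitions $\mathcal{P}_1, \ldots, \mathcal{P}_{c-1}$, when restricted to a base set, retain the set-multilinear structure required by the inductive hypothesis, and that the top fan-in does not blow up under the substitution. Balancing the base-set size against the substitution and enumeration cost so that the recurrence closes exactly at $s = n^{1 - 1/2^{c-1}}$ is the delicate quantitative step, and is where the specific rate $\epsilon = 1/2^{c-1}$ arises.
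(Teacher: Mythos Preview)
Your high-level plan—induct on $c$ via a base-set decomposition, so that on each base set only $c-1$ partitions remain ``active''—matches the paper's strategy in spirit. But the specific mechanism you propose does not work, and the paper's actual construction is different in an essential way.

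The gap is your ``folding'' step. Taking $B$ to be a union of colors of $\mathcal{P}_c$ does \emph{not} make $C_c|_B$ disappear or merge into the other circuits: after fixing the complement, $C_c|_B$ is still a set-multilinear circuit with its own partition $\mathcal{P}_c|_B$, which is generally distinct from $\mathcal{P}_1|_B,\dots,\mathcal{P}_{c-1}|_B$. The only way your construction eliminates a partition is when $B$ is a \emph{single} color of $\mathcal{P}_c$, so that $\mathcal{P}_c|_B=\{B\}$ is the coarsest partition and can be placed harmlessly at the end of the sequence. But then the number of base sets equals the number of colors of $\mathcal{P}_c$, which can be $\Theta(n)$; you have no control when $\mathcal{P}_c$ has many small colors.

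The paper handles this with a two-sided construction (Lemma~\ref{lem:twoPartitions} for $c=2$, Lemma~\ref{lem:cPartitions} in general). Fix one partition $\mathcal{P}_1$ and split its colors at threshold $\sqrt{n}$. Each \emph{large} color becomes a base set on its own; on it $\mathcal{P}_1$ restricts to the coarsest partition, so it goes to the end of the sequence. The \emph{small} colors are handled by taking at most $\sqrt{n}$ transversals, each picking one element per small color; on these base sets $\mathcal{P}_1$ restricts to the \emph{finest} partition (singletons), so it goes to the front. In both cases $\mathcal{P}_1$ becomes trivial and the inductive hypothesis applies to the remaining $c-1$ partitions on each piece. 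The concavity of $z\mapsto z^{1-1/2^{c-2}}$ then yields the bound $m<2^{c-1}n^{1-1/2^{c-1}}$ on the \emph{number} of base sets. Note also that the paper's induction is purely combinatorial (on the base-set decomposition, Lemma~\ref{lem:cPartitions}); the hitting set from Theorem~\ref{thm:baseSetsHS} is applied once at the end, not recursively as your outline suggests.
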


To achieve this, we define a new class of circuits, as a tool, called
{\em multilinear depth-$3$ circuits
with $\di$-distance}. 
A multilinear depth-$3$ circuit has $\di$-distance
if there is an ordering 
on the partitions induced by the product gates, 
say $(\lis{\P}{,}{k})$, such that for any color 
in the partition $\P_i$, there exists a set of $\le (\di-1)$ 
other colors in $\P_i$ such that the set of variables in 
the union of these $\le \di$ colors are {\em exactly} partitioned
in the upper partitions, i.e.\ $\{\lis{\P}{,}{i-1}\}$.
As we will see,
such sets of $\di$ colors form equivalence classes of the colors at partition $\P_i$.
We call them friendly neighborhoods and
they help us in identifying subcircuits.
Intuitively, the distance measures how far away are the partitions 
from a mere \emph{refinement} sequence of partitions, $\lis{\P}{\le}{k}$.
A refinement sequence of partitions will have distance $1$.
On the other hand, general multilinear depth-$3$ circuits can have at most
$n$-distance.

As it turns out, a polynomial computed by a depth-$3$ $\di$-distance
circuit (top fan-in $k$) can also be computed by a width-$O(kn^{\di})$ ROABP
(see Lemma~\ref{lem:dDistROABP}).
Thus, we get a $\poly(nk)^{\di \log n}$-time hitting set for this class,
 from Theorem~\ref{thm:ROABPhs}.
Next, we use a general result about finding a hitting set for a class
{\em $m$-base-sets-$\mathsf{C}$}, 
if a hitting set is known for class $\mathsf{C}$.
A polynomial is in $m$-base-sets-$\mathsf{C}$,
if there exists a partition of the variables into $m$ base sets
such that restricted to each base set (treat other variables as field constants), 
the polynomial is in class $\mathsf{C}$.
We combine these two tools to prove Theorem~\ref{thm:c-setmultihs}.
We show that a sum of constantly many set-multilinear circuits 
falls into the class $m$-base-sets-$\di$-distance, for $m \di = o(n)$.

Agrawal et al.\ \cite{AGKS13} had achieved {\em rank concentration}, which implies
a hitting set, for the class
$m$-base-sets-$\di$-distance, but through complicated proofs.
On the other hand, this work gives only a hitting set
for the same class, but with the advantage of simplied proofs.

Our third result deals again with arithmetic branching programs. 
The results of \cite{BOC92} and \cite{SSS09} show
that the constant-width ABP is already a strong model. 
Here, we study constant-width ABP with some natural restrictions.

We consider a class of ROABPs where all the matrices in the matrix product, except
the left-most and the right-most matrices, are invertible. 
We give a blackbox test for this class of ROABP. 
In contrast to \cite{FSS13} and 
our Theorem~\ref{thm:ROABPhs}, this test works in \emph{polynomial time}
if the dimension of the matrices is constant.

Note that the class of ABP, where the factor matrices are invertible, is quite powerful,
as Ben-Or and Cleve \cite{BOC92} actually reduce formulas to width-$3$ ABP 
with \emph{invertible} factors. 
Saha, Saptharishi and Saxena \cite{SSS09} reduce depth-$3$ circuits to width-$2$ ABP with invertible factors. 
But the constraints of invertibility and read-once together seem
to restrict the computing power of ABP.
Interestingly, an analogous class of read-once boolean branching programs called
{\em permutation branching programs} 
has been studied recently \cite{KNP11, De11, Ste12}.
These works give
PRG for this class (for constant width) with seed-length $O(\log n)$,
in the known variable order case.
In other words, they give polynomial size sample set which can fool these programs.
For the unknown variable order case,
 Reingold et al.\ \cite{RSV13} gave a PRG
with seed-length $O(\log^2 n)$.
Our polynomial size hitting sets for the arithmetic setting work for 
any unknown variable order.
Hence, it is better as compared to the currently known results for the boolean case.

\begin{theorem}[Informal version]
\label{thm:invROABPHS}
Let $C(\o{x}) = D_0^{\top} (\prod_{i=1}^{d} D_i ) D_{d+1}$
be a polynomial such that $D_0 \in \F^w[{x}_{j_0}]$ and $D_{d+1} \in \F^w[{x}_{j_{d+1}}]$
and for all $i \in [d]$, $D_i \in \F^{w \times w}[x_{j_i}]$ is an invertible matrix (order of the variables is unknown).
Let the degree bound on $D_i$ be $\delta$ for $0 \leq i \leq d+1$.
Then there is a $\poly((\delta n)^{w^2})$-time hitting-set for $C(\o{x})$. 
\end{theorem}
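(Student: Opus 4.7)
The strategy is to build a \emph{basis isolating weight assignment} for $C$, in the spirit of the proof of Theorem~\ref{thm:ROABPhs}, but exploiting invertibility of the middle factors to replace the $\log n$ factor in the exponent by the constant $w^2$. First I would rewrite $C = D_0(x_{j_0})^{\top}\, P(\mathbf{x})\, D_{d+1}(x_{j_{d+1}})$, where $P(\mathbf{x}) = \prod_{i=1}^d D_i(x_{j_i}) \in \F^{w\times w}[\mathbf{x}]$. The coefficient of each monomial $\prod_i x_{j_i}^{e_i}$ in $P$ is the matrix product $\prod_i D_{i,e_i}$, which lives in the $w^2$-dimensional space $\F^{w\times w}$. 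Hence the $\F$-span of all these coefficient matrices has dimension at most $w^2$, so some set $\mathcal{B}$ of at most $w^2$ monomials forms a basis of this span.

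Next I would construct, layer by layer, an integer weight assignment $\mathbf{w}$ so that the monomial substitution $x_{j_i}\mapsto t^{\mathbf{w}(x_{j_i})}$ gives each monomial in $\mathcal{B}$ a strictly smaller weight in $t$ than every other contributing monomial, and such that the basis monomials themselves receive pairwise distinct weights. Invertibility is essential here: since $\det D_i(x_{j_i})\not\equiv 0$, right-multiplication by $D_i$ acts as a bijection on $\F^{w\times w}$ over the function field, so the prefix coefficient span $V_i := \Span\bigl\{\prod_{j\le i} D_{j,e_j}\bigr\}$ can be tracked inductively without ever exceeding dimension $w^2$. The unknown variable order would be handled via a Chinese-remainder-style choice of $\mathbf{w}$ indexed by primes of size $\poly(n\delta)$, forcing the same assignment to succeed simultaneously for every permutation of the layers. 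Once $\mathcal{B}$ is isolated, the univariate polynomial in $t$ arising from $P$ has a nonzero leading coefficient, and sandwiching with $O(\delta)$ many evaluations of $D_0$ and $D_{d+1}$ preserves the non-vanishing of $C$. Enumerating weight assignments of bit-length $O(w^2\log(n\delta))$ then yields a hitting set of the claimed size $\poly((n\delta)^{w^2})$.

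The hardest step, I expect, is making basis isolation work uniformly across all unknown variable orders while keeping the weight bit-length at $O(w^2\log(n\delta))$. Without invertibility the prefix span $V_i$ could collapse at an unlucky layer, forcing the hitting set to cover exponentially many collapse patterns; with invertibility $V_i$ stays at fixed dimension under a bijective action, and this rigidity is precisely what lets $w^2$ replace $\log n$ in the exponent. The technical core of the proof should therefore lie in the Chinese-remainder construction of $\mathbf{w}$ and in verifying that the combined assignment is truly basis-isolating against every ordering of $[d]$.
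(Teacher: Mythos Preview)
Your proposal has a genuine gap, and it also diverges substantially from the paper's route.

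The paper does \emph{not} prove Theorem~\ref{thm:invROABPHS} via basis isolation. It explicitly remarks that ``the proof technique here is very different from the first two theorems''; instead it establishes \emph{low-support rank concentration} after a shift. Concretely: after shifting each variable by a suitable univariate map $\phi$ (chosen so that every constant term $D_{i\mathbf{0}}$ becomes an invertible matrix---this is where invertibility of $D_i$ enters), the paper defines a parent--child relation on coefficient matrices $D_e = \prod_i D_{i,e_i}$. A parent $D_{e^*}$ is obtained from a child $D_e$ by replacing one \emph{boundary} factor $D_{j\mathbf{0}}$ (with $j$ just outside the block-support of $e$) by a nontrivial $D_{j,e^*_j}$; invertibility of $D_{j\mathbf{0}}$ lets one write $D_{e^*} = D_e \cdot A^{-1}B$ for appropriate $A,B$. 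The key Lemma~\ref{lem:parentchild} shows that a linear dependence of $D_e$ on its descendants \emph{lifts} to a dependence of $D_{e^*}$ on its descendants. Since any child--parent chain of length $w^2+1$ lives in $\F^{w\times w}$ and must contain a dependency, every coefficient ends up in the span of those with block-support $< w^2$ (Lemma~\ref{lem:bSConc}). Combined with $\ell$-concentration inside each $D_i$ (again via the shift, Lemma~\ref{lem:sparse}), this yields $O(w^2)$-support concentration in $C$, and then a $(n\delta)^{O(w^2)}$ hitting set by simply testing all low-support monomials (Lemma~\ref{lem:hsFromlConc}).

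Your sketch instead tries to push the basis-isolation machinery of Section~\ref{sec:ROABP}, and the mechanism you propose does not go through. The observation that the prefix span $V_i \subseteq \F^{w\times w}$ has dimension at most $w^2$ holds \emph{regardless} of invertibility, so it cannot be what replaces $\log n$ by $w^2$. You are also conflating invertibility of the polynomial matrix $D_i$ over the function field $\F(x_{j_i})$ with invertibility of its individual coefficient matrices $D_{i,e_i}$ over $\F$: the former does not make right-multiplication by any particular $D_{i,e_i}$ a bijection on $\F^{w\times w}$, which is what tracking $V_i$ at the coefficient level would need. Finally, the ``Chinese-remainder-style choice of $\mathbf{w}$'' meant to isolate a basis uniformly across all $d!$ variable orders is asserted but not constructed; nothing in the outline explains why $O(w^2\log(n\delta))$ bits of weight should suffice, and Lemma~\ref{lem:weightFunction} genuinely needs $\log d$ rounds absent a new idea. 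The missing idea is precisely the shift-then-lift concentration argument over the non-commutative algebra that the paper supplies.
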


The proof technique here is very different from the first two theorems 
(here we show {\em rank concentration} over a {\em non-commutative} algebra, 
see the proof idea in Section~\ref{sec:invROABP}).
Our algorithm works even when the factor matrices have their entries as general sparse 
polynomials (still over disjoint sets of variables) instead of 
univariate polynomials (see the detailed version in Section~\ref{sec:invROABP}).
Running time in this case grows to quasi-polynomial (but is still better
than Theorem~\ref{thm:ROABPhs} in several interesting cases).

If the matrices are $2 \times 2$, then we do not need the assumption
of invertibility (see Theorem~\ref{thm:ROABP22HS}, Section~\ref{sec:2ROABP}).
So, for width-$2$ ROABP our results are strictly stronger than \cite{FSS13}
and our Theorem~\ref{thm:ROABPhs}.
Here again, there is a comparable result in the boolean setting. 
PRG with seed-length $O(\log n)$ (polynomial size sample set) are known 
for width-$2$ ROBP \cite{BDVY13}.

\section{Preliminaries}
\label{sec:preliminaries}
\paragraph{Hitting Set} A set of points $\Hit$ is called a hitting set
for a class $\mathsf{C}$ of polynomials if for any nonzero polynomial $P$
in $\mathsf{C}$, there exists a point in $\Hit$ where $P$ evaluates to a nonzero
value. 
An $f(n)$-time hitting set would mean that the hitting set can be generated 
in time $f(n)$ for input size $n$.

\subsection{Notation}
$\Zp$ denotes the set $\N \cup \{0\}$.
$[n]$ denotes the set $\{1,2, \dots, n\}$. 
$[[n]]$ denotes the set $\{0,1, \dots, n\}$. 
$\mathbf{x}$ will denote a set of variables.
For a set of $n$ variables $\mathbf{x} = \{x_1, x_2, \dots, x_n\}$
and for an exponent $\mathbf{e} = (e_1, e_2, \dots, e_n) \in \Zp^n$,
$\xe$ will denote the monomial $\prod_{i=1}^n x_i^{e_i}$.
The \emph{support} of a monomial is the set of 
variables that have degree $ \ge 1 $ in that monomial. 
The \emph{support size }of the monomial is the cardinality of its support. 
A polynomial is called $s$-{\em sparse} if there are $s$ monomials in it
with nonzero coefficients.
For a polynomial $P$, the coefficient of the monomial $m$ in $P(\x)$ 
is denoted by $\coeff_P(m)$.

$\F^{m \times n}$ represents the set of all $m \times n$ matrices over the field $\F$.
$\Mat_{m \times m}(\F)$ will denote the algebra of $m \times m$ matrices over the field $\F$. 
Let $\A_k(\F)$ be any $k$-dimensional algebra over the field $\F$.
For any two elements $A = (a_1, a_2, \dots a_k) \in \A_k(\F)$
and $B = (b_1, b_2, \dots b_k) \in \A_k(\F)$
(having a natural basis representation in mind), 
their dot product
is defined as $A \cdot B = \sum_{i = 1}^n a_k b_k$;
and the product $AB$ will denote the product in the algebra $\A_k(\F)$.

$\Part(S)$ denotes the set of all possible partitions of the set $S$. 
Elements in a partition are called {\em colors} (or parts).

\subsection{Arithmetic Branching Programs}
An ABP is a directed graph with $d+1$ layers of vertices
$\{V_0,V_1, \dots, V_{d}\}$ and a start node $u$ 
and an end node $t$
such that the edges are 
only going from $u$ to $V_0$, $V_{i-1}$ to $V_i$ for any $i \in [d]$,
$V_d$ to $t$. 
A width-$w$ ABP has $\abs{V_i} \leq w$ for all $i \in [[d]]$.
Let the set of nodes in $V_i$ be $\{v_{i,j} \mid j \in [w]\}$.
All the edges in the graph have weights from $\F[\mathbf{x}]$,
for some field $\F$. As a convention, the edges going from $s$
and coming to $t$ are assumed to have weights from the field $\F$.

For an edge $e$, let us denote
its weight by $W(e)$.
For a path $p$ from $u$ to $t$,
its weight $W(p)$ is defined to be the product of weights of all the edges
in it, i.e.\ $\prod_{e \in p} W(e)$. 
Consider the polynomial $C(\mathbf{x}) = \sum_{p \in \paths(u,t)} W(p)$ 
which is the sum of the weights of all the paths from $u$ to $t$.
This polynomial $C(\mathbf{x})$ is said to be computed by the ABP.

It is easy to see that this polynomial is the same as
$S^{\top} (\prod_{i=1}^{d} D_i ) T $,
where $S,T \in \F^w$ 
and $D_i$ is a $w \times w$ matrix for $1 \leq i \leq d$ such that 
\begin{eqnarray*}
S(\ell) &=& W(u,v_{0,\ell}) \text{ for } 1 \leq \ell \leq w\\
D_i(k, \ell) &=& W(v_{i-1,k},v_{i,\ell}) \text{ for } 1 \leq \ell,k \leq w \text{ and } 1 \leq i \leq d\\
T(k) &=& W(v_{d,k},t) \text{ for } 1 \leq k \leq w
\end{eqnarray*}

\subsubsection*{ROABP} An ABP is called a {\em read once oblivious ABP (ROABP)}
if the edge weights in the different layers 
are univariate polynomials in distinct variables.
Formally, the entries in $D_{i}$ come from $\F[x_{\pi(i)}]$
for all $i \in [d]$, where $\pi$ is a permutation on the set $[d]$.

\subsubsection*{sparse-factor ROABP} 
We call the ABP a {\em sparse-factor ROABP} if the edge weights in different
layers are sparse polynomials in disjoint sets of variables.
Formally, 
if there exists an unknown partition of the variable set $\mathbf{x}$
into $d$ sets $\{\lis{\mathbf{x}}{,}{d}\}$
such that
$D_{i} \in \F^{w \times w}[{\mathbf{x}}_{i}]$ is a $s$-sparse polynomial,
for all $i \in [d]$,
then the corresponding ROABP is called a {\em $s$-sparse-factor} ROABP.
It is read once in the sense that in the corresponding 
ABP, any particular variable contributes to at most one edge 
on any path.

\subsection{Kronecker Map}
We will often use a weight function on the variables which separates 
a desired set of monomials. 
Let $w \colon {\mathbf x} \to \N$ be a weight function on the variables. 
Consider its natural extension to the set of all monomials 
$w \colon \Zp^n \to \N$ as follows: 
$w(\Pi_{i=1}^n x_i^{\gamma_i}) = \sum_{i=1}^n \gamma_i w(x_i)$, 
where $\gamma_i \in \Zp , \; \forall i \in [n]$. 

\begin{lemma}[Efficient Kronecker map \cite{Kro1882,Agr05}]
\label{lem:kronecker}
Let $\M$ be the set of all monomials in $n$ variables
$\mathbf{x} = \{x_1, x_2, \dots, x_n\}$ with maximum individual
degree $\degree$. Let $A$ be a set of pairs of monomials from 
$\M$.  
Then there exists a (constructible) set of $N$-many weight functions
$w \colon \mathbf{x} \to [1, \dots, N \log N]$, such that at 
least one of them separates all the pairs in $A$, i.e.\
for any $(m,m') \in A$, $w(m) \neq w(m')$, 
where $N := O(n \abs{A} \log(\degree+1))$.

\end{lemma}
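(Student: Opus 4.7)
The plan is the classical ``Kronecker substitution plus a prime'' argument. I would start by fixing the base-$(\degree+1)$ Kronecker encoding $\phi(\xe) := \sum_{i=1}^n e_i (\degree+1)^{i-1}$, which is injective on $\M$ because each $e_i \in [0,\degree]$ is literally a digit in base $\degree+1$. Consequently, for every pair $(m,m') \in A$ the integer $D(m,m') := \phi(m) - \phi(m')$ is nonzero and satisfies $\abs{D(m,m')} < (\degree+1)^n$. For each prime $p$ I would then define the weight function $w_p(x_i) := 1 + ((\degree+1)^{i-1} \bmod p)$, so that $w_p \colon \mathbf{x} \to [1,p]$. The key observation is that, up to a fixed total-degree correction (absorbed by a minor tweak such as replacing $w_p(x_i)$ by $p\cdot w_p(x_i) + i$, which only inflates $N$ by a constant factor), the integer $w_p(m) - w_p(m')$ is nonzero whenever $p$ does not divide a single fixed nonzero integer $\widetilde D(m,m')$ of magnitude at most $O((\degree+1)^n)$.

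Next I would form the product $\Pi := \prod_{(m,m') \in A} \widetilde D(m,m')$. It is a nonzero integer of magnitude at most $(\degree+1)^{O(n\abs{A})}$, and therefore has at most $O(n\abs{A}\log(\degree+1))$ distinct prime divisors. Choosing $N := c \cdot n\abs{A}\log(\degree+1)$ for a sufficiently large absolute constant $c$, the first $N$ primes cannot all divide $\Pi$, so at least one prime $p$ among them avoids every $\widetilde D(m,m')$; the corresponding $w_p$ then simultaneously separates every pair in $A$. By Chebyshev's bound, the $N$-th prime is $O(N \log N)$, so each candidate $w_p$ takes values in $[1, N\log N]$ as required, and the family $\{w_p\}_{p}$ is explicitly constructible in time $\poly(N)$.

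The main, and essentially only, obstacle is a cosmetic one: forcing the image into $[1, N\log N]$ rather than $[0, N\log N - 1]$ risks collapsing the separation property for pairs of monomials that have different total degrees, because the uniform $+1$ shift contributes $\abs{m} - \abs{m'}$ to the weight difference. I would handle this by the small weight tweak indicated above (or equivalently by adjoining one extra weight that orders monomials by total degree first) without disturbing the asymptotic bounds. Beyond that, the argument is just prime-factor counting of a known integer combined with Bertrand/Chebyshev, so no deeper ideas are required.
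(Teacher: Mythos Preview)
Your approach is essentially identical to the paper's: start from the base-$(\degree+1)$ Kronecker map, reduce it modulo small primes, and use a prime-counting argument (the paper phrases it as $\prod_{p\in\mathcal P} p > \prod_{(m,m')\in A}(W(m)-W(m'))$, which is the same thing). The one difference is self-inflicted: the paper simply takes $w_p(x_i):=(\degree+1)^{i-1}\bmod p$ \emph{without} the $+1$ shift, and then the key identity $w_p(m)-w_p(m')\equiv W(m)-W(m')\pmod p$ holds exactly, so $p\nmid W(m)-W(m')$ immediately gives $w_p(m)\neq w_p(m')$ as integers. Your $+1$ shift is what creates the total-degree correction you then have to patch; drop it and the ``cosmetic obstacle'' disappears (the stated range $[1,N\log N]$ versus $[0,N\log N]$ is just looseness in the lemma statement and is irrelevant for every application in the paper).
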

\begin{proof}
Since we want to separate the $n$-variate monomials with maximum 
individual degree $\degree$, 
we use the \naive Kronecker map $W \colon x_i \mapsto (\degree +1)^{i-1}$ 
for all $i \in [n]$. 
It can be easily seen that $W$ will give distinct weights to
any two monomials (with maximum individual degree $\degree$).
But, the weights given by $W$ are exponentially high. 

So, we take the weight function $W$ modulo $p$, for many small primes $p$. 
 Each prime $p$ leads to a different weight function. 
That is our set of candidate weight functions. 
We need to bound the number $N$ of primes 
that ensures that at least one of the weight functions 
separates all the monomial pairs in $A$. 
We choose the smallest $N$ primes, say $\mathcal{P}$ is the set. 
By the effective version of the Prime Number Theorem, 
the highest value in the set $\mathcal{P}$ is $N \log N$.

To bound the number $N$ of primes: We want a $p \in \mathcal{P}$ such that 
$\forall (m,m') \in A, \; W(m) - W(m') \not\equiv 0 \pmod{p}$.
 Which means, 
$$ \exists p \in \mathcal{P}, \;	
p \nmid \prod_{(m,m') \in A} \left( W(m) - W(m') \right).$$ 
In other words,
$$\prod_{p \in \mathcal{P}} p \nmid \prod_{(m,m') \in A} \left(W(m) - W(m')\right).$$ 
This can be ensured by setting 
$\prod_{p \in \mathcal{P}} p > \prod_{(m,m') \in A} \left(W(m) - W(m')\right)$.
There are $\abs{A}$ such monomial pairs and 
each $W(m) < n \degree (\degree+1)^{n -1}$. 
Also, $\prod_{p \in \mathcal{P}} p > 2^N$. Hence, $N = O(n \abs{A} \log(\degree+1))$ suffices.
\end{proof}


\section{Hitting set for ROABP: Theorem~\ref{thm:ROABPhs}}
\label{sec:ROABP}
Like \cite{ASS13} and \cite{FSS13}, we work with the vector polynomial.
I.e.\ for a polynomial computed by a width-$w$ ROABP, 
$C(\mathbf{x}) = S^{\top} (\prod_{i=1}^d D_i) T$, we see the product
$D := \prod_{i=1}^d D_i$ as a polynomial over the matrix algebra $\Mat_{w \times w}(\F)$.
We can write the polynomial $C(\mathbf{x})$ as the dot product
$R \cdot D$, where $R = ST^{\top}$.
The vector space spanned by the coefficients of $D(\mathbf{x})$ is called
the coefficient space of $D(\mathbf{x})$.
This space will have dimension at most $w^2$. 
We essentially try to construct a small set of vectors, by evaluating $D(\mathbf{x})$,
 which can span
the coefficient space of $D(\mathbf{x})$.  
Clearly, if $C \neq 0$ then the dot product of $R$ with at least one of these 
spanning vectors will be nonzero. 
And thus, we get a hitting set.

Unlike \cite{ASS13} and \cite{FSS13}, we directly work with the original 
polynomial $D(\x)$, instead of shifting it and breaking it into subcircuits. 
Our approach for finding the hitting set is to come up with a weight function
on the variables which can {\em isolate a basis}
for the coefficients of the polynomial $D(\x)$. 
This can be seen as a generalization of isolating a monomial for a polynomial in 
$\F[\mathbf{x}]$, which is a usual technique for PIT (e.g.\ sparse PIT \cite{KS01}).

\comment{
here we extract the rank of the 
coefficients to a small set of vectors. 
But unlike their approach, we directly work with the origninal polynomial 
instead of breaking it into subcircuits. 

First we present a general result about {\em rank extractors}, 
which are defined as follows.
\begin{definition}
An $\ell \times N$ matrix $\A$ is called a {\em rank extractor} for rank $k$ ($k \leq \ell$),
 if for any matrix $B \in \F^{N \times k}$, $\rank(B) = \rank(\A B)$.
\end{definition}

It is easy to see that no such rank extractor exists over the base field $\F$.
However, there are well known constructions over the extended field $\F(t)$ ($t$ is a formal variable here).
The following is one such construction \cite{ASS13, FSS13}. 

\begin{construction}
Let $A \in \F^{\ell \times N}$ be such that any $k$ of its columns 
are linearly independent. 
Let $D \in \F(t)^{N \times N}$ be a diagonal matrix such that $D_{ii} = t^{w_i}$, 
where $w_i \in \N$, for all $i \in [N]$.
If the integers $\{w_1, w_2, \dots, w_N\}$  are all distinct then 
$\A := AD$ is a rank extractor for rank $k$.
\end{construction}

Our plan is to use such a rank extractor to extract the rank of 
the coefficients of a polynomial $P({\bf x})$ over an algebra $\H_k$. 
In this case, $N$ would be the number of monomials in $P$, which
can be exponentially large. 
Then the weights $w_i$'s used in the above construction will have to
be exponentailly large, which is too costly.
We give an extractor with a relaxed criterion, but it only works when the
given set of vectors (the matrix $B$) satisfies certain properties. 
Instead of all distinct $w_i$'s, we require that there exists a weight function
 such that there is a {\em unique minimum weight basis} among the rows in $B$. 
In other words, there is a basis such that any row, other than the
the basis rows, is linearly dependent on the basis rows with strictly smaller weight than itself.
We call such a weight function as an {\em basis isolating weight function}.
The following lemma describes the criterion formally. 
Here, $B_i$ denotes the $i$-th row of the matrix $B$.

\begin{lemma}
\label{lem:rankExtractor}
For a given matrix $B \in \F^{N \times k}$, suppose there exists
a set of integers $\{w_1, w_2, \dots, w_N\}$ 
and a set of independent rows in $B$, indexed by $S \subseteq [N]$ ($k' := \abs{S} \leq k$)
such that for any $i \in [N] \setminus S$, 
$$ B_i \in \Span \{B_j \mid j \in S, \; w_j < w_i \}.$$
Then $\rank(B) = \rank(ADB)$, where
\begin{enumerate}
\item $A$ is an $k \times N$ matrix with  its columns indexed by the set $S$ being linearly independent and 
\item $D$ is an $N \times N$ diagonal matrix with $D_{ii} = t^{w_i}$.
\end{enumerate}
\end{lemma}

\begin{proof}
We are given that for any $i \in \o{S} := [N] \setminus S$, 
\begin{equation} 
B_i \in \Span \{B_j \mid j \in S, \; w_j < w_i \}.
\label{eq:lowWeightSpan}
\end{equation}

First, let us arrange the rows in $B$ in an increasing order according
to the weight function $w$.
The rows with the same weight can be arranged in an arbitrary order.
Accordingly, the columns of $A$ and the rows and columns of $D$ are also permuted. The matrix $D$ remains a diagonal matrix. 

Let the set $S$ be $\{s_1, s_2, \dots, s_{k'} \}$.
Consider a matrix $B_0 \in \F^{k' \times k}$ such that
its $i$-{th} row is $B_{s_i}$.
Then, the matrix $B$ can be written as the product $C B_0$,
where $C$ is an $N \times k'$ matrix with its $i$-th row being 
$(\gamma_1, \gamma_2, \dots, \gamma_{k'})$, if 
$B_i = \sum_{j=1}^{k'} \gamma_j B_{s_j}$.

Observe that the $s_i$-th row of $C$ is simply $e_i$, i.e.\ $1$ in the $i$-th column and $0$ in others. 
Further,
from Equation~\ref{eq:lowWeightSpan},
\begin{observation}
\label{obs:nonzero}
For any $i \in \o{S}$, $C_{ij} \neq 0$ only when $w_{s_j} < w_i$.
\end{observation}

We will actually show that the matrix $ADC$ is a full rank matrix.
This would immidiately imply that $\rank(ADCB_0) = \rank(B_0) = \rank(B)$.

\begin{observation}
\label{obs:colMin}
The first index, where the $i$-th column of the matrix $C$
has a nonzero entry, is $s_i$.
\end{observation}
\begin{proof}
As the rows of $C$ are arranged in an increasing order according to
the weight function $w$,
from Observation~\ref{obs:nonzero}, it is clear that 
 this index has to be $s_i$.
\end{proof}

Let $R$ denote the matrix product $ADC$. 
Let us view its $j$-th column $R_{\bullet j}$,
as a polynomial over vectors, i.e.\ as an element in $\F^{k}[t]$.
Let $\lc(R_{\bullet j})$ denote the coefficient of the lowest 
degree term in the polynomial $R_{\bullet j}$.
Let us define a new $k \times k'$ matrix $R_0$ whose
$j$-th column is given by $\lc(R_{\bullet j})$.

\begin{claim}
\label{claim:lc}
If the matrix $R_0$ is a full rank matrix then so is $R$. 
\end{claim}
\begin{proof}
If $R_0$ is full rank then there exists a set of $k'$ rows,
such that its restriction to these rows, say $R'_0$, has a nonzero determinant. 
Let $R'$ denote the restriction of $R$ to the same set of rows. 
It is easy to see that $\lc(\det(R')) = \det(R'_0)$. 
Hence, $\det(R') \neq 0$ and $R$ is a full rank matrix. 
\end{proof}

Now, we show that the matrix $R_0$ is full rank.
The $j$-th column of $R$ can be written as
$R_{\bullet j} = \sum_{i=1}^N A_{\bullet i} C_{ij} t^{w_i}$.
By Observation~\ref{obs:colMin}, the first nonzero entry in 
the column $C_{\bullet j}$ is $C_{s_j j} = 1$.
Moreover, by Observation~\ref{obs:nonzero}, if $C_{ij} \neq 0$,
for any $i \neq s_j$ then $w_i > w_{s_j}$.
Hence, we can see that $\lc(R_{\bullet j}) = A_{\bullet s_j}$.
Thus, the $j$-th column of $R_0$ is given by $A_{\bullet s_j}$.
As the columns of the matrix $A$, indexed by the set $S$, are linearly independent,
we get that all columns of $R_0$ are linearly independent. 
By Claim~\ref{claim:lc}, $R = ADC$ is full rank.
\end{proof}

Let $\H_k(\F)$ be a $k$-dimensional algebra over the field $\F$. 
Let ${\bf x} = \{x_1, x_2, \dots, x_n\}$ be a set of variables and
let $P({\bf x})$ be a polynomial in $\H_k(\F)[{\bf x}]$ with highest individual 
degree $\degree$. 
Now, we use the rank extractor in Lemma~\ref{lem:rankExtractor}
to design a
set of evaluations for $P({\bf x})$ such that
the evaluation vectors span the space of the coefficients in $P$ .
This is done assuming the existence of a weight function on the variables 
with certain properties. 

Let $\M$ denote the set of all monomials over the variable set ${\mathbf x}$
with highest individual degree $\degree$.
Let $c_m$ denote the coefficient of the monomial $m$ in the polynomial
$P$.

Let $w \colon {\mathbf x} \to \N$ be a weight function on the variables. 
Consider its natural extension to the set of all monomials 
$w \colon \Z^n \to \N$ as follows: 
$w(\Pi_{i=1}^n x_i^{\gamma_i}) = \sum_{i=1}^n \gamma_i w(x_i)$, 
where $\gamma_i \in \Zp , \; \forall i \in [n]$. 
For any monomial $m$, $w(m)$ will denote its weight according the
weight function $w$. 
Also, for a point ${\mathbf a} = (a_1, a_2, \dots, a_n) \in \F^n$,
let ${\mathbf a} \otimes t^w$ denote the point 
$(a_1 t^{w(x_1)}, a_2 t^{w(x_2)}, \dots, a_n t^{w(x_n)})$.

For a monomial $m =  \Pi_{i=1}^n x_i^{\gamma_i}$ and a point
${\mathbf a} = (a_1, a_2, \dots, a_n) \in \F^n$,
let $m({\mathbf a})$ denote the evaluation $\Pi_{i=1}^n a_i^{\gamma_i}$.

\begin{lemma}
\label{lem:evaluations}
For a given polynomial $P({\mathbf x}) \in \H_k[{\mathbf x}]$, if there exists 
a weight function $w \colon {\mathbf x} \to \N$ 
and a set of monomials $S \subseteq \M$ ($k' := \abs{S} \leq k$)
whose coefficients form a basis for the coefficient space of $P({\mathbf x})$, 
such that 
for any monomial $m \in \M \setminus S$, 
$$ c_m \in \Span \{c_{m'} \mid m' \in S, \; w(m') < w(m) \}.$$
Then there exists a set of $k$ points 
$\{ \boldsymbol{ \alpha_1}, \boldsymbol{ \alpha_2}, \dots, \boldsymbol{ \alpha_k} \}$ 
(constructible in polynomial time) each coming from $\F^n$ such that
$$\rank(\{P(\boldsymbol{ \alpha_1} \otimes t^w), P(\boldsymbol{ \alpha_2} \otimes t^w), 
\dots, P(\boldsymbol{ \alpha_k} \otimes t^w)\})
= \rank (B) $$
\end{lemma}
\begin{proof}
Let $B$ denote an $\M \times k$ matrix such that
 its row indexed by monomial $m$ contains the coefficient $c_m$.
It is easy to see that for any $i \in [k]$,
 the row vector $P(\boldsymbol{ \alpha_i} \otimes t^w)$ 
can be written as the matrix product $\A_i B$, 
where $\A_i$ is an $1 \times \M$ matrix such that its column
indexed by the monomial $m$,
contains $m(\boldsymbol{ \alpha_i} \otimes t^w) = m(\boldsymbol{\alpha_i}) t^{w(m)}$.

This matrix product can be further broken into $A_i D B$, where
$A_i$ is an $1 \times \M$ matrix such that its column
indexed by the monomial $m$,
 contains $m(\boldsymbol{ \alpha_i})$ and 
matrix $D$ is an $\M \times \M$ diagonal matrix with 
$D_{mm} = t^{w(m)}$, for any $m \in \M$.

Let us say $A$ is a $k \times \M$ matrix with its $i$-th row being
$A_i$, for all $i \in [k]$.
Then clearly, the matrix $ADB$ contains vector 
$P(\boldsymbol{ \alpha_i} \otimes t^w)$ in its $i$-th row, for all $i \in [k]$. 

We want to show that $\rank(ADB) = \rank(B)$, using Lemma~\ref{lem:rankExtractor}. 
All the requirements of the lemma are fulfilled except the condition that
the columns in $A$ indexed by the set $S$ are linearly independent.

Our next step is to show that we can find such $\boldsymbol{ \alpha_i}$'s in polynomial
 time which will ensure that these columns of $A$ are linearly independent.

Let $\mu \colon {\mathbf x} \to \N$ be a new weight function on the variables.
Consider its natural extension to the set of all monomials, and say
$\mu(m)$ represents the weight of the monomial $m$.
Let $\mu$ be such that for any two monomials $m, m' \in S$,
$\mu(m) \neq \mu({m'})$. 
Lemma~\ref{lem:kronecker} gives us the cost of constructing such a map.
Now, we take a new formal variable $z$ and construct $\boldsymbol{ \alpha_i}$'s
in $\F(z)^n$. 
For any $i \in [k]$, we define the $j$-th coordinate of $\boldsymbol{\alpha_i}$
as $z^{i \mu(x_j)}$, for all $j \in [n]$.
Clearly, for any $m \in \M$, $m(\boldsymbol{ \alpha_i}) = z^{i \mu(m)}$.
As all the weights $\mu(m)$'s are distinct for the monomials in $S$.
The matrix $A$, restricted to the columns indexed by $S$, is a 
Vandermonde matrix. 
Hence, its columns indexed by the set $S$ are linearly independent.

Now, we argue that we can substitute a field value for $z$ and
still achieve this independence.
There exists a set of $k'$ rows, such that 
restriction of the matrix $A$, to these rows and 
the columns in $S$, is a square matrix with nonzero
determinant. 
This determinant is polynomial in $z$ with degree being $\poly(n)$.
Hence, if we try polynomially many field values for $z$, then
for one of the values, the determinant will be nonzero. 
As mentioned earlier, this would imply $\rank(ADB) = \rank(B)$.
\end{proof}

}

We present our results for polynomials over arbitrary algebra.
Let $\A_k(\F)$ be a $k$-dimensional algebra over the field $\F$. 
Let $\mathbf{x} = \{x_1, x_2, \dots, x_n\}$ be a set of variables and
let $D(\mathbf{x})$ be a polynomial in $\A_k(\F)[\x]$ with highest individual 
degree $\degree$. 
Let $\M$ denote the set of all monomials over the variable set ${\mathbf x}$
with highest individual degree $\degree$.

Now, we will define a {basis isolating weight assignment} for a polynomial
$D \in \A_k(\F)[\mathbf{x}]$ which would lead to a hitting set for the polynomial
$C \in \F[\mathbf{x}]$, where $C = R \cdot D$, for some $R \in \A_k(\F)$.


\begin{definition}[Basis Isolating Weight Assignment]
A weight function $w \colon \mathbf{x} \to \N$ is called a basis isolating
weight assignment for a polynomial $D(\mathbf{x}) \in \A_k(\F)[\mathbf{x}]$ 
if there exists 
a set of monomials $S \subseteq \M$ ($k' := \abs{S} \leq k$)
whose coefficients form a basis for the coefficient space of $D({\mathbf x})$, 
such that 
\begin{itemize}
\item for any $m, m' \in S$, $w(m) \neq w(m')$ and
\item for any monomial $m \in \M \setminus S$, 
$$ \coeff_D(m) \in \Span \{ \coeff_D(m') \mid m' \in S, \; w(m') < w(m) \}.$$
\end{itemize}
\end{definition}

The above definition is equivalent to saying that there exists
a {\em unique minimum} weight basis (according to the weight function $w$)
among the coefficients of $D$, and also the basis monomials have distinct weights. 
We skip the easy proof for this equivalence, as we will not need it. 
Note that a weight assignment, which gives distinct weights to all the monomials,
is indeed a basis isolating weight assignment. 
But, it will involve exponentially large weights. To, find an efficient
weight assignment one must use some properties of the given circuit.
First, we show how such a weight assignment would lead to hitting set.
We will actually show that it isolates a monomial in $C(\x)$.

\begin{lemma}
\label{lem:basisIsolation}
Let $w \colon \mathbf{x} \to \N$ is a basis isolating weight assignment
for a polynomial $D(\mathbf{x}) \in \A_k(\F)[\mathbf{x}]$. 
And let $C = R \cdot D$ be a nonzero polynomial, for some $R \in \A_k(\F)$.
Then, after the substitution $x_i = t^{w(x_i)} $ for all $i \in [n]$,
the polynomial $C$ remains nonzero, where $t$ is an indeterminate.
\end{lemma}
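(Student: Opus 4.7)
\medskip
\noindent\textbf{Proof plan.} After the substitution $x_i \mapsto t^{w(x_i)}$, the polynomial $C$ becomes a univariate polynomial in $t$ whose $t^W$--coefficient is $\sum_{m\,:\,w(m)=W}\coeff_C(m)$, where $\coeff_C(m)=R\cdot\coeff_D(m)\in\F$ by linearity of the dot product. My goal is to exhibit a single value $W^\star$ for which this sum is nonzero.

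The plan is to locate the ``lowest-weight witness'' inside the isolated basis~$S$. First I would observe that, since the coefficients $\{\coeff_D(m')\mid m'\in S\}$ span the whole coefficient space of $D$, every $\coeff_D(m)$ is an $\F$-linear combination of them; applying $R\,\cdot\,$ on both sides shows that each $\coeff_C(m)$ is an $\F$-linear combination of $\{\coeff_C(m')\mid m'\in S\}$. Since $C\ne 0$, at least one $m\in\M$ has $\coeff_C(m)\ne 0$, and therefore the set
\[
S^\star \;:=\; \{\,m'\in S \;\mid\; \coeff_C(m') = R\cdot\coeff_D(m')\ne 0\,\}
\]
is nonempty. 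Let $m^\star\in S^\star$ minimize $w(m^\star)$, and set $W^\star := w(m^\star)$.

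Next I would compute the coefficient of $t^{W^\star}$ in $C(t^{w(x_1)},\dots,t^{w(x_n)})$, namely
\[
\sum_{m\,:\,w(m)=W^\star}\coeff_C(m),
\]
and show that every summand except the one for $m^\star$ vanishes. For any other monomial $m$ with $w(m)=W^\star$: if $m\in S$ then by the first bullet of the definition (distinct weights on $S$) we would need $m=m^\star$, contradiction; hence $m\in\M\setminus S$, and by the second bullet $\coeff_D(m)\in\Span\{\coeff_D(m')\mid m'\in S,\ w(m')<W^\star\}$. Dotting with $R$, $\coeff_C(m)$ is an $\F$-combination of $\{\coeff_C(m')\mid m'\in S,\ w(m')<W^\star\}$, and each of those is $0$ by the minimality of $W^\star$ in $S^\star$. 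So the sum collapses to $\coeff_C(m^\star)\ne 0$, proving the lemma.

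The argument is essentially just bookkeeping once the right $m^\star$ is picked; the only point that requires a moment's thought is why $S^\star$ is nonempty, which is where the spanning property of the basis gets used (not just the unique-minimum structure). I do not anticipate a genuine obstacle.
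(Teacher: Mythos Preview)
Your proof is correct and follows essentially the same approach as the paper: pick the minimum-weight $m^\star\in S$ with $R\cdot\coeff_D(m^\star)\ne 0$, and show the coefficient of $t^{w(m^\star)}$ in the substituted polynomial equals $R\cdot\coeff_D(m^\star)$ because every other monomial of that weight lies in $\M\setminus S$ and hence depends only on strictly lighter basis elements, all of which are killed by $R$. The only cosmetic difference is that for the nonemptiness of $S^\star$ you invoke the spanning property of the basis directly, whereas the paper derives the same conclusion from the second bullet of the definition; both arguments are equally valid.
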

\begin{proof}
Let $D_m \in \A_k(\F)$ denote the coefficient $\coeff_D(m)$.
It is easy to see that after the mentioned substitution, 
the new polynomial $C'(t)$ 
is equal to
$\sum_{m \in \M} (R \cdot D_m) t^{w(m)}$. 

Let us say that $S \subset \M$ is the set of monomials whose coefficients
form the isolated basis for $D$.
According to the definition of the basis isolating weight assignment,
for any monomial $m \in \M \setminus S$, 
\begin{equation}
\label{eq:span}
 D_m \in \Span \{D_{m'} \mid m' \in S, \; w(m') < w(m) \}.
\end{equation}

First, we claim that $\exists m' \in S$ such that $ R \cdot D_{m'} \neq 0$.
For the sake of contradiction, let us assume that 
$\forall m' \in S, \; R \cdot D_{m'} = 0$.
Taking the dot product with $R$ on both the sides of Equation~(\ref{eq:span}),
we get that for any monomial $m \in \M \setminus S$,
$$ R \cdot D_m \in \Span \{R \cdot D_{m'} \mid m' \in S, \; w(m') < w(m) \}.$$
Hence, $R \cdot D_m = 0, \; \forall m \in \M$. 
That means $C(\x) =0$, which contradicts our assumption.

Now, let $m^*$ be the minimum weight monomial in $S$ whose coefficient
gives a nonzero dot product with $R$, i.e.\ 
$m^* = \displaystyle\argmin_{m \in S} \{w(m) \mid R \cdot D_m \neq 0\}$.
There is a unique such monomial in $S$ because all the monomials in 
$S$ have distinct weights.

We claim that $\coeff_{C'}(t^{w(m^*)}) \neq 0$
and hence $C'(t) \neq 0$.
To see this, consider any monomial $m$, other than $m^*$, with $w(m) = w(m^*)$.
The monomial $m$ has to be in the set $\M \setminus S$, as the monomials in $S$ 
have distinct weights.
From Equation~(\ref{eq:span}),
$$
 D_m \in \Span \{D_{m'} \mid m' \in S, \; w(m') < w(m^*) \}.
$$
Taking dot product with $R$ on both the sides we get,
$$
 R \cdot D_m \in \Span \{ R \cdot D_{m'} \mid m' \in S, \; w(m') < w(m^*) \}.
$$
But, by the choice of $m^*$, $R \cdot D_{m'} = 0$,
for any $m' \in S$ with $w(m') < w(m^*)$.
Hence, $R \cdot D_m = 0$, for any $m \neq m^*$ with $w(m) = w(m^*)$.

So, the coefficient $\coeff_{C'}(t^{w(m^*)})$ can be written as
$$ \sum_{\substack {m \in \M \\ w(m) = w(m^*)} } R \cdot D_m = 
R \cdot D_{m^*},
 $$
which, we know, is nonzero.
\end{proof}

To construct a hitting set for $C$, we can try many possible field values
of $t$. The number of such values needed will be the degree of $C$
after the substitution, which is at most $(n \degree \max_{i} w(x_i))$.
Hence, the cost of the hitting set is dominated by the 
{\em cost of the weight function}, i.e.\ the maximum weight given to any variable
and the time taken to construct the weight function.

In the next step, we show that such a basis isolating weight assignment
can indeed be found for a sparse-factor ROABP,
but with cost quasi-polynomial in the input size. 
First, we make the following observation that
it suffices that the coefficients of the monomials not in $S$,
linearly depend on any coefficients with strictly smaller weight, not
necessarily coming from $S$. 

\begin{observation}
\label{obs:smallerWeight}
If, for a polynomial $D \in \A_k(\F)[\x]$, there exists 
a weight function $w \colon \mathbf{x} \to \N$ 
and a set of monomials $S \subseteq \M$ ($k' := \abs{S} \leq k$)
such that 
for any monomial $m \in \M \setminus S$, 
$$ \coeff_D(m) \in \Span \{ \coeff_D(m') \mid m' \in \M , \; w(m') < w({m}) \}.$$
then
we can also conclude that
for any monomial $m \in \M \setminus S$, 
$$ \coeff_D(m) \in \Span \{ \coeff_D(m') \mid m' \in S, \; w(m') < w({m}) \}.$$
\end{observation}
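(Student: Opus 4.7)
My plan is to prove this by induction on the weight of the monomial $m \in \M \setminus S$. Enumerate the (finitely many) distinct weights that appear among monomials in $\M$ in increasing order, $w_1 < w_2 < \cdots < w_N$, and prove by strong induction on $j$ the statement: for every $m \in \M \setminus S$ with $w(m) = w_j$, the coefficient $\coeff_D(m)$ lies in $\Span\{\coeff_D(m') \mid m' \in S,\ w(m') < w(m)\}$.

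For the base case, pick the smallest $j$ for which some $m \in \M \setminus S$ has weight $w_j$. By hypothesis, $\coeff_D(m) \in \Span\{\coeff_D(m') \mid m' \in \M,\ w(m') < w(m)\}$. But by the minimality of $j$, every monomial $m' \in \M$ with $w(m') < w(m)$ must already lie in $S$, so the span is already the one we want. For the inductive step, assume the claim holds for all monomials in $\M \setminus S$ of weight strictly less than $w(m)$. The hypothesis gives
\[
\coeff_D(m) \;\in\; \Span\{\coeff_D(m') \mid m' \in \M,\ w(m') < w(m)\}.
\]
Split the generating set into $m' \in S$ and $m' \in \M \setminus S$. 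For each $m' \in \M \setminus S$ appearing in the spanning set, we have $w(m') < w(m)$, so the inductive hypothesis replaces $\coeff_D(m')$ by a linear combination of $\coeff_D(m'')$ with $m'' \in S$ and $w(m'') < w(m') < w(m)$. Substituting these relations into the original combination, we express $\coeff_D(m)$ purely in terms of coefficients from $S$ of weight strictly less than $w(m)$, as required.

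There is essentially no obstacle here; the argument is a routine strong induction that uses nothing beyond the definition of span and the hypothesis. The only thing to be careful about is the strict inequality $w(m') < w(m)$: the inductive substitution preserves strict inequality because $w(m'') < w(m') < w(m)$ implies $w(m'') < w(m)$, so the final combination still satisfies the desired weight bound. This is why the induction is on weight rather than on some ambient ordering of monomials.
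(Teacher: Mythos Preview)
Your proof is correct and follows essentially the same approach as the paper: the paper argues informally that one can ``keep replacing'' any coefficient indexed by $\M\setminus S$ on the right-hand side with coefficients of strictly smaller weight until only indices in $S$ remain, and your strong induction on the weight is exactly the formalization of that descent argument. There is no meaningful difference in strategy.
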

\begin{proof}
We are given that for any monomial $m \in \o{S} := \M \setminus S$, 
$$ \coeff_D(m) \in \Span \{ \coeff_D(m') \mid m' \in \M, \; w(m') < w({m}) \}. $$
Any coefficient $\coeff_D({m'})$ on the right hand side of this equation, 
which corresponds to an index in $\o{S}$, 
can be replaced with some other coefficients, which have further
smaller weight.   
If we keep doing this, we will be left 
with the coefficients only corresponding to the set $S$, 
because in each step we are getting smaller and smaller weight coefficients. 
\end{proof}

In our construction of the weight function, 
we will create the set $\o{S} := \M \setminus S$ incrementally, i.e.\
in each step we will
make more coefficients depend on strictly smaller weight
coefficients.
Finally, we will be left with only $k'$ 
(the rank of the coefficient space of $D$) many coefficients in $S$.
We present the result for an arbitrary $k$-dimensional algebra $\A_k(\F)$,
 instead of just the matrix algebra.

\begin{lemma}[Weight Construction]
\label{lem:weightFunction}
Let $\mathbf{x}$ be given by a union of $d$ disjoint sets of variables
${\mathbf x}_1 \sqcup {\mathbf x}_2 \sqcup \dotsm \sqcup {\mathbf x}_d$, 
with $\abs{\mathbf{x}} = n$. 
Let $D({\mathbf x}) = P_1({\mathbf x}_1) P_2({\mathbf x}_2) \dotsm P_d({\mathbf x}_d)$,
where $P_i \in \A_k(\F)[{\mathbf x}_i]$ is a sparsity-$s$, 
individual degree-$\degree$ polynomial, for all $i \in [d]$. 
Then, we can construct a basis isolating weight assignment 
for $D(\mathbf{x})$ 
 with the cost being $(\poly(k, s, n, \degree))^{\log d}$.
\end{lemma}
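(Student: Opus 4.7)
The plan is induction on $d$ with divide-and-conquer that combines basis-isolating weights for the two halves of the product using an auxiliary Kronecker-based separator. The key efficiency claim is that the combination grows the maximum weight only by a $\poly(n,k,\delta)$ factor per recursion level, so that over $\log d$ levels we get $\poly^{\log d}$ cost. For the base case $d=1$, the polynomial $P_1$ has at most $s$ monomials; applying Lemma~\ref{lem:kronecker} to the $\binom{s}{2}$ monomial pairs gives $\poly(n,s,\delta)$ candidate weight functions of max value $\poly(n,s,\delta)$, at least one of which assigns distinct weights to every monomial of $P_1$ and is therefore trivially basis-isolating (a greedy basis extraction in weight order works).

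For the inductive step, recursively obtain weight assignments $w_L, w_R$ on $\mathbf{x}_L := \mathbf{x}_1 \sqcup \cdots \sqcup \mathbf{x}_{d/2}$ and $\mathbf{x}_R := \mathbf{x}_{d/2+1} \sqcup \cdots \sqcup \mathbf{x}_d$ that isolate bases $S_L, S_R$ for $D_L := \prod_{i \leq d/2} P_i$ and $D_R := \prod_{i > d/2} P_i$, each of size at most $k$. Use Lemma~\ref{lem:kronecker} again to produce a weight $w_*$ on $\mathbf{x}$ that separates all $\binom{k^2}{2}$ pairs of monomials in $S_L \times S_R$ (cost $\poly(n,k,\delta)$). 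Let $\hat{w}$ be the disjoint union of $w_L$ and $w_R$ (so $\hat{w}(x) = w_L(x)$ for $x \in \mathbf{x}_L$ and $\hat{w}(x) = w_R(x)$ for $x \in \mathbf{x}_R$), pick $M$ strictly larger than the maximum monomial weight under $w_*$, and set
\[
 w(x) \,:=\, M \cdot \hat{w}(x) + w_*(x).
\]

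To verify that $w$ is basis-isolating for $D$: since $\mathbf{x}_L, \mathbf{x}_R$ are disjoint we have $\coeff_D(m_L m_R) = \coeff_{D_L}(m_L) \cdot \coeff_{D_R}(m_R)$ in $\A_k(\F)$, so the at most $k^2$ products indexed by $S_L \times S_R$ span $\coeff(D)$. For any monomial $m = m_L m_R$ with $m_L \notin S_L$ or $m_R \notin S_R$, expanding the relevant coefficient via the inductive isolation property of $w_L$ (and/or $w_R$) writes $\coeff_D(m)$ as a combination of $\coeff_D(s_L s_R)$ for $(s_L, s_R) \in S_L \times S_R$ satisfying $\hat{w}(s_L s_R) < \hat{w}(m_L m_R)$; the dominance of $M$ over the possible contribution of $w_*$ then gives $w(s_L s_R) < w(m_L m_R)$. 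Within $S_L \times S_R$, two pairs with equal $\hat{w}$-weight are separated by $w_*$, so $w$ gives all of $S_L \times S_R$ distinct weights. A greedy extraction on $S_L \times S_R$ in order of $w$-weight therefore yields an isolated basis for $\coeff(D)$, invoking Observation~\ref{obs:smallerWeight} to reduce from dependence-on-$S_L\times S_R$ to dependence-on-the-extracted-basis.

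For cost, since $w_*$ has cost $\poly(n,k,\delta)$ and $M = \poly(n,k,\delta)$, the recurrence is $W(d) \leq \poly(n,k,\delta) \cdot W(d/2)$, unfolding to $W(d) = \poly(n,k,s,\delta)^{\log d}$. The candidate-count blowup is controlled by applying Lemma~\ref{lem:kronecker} once per recursion level to the union of the $S_L \times S_R$ pair-sets over all $d/2^{i+1}$ subproblems at that level (still polynomially many pairs), so the total number of candidate weights over $\log d$ levels remains $\poly^{\log d}$. The main obstacle avoided is a squaring recurrence: a naive combination such as $w|_{\mathbf{x}_R} = C \cdot w_R$ with $C = \Omega(W(d/2))$ would force $W(d) = \Omega(W(d/2)^2)$, blowing up to $\poly^{d}$. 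The key observation that saves us is that after combining only the polynomially small target set $S_L \times S_R$ (rather than the exponentially large full monomial space) needs to be separated, so $w_*$ can be built with cost $\poly(n,k,\delta)$ independent of $W(d/2)$.
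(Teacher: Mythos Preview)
Your proposal is correct and takes essentially the same approach as the paper. The paper presents the construction as a bottom-up iteration over $\log d + 1$ levels---at level~$r$ it pairs up consecutive subproducts, uses a Kronecker weight $w_r$ to separate the at most $k^2$ basis-product monomials in each pair, and combines as $w = w_0 B^{\log d} + w_1 B^{\log d -1} + \cdots + w_{\log d}$ so that the finer levels have higher precedence---whereas you phrase the same idea as a top-down divide-and-conquer; unrolling your recursion $w = M\hat w + w_*$ recovers exactly the paper's precedence-ordered sum, with your base-case weight playing the role of $w_0$ and your $w_*$ at recursion depth $j$ playing the role of $w_j$.
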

\begin{proof}
In our construction, 
the final weight function $w$ will be a combination of $(\log d +1)$-many
different weight functions, say $(w_0, w_1, \dots, w_{\log d})$. 
Let us say,
 their precedence is decreasing from left to right, 
i.e.\ $w_0$ has the highest
precedence and $w_{\log d}$ has the lowest precedence. 
As mentioned earlier, we will build the set $\o{S}$ 
(the set of monomials whose coefficients are 
in the span of strictly smaller weight coefficients than themselves)
 incrementally
in $(\log d +1)$ steps, using weight function $w_i$ in the $(i+1)$-th step. 

{\em Iteration $0$}: Let $\M_{0,1}, \M_{0,2}, \dots, \M_{0,d}$ be 
the sets of monomials and
 $\C_{0,1}, \C_{0,2}, \dots, \C_{0,d}$ be the sets of coefficients in the polynomials 
$P_1, P_2, \dots, P_d$ respectively. 

\begin{notation}
The product of two sets of monomials $\M_1$ and $\M_2$ is defined as
$\M_1 \times \M_2 = \{ m_1 m_2 \mid m_1 \in \C_1, \; m_2 \in \C_2 \} $.
The product of any two sets of coefficients $\C_1$ and $\C_2$ is defined as
$\C_1 \times \C_2 = \{ c_1 c_2 \mid c_1 \in \C_1, \; c_2 \in \C_2 \} $.
\end{notation}

The crucial property of the polynomial $D$ is that the set of coeffcients 
in $D$, $\C_0$, is just the product $\C_{0,1} \times \C_{0,2} \times \dotsm \times \C_{0,d}$.
Similary, the set of all the monomials in $D$, say $\M_0$, can
be viewed as the product $\M_{0,1} \times \M_{0,2} \times \dotsm \times \M_{0,d}$.
Let $m := m_{a} m_{a+1} \dotsm m_{b}$ be a monomial, where $1 \leq a \leq b \leq d$
and $m_{j} \in \M_{0,j}$, for $a \leq j \leq b$.
Then $D_m$ will denote the coefficient 
$ \coeff_{P_a}(m_a) \coeff_{P_{a+1}}(m_{a+1}) \dotsm \coeff_{P_b}(m_b)$.


Let us fix $w_0 \colon \x \to \N$ to be a weight function on the variables
which gives distinct weights to
 all the $s$ monomials in $\M_{0,i}$, for each $i \in [d]$.
As $w_0$ assigns distinct weights to these monomials, 
so does the weight function $w$.

For each $P_i$ we do the following: 
\begin{itemize}
\item arrange the coefficients in $\C_{0,i}$ 
in increasing order of their weight according to $w$ 
(or equivalently, according to $w_0$),
\item choose a maximal set of linearly independent coefficients, in a greedy manner, 
going from lower weights to higher weights. 
\end{itemize}
The fact that the weight functions $w_1, w_2, \dots, w_{\log d}$
are not defined yet does not matter because
$w_0$ has the highest precedence. 
The total order given to the monomials in $\M_{0,i}$ by $w_0$
is the same as given by $w$, irrespective of what the functions 
$w_1, \dots, w_{\log d}$ are chosen to be.

This gives us a basis for the
coefficients of $P_i$, say $\C'_{0,i}$. 
Let $\M'_{0,i}$ denote the monomials in $P_i$ corresponding to these
basis coefficients. 
From the construction of the basis, it follows that
for any monomial $m \in \M_{0,i} \setminus \M'_{0,i}$ ,
\begin{equation}
\label{eq:dependsOnLower}
D_m \in \Span \{ D_{m'} \mid m' \in \M'_{0,i}, \; w(m') < w(m) \} .
\end{equation}

Now, consider any monomial $m \in \M$ which is not present in   
the set $\M'_0 := \M'_{0,1} \times \M'_{0,2} \times \dotsm \times \M'_{0,d}$. 
Let $m = m_1 m_2 \dotsm m_d$, where $m_i \in \M_{0,i}$ for all $i \in [d]$.
We know that for at least one $j \in [d]$, $m_j \in \M_{0,j} \setminus \M'_{0,j}$.
Then using Equation~(\ref{eq:dependsOnLower}) we can write the following about
$D_m = D_{m_1} D_{m_2} \dotsm D_{m_d}$,
$$ D_m\in \Span
\{ D_{m_1} \dotsm D_{m_{j-1}} D_{m'_j} D_{m_{j+1}} \dotsm D_{m_d}
\mid m'_j \in \M'_{0,j}, \; w({m'_j}) < w({m_j})
 \}.
 $$
This holds, because the algebra product is bilinear. 
Equivalently, for any monomial $m \in \M_0 \setminus \M'_0$,
$$ D_m \in \Span \{ D_{m'} \mid m' \in \M_0, \; w(m') < w({m}) \} . $$
This is true because 
$$w({m_1}) + \dotsm + w({m'_j}) + \dotsm + w({m_d}) 
< w({m_1}) + \dotsm + w({m_j}) + \dotsm + w({m_d}) =
w({m})
.$$

Hence, all the monomials in $\M_0 \setminus \M'_0$ can be put
into $\o{S}$, i.e.\ their corresponding coefficients depend
on strictly smaller weight coefficients.

{\em Iteration $1$}: Now, let us consider monomials in the set 
$\M'_0 = \M'_{0,1} \times \M'_{0,2} \times \dotsm  \times \M'_{0,d}$.
Let the corresponding set of coefficients be 
$\C'_0 := \C'_{0,1} \times \C'_{0,2} \times \dotsm \times \C'_{0,d}$.
Since, the underlying algebra $\A_k(\F)$ has dimension at most $k$
and the coefficients
in $\C'_{0,i}$ form a basis for $\C_{0,i}$,
$\abs{\M'_{0,i}} \leq k$, for all $i \in [d]$. 
In the above product, let us make $d/2$ disjoint pairs of consecutive terms,
and for each pair, multiply the two terms in it. 
Putting it formally, 
let us define $\C_{1,j}$ to be the product $\C'_{0,2j-1} \times \C'_{0,2j}$
and similarly $\M_{1,j} := \M'_{0,2j-1} \times \M'_{0,2j}$, for all $j \in [d/2]$ 
(if $d$ is odd, we can make it even by multiplying the identity element of
$\A_k(\F)$ in the end). 
Now, let $\C_1 := \C'_0 = \C_{1,1} \times \C_{1,2} \times \dotsm \times \C_{1,d_1}$, 
and $\M_1 := \M'_0 = \M_{1,1} \times \M_{1,2} \times \dotsm  \times \M_{1,d_1}$,
where $d_1 := d/2$. 
For any $i \in [d_1]$, $\M_{1,i}$ has at most $k^2$ monomials. 

Now, we fix the weight function $w_1 \colon \mathbf{x} \to \N$ such that
it gives distinct weights to all the monomials in $\M_{1,i}$, for each $i \in [d_1]$. 
As $w_1$ separates these monomials, so does the weight function $w$.
Now, we repeat the same procedure of constructing a 
basis in a greedy manner for $\C_{1,i}$ according to the weight
function $w$, for each $i \in [d_1]$.
Let the basis coefficients for $\C_{1,i}$
be $\C'_{1,i}$ and corresponding monomials be $\M'_{1,i}$. 

As argued before, any coefficient in $\C_1$, 
which is outside the set 
$\C'_1 := \C'_{1,1} \times \C'_{1,2} \times \dotsm \times \C'_{1,d_1}$,
is in the span of strictly smaller weight (than itself) coefficients. 
So, we can also put the corresponding monomials 
$\M_1 \setminus \M'_1$ in $\o{S}$
where $\M'_1 := \M'_{1,1} \times \M'_{1,2} \times \dotsm  \times \M'_{1,d_1}$.

{\em Iteration $r$}: We keep repeating the same procedure for $(\log d +1)$-many rounds.
After round $r$, say the set of monomials we are left with is given by
the product
$\M'_{r-1} = \M'_{r-1,1} \times \M'_{r-1,2} \times \dotsm \times \M'_{r-1,d_{r-1}}$,
where $\M_{r-1,i}$ has at most $k$ monomials, for each $i \in [d_{r-1}]$ 
and $d_{r-1} = d / 2^{r-1}$.
In the above product, 
we make $d_{r-1}/2$ disjoint pairs of consecutive terms, 
and multiply the two terms in each pair.
Let us say we get $\M_r := \M'_{r-1}= 
\M_{r,1} \times \M_{r,2} \times \dotsm \times \M_{r,d_r}$,
where $d_r = d_{r-1}/2$.
Say, the corresponding set of coefficients is given by
$\C_r = \C_{r,1} \times \C_{r,2} \times \dotsm \times \C_{r,d_r}$.
Note that $\abs{\M_{r,i}} \leq k^2$, for each $i \in [d_r]$.

We fix the weight function $w_r$ such that it gives distinct weights to
all the monomials in the set $\M_{r,i}$, for each $i \in [d_r]$.
We once again mention that fixing of $w_r$ does not affect 
the greedy basis constructed in earlier rounds 
and hence the monomials which were put in the set $\o{S}$,
because $w_r$ has less precendence than any $w_{r'}$, for $r' < r$.

For each $\C_{r,i}$, we construct a basis in a greedy manner 
going from lower weight to higher weight 
(according to the weight function $w$).
Let this set of basis coefficients be $\C'_{r,i}$ and corresponding
monomials be $\M'_{r,i}$, for each $i \in [d_r]$.
Let $\C'_{r}  := \C'_{r,1} \times \C'_{r,2} \times \dotsm \times \C'_{r,d_r}$
and $\M'_{r} := \M'_{r,1} \times \M'_{r,2} \times \dotsm \times \M'_{r,d_r}$.
Arguing similar as before we can say that
 each coefficient in $\C_{r,i} \setminus \C'_{r,i}$ 
is in the span of strictly smaller weight coefficients (from $\C'_{r,i}$) than itself.
Hence, the same can be said about any coefficient in the set  
$\C_r \setminus \C'_r$. 
So, all the monomials in the set
$\M_r \setminus \M'_r$
can be put into $\o{S}$.
Now, we are left with monomials 
$\M'_r = \M'_{r,1} \times \M'_{r,2} \times \dotsm \times \M'_{r,d_r}$
for the next round. 

{\em Iteration $\log d$}: As in each round, the number of terms in the product gets halved,
after $\log d$ rounds we will be left with just one term,
i.e.\ $\M_{\log d} = \M'_{\log d -1, 1} \M'_{\log d -1 ,2} = \M_{\log d ,1}$. 
Now, we will fix the function $w_{\log d}$ which separates
all the monomials in $\M_{\log d, 1}$.
By arguments similar as above, we will be finally left with
at most $k'$ monomials in $S$, which will all have distinct weights. 
It is clear that for every monomial in $\o{S}$, its coefficient will be in 
the span of strictly smaller weight coefficients than itself.

Now, let us look at the cost of this weight function. 
In the first round, $w_0$ needs to separate at most 
$O(d s^2)$ many pairs of monomials. 
For each $1 \leq r \leq \log d$, $w_r$ 
needs to separate at most $O(dk^4)$ many pairs of monomials. 
From Lemma~\ref{lem:kronecker}, to construct $w_r$, for any $0 \leq r \leq \log d$,
one needs to try $\poly(k, s, n, \degree)$-many weight functions
each having highest weight at most $\poly(k, s, n, \degree)$ 
(as $d$ is bounded by $n$). 
To get the correct combination of the weight functions $(w_0, w_1, \dots, w_{\log d})$
we need to try all possible combinations of these polynomially many choices 
for each $w_r$.
Thus, we have to try $(\poly(k, s, n, \degree))^{\log d}$ many combinations. 

To combine these weight functions we can choose a large enough number $B$ 
(greater than the highest weight a monomial can get in any of the weight functions),
and define $w := w_0 B^{\log d} + w_1 B^{\log d -1} + \dotsm + w_{\log d}$. 
The choice of $B$ ensures that the different weight functions cannot interfere with each 
other, and they also get the desired precedence order. 

The highest weight a monomial can get from the weight function $w$ 
would be $(\poly(k, s, n, \degree))^{\log d}$. Thus, the cost of $w$
remains $(\poly(k, s, n, \degree))^{\log d}$.

\end{proof}

Combining Lemma~\ref{lem:weightFunction} with 
 Observation~\ref{obs:smallerWeight} and Lemma~\ref{lem:basisIsolation}, 
we can get a hitting set for ROABP.

\begin{reptheorem}{thm:ROABPhs}
Let $C(\x)$ be an $n$-variate polynomial computed by a width-$w$, $s$-sparse-factor
ROABP, with individual degree bound $\degree$. Then there is a $\poly(w, s, n, \degree)^{\log n}$-time
hitting set for $C(\x)$.
\end{reptheorem}
\begin{proof}
As mentioned earlier, $C(\x)$ can be written as $R \cdot D(\x)$,
for some $R \in \Mat_{w \times w}(\F)$, where $D(\x) \in \Mat_{w \times w}(\F)[\x]$.
The underlying matrix algebra $\Mat_{w \times w}(\F)$ has dimension $w^2$.
The hitting set size will be dominated by the cost of 
the weight function constructed in Lemma~\ref{lem:weightFunction}.
As the parameter $d$ in Lemma~\ref{lem:weightFunction}, i.e.\ the number of layers in
the ROABP, is bounded by $n$,
the hitting set size will be $\poly(w, s, n, \degree)^{\log n}$.
\end{proof}

\section{Sum of constantly many set-multilinear circuits: Theorem~\ref{thm:c-setmultihs}}
\label{sec:sumOfSetMulti}
To find a hitting set for a sum of constantly many set-multilinear circuits, we 
build some tools. The first is depth-3 multilinear circuits with `small distance'.
As it turns out, a multilinear polynomial computed by a depth-$3$ $\di$-distance
circuit (top fan-in $k$) can also be computed by a width-$O(kn^{\di})$ ROABP (Lemma~\ref{lem:dDistROABP}).
Thus, we get a $\poly(nk)^{\di \log n}$-time hitting set for this class,
 from Theorem~\ref{thm:ROABPhs}.
Next, we use a general result about finding a hitting set for a class
{\em $m$-base-sets-$\mathsf{C}$}, 
if a hitting set is known for class $\mathsf{C}$ (Lemma~\ref{lem:baseSets}).
A polynomial is in $m$-base-sets-$\mathsf{C}$,
if there exists a partition of the variables into $m$ base sets
such that restricted to each base set (treat other variables as field constants), 
the polynomial is in class $\mathsf{C}$.
Finally, we show that a sum of constantly many set-multilinear circuits 
falls into the class $m$-base-sets-$\di$-distance, for $m \di = o(n)$.
Thus, we get Theorem~\ref{thm:c-setmultihs}.

\subsection{$\di$-distance circuits}\label{subsec:deltaDistance}

Recall that each product gate in a depth-$3$ multilinear circuit 
induces a partition on the variables. Let these partitions be $ \lis{\P}{,}{k} $.

\begin{definition}[Distance for a partition sequence]
\label{def:distance}
Let $\P_1, \P_2, \dots, \P_k \in \Part ([n])$ be the $k$ partitions of the variables $\{x_1, x_2, \dots, x_n\}$. Then $\dist(\P_1, \P_2, \dots, \P_k) = \di$ if 
 $ \forall i \in \{2,3,\dots , k\}, \forall\text{colors } Y_1 \in \P_i, \; \exists Y_2, Y_3, \dots, Y_{\di'} \in \P_i \; (\di' \le \di)$ such that $\lis{Y}{\cup}{\di'}$ equals a union of some colors in $ \P_j, \forall j\in [i-1]$.
\end{definition}

In other words, in every partition $\P_i$,
each color $Y_1$ has a set of colors called `friendly neighborhood',
$\{\lis{Y}{,}{\di'}\}$,
consisting of at most $\di$ colors,
which is exactly partitioned in the `upper partitions'.
We call $\P_i$, an {\em upper} partition relative to $\P_j$ 
(and $\P_j$, a {\em lower} partition relative to $\P_i$),
if $i < j$.
For a color $X_a$ of a partition $\P_j$, let $\nbd_j (X_a)$ denote its friendly neighborhood.
The friendly neighborhood $\nbd_j (x_i)$ of a variable $x_i$ in a partition $\P_j$ is defined as $\nbd_j (\Color_j(x_i))$, where $\Color_j(x_i)$ is the color in the partition $\P_j$ that contains the variable $x_i$.



\begin{definition}[$ \di $-distance circuits]
A multilinear depth-$3$ circuit $ C $
has $ \di $-distance if its product gates can be ordered to correspond to a partition sequence $(\P_1,\dots,\P_k)$ with $\dist (\lis{\P}{,}{k}) \le \di$.

\end{definition}


Every depth-$3$ multilinear circuit is thus an $n$-distance circuit. 
A circuit with a partition sequence, 
where the partition $\P_i$ is a refinement of the partition $\P_{i+1}, 
\forall i \in [k-1]$, exactly characterizes a $1$-distance circuit. 
All depth-$3$ multilinear circuits have distance between $1$ and $n$.
Also observe that the circuits with $1$-distance strictly subsume set-multilinear circuits.
E.g.\ a circuit, whose product gates
induce two different partitions
$\P_1 = \{\{1\}, \{2\}, \dots, \{n\}\}$ and 
$\P_2 = \{ \{1,2\} , \{3,4\} , \dots, \{n-1,n\} \}$, 
has $1$-distance but is not set-multilinear.

\noindent \textbf{Friendly neighborhoods -} To get a better picture, we ask: Given a color $X_a$ of a partition $\P_j$ in a circuit $D(\mathbf{x})$,
how do we find its friendly neighborhood $\nbd_j(X_a)$?
Consider a graph $G_j$ which has the colors of the partitions $\{ \P_1, \P_2, \dots, \P_j \}$, 
as its vertices. 
For all $i \in [j-1]$, there is an edge between the colors $X \in \P_{i}$ and 
$Y \in \P_{j}$ if they share at least one variable.
Observe that if any two colors $X_a$ and $X_b$ of partition $\P_j$
are reachable from each other in $G_j$,
then, they should be in the same neighborhood.
As reachability is an equivalence relation, {\em the neighborhoods are equivalence classes of colors}. 

Moreover, observe that for any two variables $x_a$ and $x_b$,
if their respective colors in partition $\P_j$, $\Color_j(x_a)$ and $\Color_j(x_b)$
are reachable from each other in $G_j$
then their respective colors in partition $\P_{j+1}$, 
$\Color_{j+1}(x_a)$ and $\Color_{j+1}(x_b)$
are also reachable from each other in $G_{j+1}$. Hence,
\begin{observation}
\label{obs:zeroBelow}
If at some partition,
the variables $x_a$ and $x_b$ are in the same neighborhood,
then, they will be in the same neighborhood in all of the lower partitions.
I.e.\ $\nbd_j(x_a) = \nbd_j(x_b) \implies \nbd_i(x_a) = \nbd_i(x_b), \forall i \ge j$.
\end{observation}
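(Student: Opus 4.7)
The plan is to exploit the graph $G_i$ already introduced, together with the observation (established in the text just before) that two colors of $\P_i$ lie in the same friendly neighborhood iff they are reachable in $G_i$. Translating the hypothesis $\nbd_j(x_a) = \nbd_j(x_b)$ into graph terms gives a path in $G_j$ joining $\Color_j(x_a)$ and $\Color_j(x_b)$, and I will show by induction on $i \ge j$ that such a path can be lifted to a path in $G_i$ joining $\Color_i(x_a)$ and $\Color_i(x_b)$.

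The base case $i=j$ is the hypothesis. For the inductive step, suppose we have a path
\[
\Color_i(x_a) = X_0 - Y_1 - X_1 - Y_2 - \dots - Y_m - X_m = \Color_i(x_b)
\]
in $G_i$, where $X_r \in \P_i$, $Y_r \in \P_{\ell_r}$ with $\ell_r < i$, and each edge $U-V$ is witnessed by some shared variable $z_{UV} \in U \cap V$. To build a path in $G_{i+1}$, observe that every vertex on the above path lies in some $\P_\ell$ with $\ell \le i < i+1$, so all these vertices remain in $G_{i+1}$ and are eligible to carry edges to $\P_{i+1}$-colors. For each shared-variable witness $z = z_{UV}$, the variable $z$ belongs to a unique color $\Color_{i+1}(z) \in \P_{i+1}$, producing edges $U - \Color_{i+1}(z)$ and $V - \Color_{i+1}(z)$ in $G_{i+1}$.

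Replacing each original edge $U-V$ by the detour $U - \Color_{i+1}(z_{UV}) - V$ gives a walk in $G_{i+1}$ that visits, in order, the same sequence $X_0, Y_1, X_1, \dots, X_m$ (now with $\P_{i+1}$-vertices spliced in between). Since $x_a \in X_0$ and $x_b \in X_m$, the edges $\Color_{i+1}(x_a) - X_0$ and $X_m - \Color_{i+1}(x_b)$ in $G_{i+1}$ extend this walk to a path from $\Color_{i+1}(x_a)$ to $\Color_{i+1}(x_b)$. Thus these two colors sit in the same connected component of $G_{i+1}$, giving $\nbd_{i+1}(x_a) = \nbd_{i+1}(x_b)$ and closing the induction.

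The whole argument is essentially graph-theoretic bookkeeping; the only subtlety I expect to watch for is that the edge set of $G_{i+1}$ differs from that of $G_i$ — only edges incident to $\P_{i+1}$-vertices are present — so the lifted path cannot simply reuse the edges of the original path but must route each of its steps through a fresh $\P_{i+1}$-vertex, using exactly the same shared-variable witnesses as before.
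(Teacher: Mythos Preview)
Your proof is correct and follows essentially the same approach as the paper. The paper's argument is the single sentence preceding the observation (``reachable in $G_j$ implies reachable in $G_{j+1}$''), left without further detail; you have simply spelled out the path-lifting that justifies this claim, correctly noting that edges of $G_{i+1}$ must touch $\P_{i+1}$ and routing each old edge through the appropriate $\P_{i+1}$-color via the shared-variable witness.
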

In other words, if we define a new sequence of partitions,
such that the $j$-th partition has $x_a$ and $x_b$ in the same color if $\nbd_j(x_a) = \nbd_j(x_b)$, 
then the upper partitions are {\em refinements} of the lower partitions. 


%
\subsubsection{Reduction to ROABP}
Now, we show that any polynomial computed by a low-distance multilinear 
depth-$3$ circuit can also be computed by a small size ROABP. 
First we make the following observation about sparse polynomials. 

\begin{observation}
\label{obs:sparse}
Any multilinear polynomial $C(\mathbf{x})$ with sparsity $s$
can be computed by a width-$s$ ROABP, in any variable order. 
\end{observation}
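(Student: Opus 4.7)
The plan is to give a direct construction of the ROABP, built in a "one track per monomial" manner, which trivially works for any variable order. Write $C(\mathbf{x}) = \sum_{j=1}^s c_j m_j$, where each $m_j$ is a distinct multilinear monomial in $\mathbf{x}$ and $c_j \in \F^{\times}$. Fix an arbitrary permutation $\pi$ on $[n]$ which will serve as the variable order; the ROABP will have $n$ layers with the $i$-th layer reading $x_{\pi(i)}$.

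Next, I will define the graph explicitly. The source $u$ and sink $t$ are single vertices, and each intermediate layer $V_i$ has exactly $s$ vertices $\{v_{i,1},\dots,v_{i,s}\}$, so the width is $s$. The edges are arranged in $s$ non-interacting "tracks": from $u$ there is an edge of weight $c_j$ to $v_{1,j}$, from $v_{n,j}$ there is an edge of weight $1$ to $t$, and between layers the only edge from $v_{i-1,j}$ goes to $v_{i,j}$, whose weight is $x_{\pi(i)}$ if $x_{\pi(i)}$ appears in $m_j$ and $1$ otherwise. Equivalently, the layer matrix $D_i$ is the diagonal matrix in $\F[x_{\pi(i)}]^{s \times s}$ whose $(j,j)$ entry records whether variable $x_{\pi(i)}$ contributes to monomial $m_j$; in particular, each $D_i$ has entries from $\F[x_{\pi(i)}]$, fulfilling the ROABP requirement.

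Finally, I verify correctness. Since the graph splits into $s$ vertex-disjoint tracks, the only $u$-to-$t$ paths are $p_1,\dots,p_s$, where $p_j$ passes through $v_{1,j}, v_{2,j}, \ldots, v_{n,j}$. By construction, $W(p_j) = c_j \prod_{i:\, x_{\pi(i)} \in m_j} x_{\pi(i)} = c_j m_j$, using multilinearity so that no variable appears twice on a track. Summing over $j$ gives $C(\mathbf{x})$, as desired. There is no real obstacle here; the only thing to be careful about is that the construction is independent of $\pi$, which holds because the track structure is defined purely in terms of "does variable $x_{\pi(i)}$ belong to monomial $m_j$?" and imposes no ordering constraint among the layers.
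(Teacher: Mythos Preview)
Your construction is essentially identical to the paper's: both build the ROABP as $s$ parallel, non-interacting tracks, one per monomial, with the coefficient $c_j$ on the source edge and each internal edge labeled $x_{\pi(i)}$ or $1$ according to whether that variable occurs in $m_j$. The only quibble is a harmless off-by-one in your explicit graph description: with intermediate layers $V_1,\dots,V_n$ and the inter-layer edge $v_{i-1,j}\to v_{i,j}$ carrying $x_{\pi(i)}$, the variable $x_{\pi(1)}$ is never read (the paper uses $n+1$ intermediate vertex layers for this reason); your matrix description with $n$ diagonal matrices $D_1,\dots,D_n$ and your verification paragraph are both correct and make the intent clear.
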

\begin{proof}
Let $\M$ denote the set of monomials in $C$, and 
let $C_m$ denote $\coeff_C(m)$.
Consider an ABP with 
$n+1$ layers of vertices $V_1, V_2, \dots, V_{n+1}$ each having
$s$ vertices (one for each monomial in $\M$)
 together with a start vertex $v_0$ and an end 
vertex $v_{n+2}$.
Let $v_{i,m}$ denote the $m$-th vertex of the layer $V_i$,
for any $i \in [n+1]$ and any $m \in \M$.

The edge labels in the ABP are given as follows:
For all $m \in \M$,
\begin{itemize} 
\item The edge $(v_0, v_{1,m})$ is labelled by $C_m$,
\item The edge $(v_{n+1, m}, v_{n+2})$ is labelled by $1$,
\item For all $i \in [n]$, 
the edge $(v_{i,m}, v_{i+1,m})$ is labelled by $x_i$ if the monomial
$m$ contains $x_i$, otherwise by $1$.
\end{itemize}
All other edges get labelled by $0$.
Clearly, the ABP constructed computes the polynomial $P(\mathbf{x})$ 
and it is an ROABP.

Also, note that this construction can be done with any desired variable order.
\end{proof}

Now, consider a depth-$3$ $\di$-distance multilinear polynomial 
$P = \sum_{i = 1}^k a_i Q_i$, where each $Q_i = \prod_{j=1}^{n_i} \ell_{ij}$ 
is a product of linear polynomials. 
We will construct an ROABP for each $Q_i$. 
We can combine these ROABPs to construct a single ROABP 
if they all have the same variable order. 
To achieve this we use the {\em refinement} property 
described above (from Observation~\ref{obs:zeroBelow}).

\begin{lemma}
Let $P = \sum_{i = 1}^k a_i Q_i$ be a polynomial computed
by a $\di$-distance circuit. Then we can make a width-$O(n^{\di})$ ROABP 
for each $Q_i$, in the same variable order.
\end{lemma}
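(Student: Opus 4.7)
The point of the hypothesis is that the friendly-neighborhood structure, together with Observation~\ref{obs:zeroBelow}, gives a \emph{common} coarsening of all partitions $\P_i$ into a refinement chain. I would begin by making this precise. For each $j\in [k]$ define $\tilde{\P}_j\in\Part([n])$ to be the partition whose colors are the unions of colors of $\P_j$ lying in the same friendly neighborhood, i.e.\ $x_a$ and $x_b$ are in the same color of $\tilde{\P}_j$ iff $\nbd_j(x_a)=\nbd_j(x_b)$. By Observation~\ref{obs:zeroBelow}, $\tilde{\P}_1 \preceq \tilde{\P}_2 \preceq \cdots \preceq \tilde{\P}_k$ is a refinement chain, so the set family $\{\text{colors of } \tilde{\P}_j : j\in[k]\}$ is laminar. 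Hence there exists a linear order $\sigma$ on $\{x_1,\dots,x_n\}$ in which every color of every $\tilde{\P}_j$ appears as a contiguous block. This is the single order I will use to build all ROABPs.

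Next, fix an $i\in[k]$ and write $Q_i = \prod_{j=1}^{n_i}\ell_{ij}$, where each $\ell_{ij}$ is a linear form in the variables of one color of $\P_i$. Group the factors of $Q_i$ according to which friendly neighborhood of $\P_i$ their color belongs to. This expresses
\[
Q_i \;=\; \prod_{N \in \tilde{\P}_i} Q_{i,N}, \qquad Q_{i,N} \;:=\; \prod_{\ell_{ij}\text{ on a color }\subseteq N} \ell_{ij},
\]
where the product runs over the neighborhoods $N$ (i.e.\ the colors of $\tilde{\P}_i$), and distinct $N$'s use disjoint variable sets. By definition of $\di$-distance each $N$ contains at most $\di$ colors of $\P_i$, so $Q_{i,N}$ is a product of at most $\di$ linear forms in disjoint colors. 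Its sparsity is therefore at most $(|N|+1)^{\di} \le (n+1)^{\di} = O(n^{\di})$.

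Now I apply Observation~\ref{obs:sparse}: a multilinear polynomial of sparsity $s$ admits a width-$s$ ROABP in \emph{any} variable order. Hence each $Q_{i,N}$ has a width-$O(n^{\di})$ ROABP in the order $\sigma$ restricted to the variables of $N$. Since in $\sigma$ each neighborhood $N\in\tilde{\P}_i$ occupies a contiguous block (because $N$ is a color of $\tilde{\P}_i$), the neighborhood ROABPs can be concatenated in the $\sigma$-order induced on the blocks: identify the sink of one block's ROABP with the source of the next. The result is an ROABP in the order $\sigma$ that computes $Q_i$ and has width equal to the maximum of the individual widths, namely $O(n^{\di})$. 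Since $\sigma$ did not depend on $i$, this yields the required width-$O(n^{\di})$ ROABPs for all the $Q_i$'s in a common variable order.

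The only nontrivial step is the first: asserting that a common order $\sigma$ exists. This is immediate once one sees that the family $\{\tilde{\P}_j\}_{j\in[k]}$ is a refinement chain, which is exactly what Observation~\ref{obs:zeroBelow} provides; everything else is bookkeeping and a direct appeal to Observation~\ref{obs:sparse}.
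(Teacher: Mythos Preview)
Your proof is correct and follows essentially the same approach as the paper: define the neighborhood partitions $\tilde{\P}_j$, use Observation~\ref{obs:zeroBelow} to get a refinement chain, find a single variable order compatible with all of them, group each $Q_i$'s linear factors by neighborhood, bound the sparsity of each group by $O(n^{\di})$, apply Observation~\ref{obs:sparse}, and concatenate. The only cosmetic difference is that the paper builds the common order $<^*$ by an explicit top-down iterative refinement (starting from $\tilde{\P}_k$ and refining down to $\tilde{\P}_1$), whereas you invoke the equivalent fact that the union of colors across a refinement chain is a laminar family and hence admits a linear order in which every member is an interval; both arguments yield the same order and the rest is identical.
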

\begin{proof}
Each $Q_i$ is a product of linear forms in disjoint set of variables, 
say $Q_i = \prod_{j=1}^{n_i} \ell_{ij}$.
Let the partition induced on the variable set, by the product $Q_i$,
be $\P_i$, for all $i \in [k]$.
Without loss of generality let the partition sequence 
$(\P_1, \P_2, \dots, \P_k)$ have distance $\di$.
For each $i \in [k]$, let us define a new partition 
$\P'_i$, such that the union of colors in each neighborhood of $\P_i$
forms a color of $\P'_i$.
This is a valid definition, as neighborhoods are equivalence classes of
colors. 
From Observation~\ref{obs:zeroBelow}, the partition $\P'_i$
is a refinement of partition $\P'_j$ for any $i<j$.

For a partition $\P$ of the variable set $\mathbf{x}$, 
an ordering on its colors
$(c_1 < c_2 < \dotsm < c_r)$ naturally induces a partial ordering 
on the variables, i.e.\ for any $x_i \in c_{j}$ and $x_{i'} \in c_{j'}$,
$c_j < c_{j'} \implies x_i < x_{i'}$.
The variables in the same color do not have any relation.

Let us say, a variable (partial) order
$(<^*)$ {\em respects} 
a partition $\P$ with colors $\{c_1, c_2, \dots, c_r\}$,
if there exists an ordering of the colors $(c_{j_1} < c_{j_2} < \dots < c_{j_r})$,
such that its induced partial order $(<)$ on the variables can be extended to $<^*$.
We claim that there exists a variable order $(<^*)$
which {\em respects} partition $\P'_i$, for all $i \in [k]$.

We build this variable order $(<^*)$ iteratively.
We start with $\P'_k$. 
We give an arbitrary ordering to the colors in $\P'_k$, 
say $(c_{k,1} < c_{k,2} < \dots < c_{k,r_k})$, which induces
a partial order $(<_k)$ on the variables. 
For any $k > i \geq 1$, let us define a partial order $(<_i)$
inductively as follows:
Let $(<_{i+1})$ be a partial order on the variables induced
by an ordering on the colors of $\P'_{i+1}$.
As mentioned earlier, the colors of $\P'_i$ are just further partitions
of the colors of $\P'_{i+1}$. 
Hence, we can construct an ordering on the colors of $\P'_i$,
such that the induced partial order $(<_i)$ is an extension of $(<_{i+1})$.
To achieve that, we do the following: For each color $c$ in $\P'_{i+1}$,
fix an arbitrary ordering among those colors of $\P'_i$, whose union forms $c$.

Clearly, the partial order $(<_1)$ defined in such a way respects $\P'_i$
for all $i \in [k]$.
We further fix an arbitrary ordering among variables belonging to the same color
in $\P'_1$.
Thus, we get a total order $(<^*)$, which is an extension of $<_1$ and hence respects
$\P'_i$ for all $i \in [k]$.

Now, we construct an ROABP for each $Q_i$ in the variable order $<^*$.
First, we multiply out the linear forms which belong to the same neighborhood
in each $Q_i$. 
That is, we write $Q_i$ as the product $\prod_{j=1}^{r_i} Q_{ij}$,
where $r_i$ is the number of neighborhoods in $\P_i$ 
(number of colors in $\P'_i$) and
each $Q_{ij}$ is the product of linear forms (colors) 
which belong to the same neighborhood in $\P_i$. 
As, the partition sequence has distance $\di$, the neighborhoods
have at most $\di$ colors.
So, the degree of each $Q_{ij}$ is bounded by $\di$ and hence
the sparsity is bounded by $O(n^{\di})$.
By Observation~\ref{obs:sparse}, we can construct a width-$O(n^{\di})$ 
ROABP for $Q_{ij}$ in the variable order given by $<^*$.

Let $c_{ij}$ denote the color of $\P'_i$ corresponding to $Q_{ij}$.
As the order $<^*$ respects $\P'_i$, it gives an order on its colors,
say $c_{ij_1} < c_{ij_2} < \dots < c_{ij_{r_i}}$.
Now, we arrange the ROABPs for $Q_{ij}$'s in the order
$Q_{ij_1} Q_{ij_2} \dots Q_{ij_{r_i}}$, while identifying
the end vertex of $Q_{i j_{a}}$ with the start vertex of
$Q_{i j_{a+1}}$, for all $a \in [r_i-1]$.
Clearly the ROABP thus constructed computes the polynomial
$Q_i$ and has variable order $<^*$.

\end{proof}
 
Once we have ROABPs for the polynomials $Q_i$'s in the same
variable order, let us make a new start node and connect it with 
the start node of the ROABP for $Q_i$ with label $a_i$,
for all $i \in [k]$. Also, let us make a new end node and connect
it with the end node of the ROABP for $Q_i$ with label $1$,
for all $i \in [k]$. Clearly, the ROABP thus constructed computes
the polynomial $P = \sum_{i = 1}^k a_i Q_i$ and has width $O(kn^{\di})$.
Thus, we can write

\begin{lemma}[$\di$-distance to ROABP]
\label{lem:dDistROABP}
An $n$-variate polynomial computed by a depth-$3$, $\di$-distance circuit 
with top fan-in $k$ has a width-$O(kn^{\di})$
ROABP.
\end{lemma}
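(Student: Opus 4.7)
The plan is to invoke the preceding lemma, which already produces a width-$O(n^{\di})$ ROABP $\mathcal{B}_i$ for each $Q_i$ in one common variable order $<^*$, and then sum these $k$ ROABPs in parallel. Concretely, I would introduce a fresh source node $u$ and a fresh sink node $t$, connect $u$ to the source of $\mathcal{B}_i$ by an edge of weight $a_i$ for each $i \in [k]$, and connect the sink of $\mathcal{B}_i$ to $t$ by an edge of weight $1$. Because the $\mathcal{B}_i$'s share the variable order $<^*$, their layers can be stacked side by side: the $j$-th internal layer of the combined graph is the disjoint union of the $j$-th layers of the $\mathcal{B}_i$'s, and edges between consecutive combined layers still carry weights in the single variable dictated by $<^*$ at that layer. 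So the combined graph is a bona fide ROABP.

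The polynomial computed by this ROABP is $\sum_{p \in \paths(u,t)} W(p) = \sum_{i=1}^k a_i Q_i = P$, since every $u$-to-$t$ path passes through a unique $\mathcal{B}_i$. Each of the $k$ stacked constituents contributes width $O(n^{\di})$, so the width of the combined ROABP is $k \cdot O(n^{\di}) = O(k n^{\di})$, as required.

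Given the preceding lemma, there is no real obstacle left here; all the difficulty has been paid for already in engineering a common variable order. It is worth recording where that difficulty was resolved: one coarsens each induced partition $\P_i$ by merging colors within each friendly neighborhood, and then appeals to Observation~\ref{obs:zeroBelow} to turn the coarsened sequence into a refinement chain, which admits a single linear extension $<^*$. Once such a $<^*$ is available, sequential composition of ROABPs (for the neighborhoods inside one $Q_i$) preserves width and parallel composition of ROABPs (across the $k$ product gates) adds widths; together these rules deliver exactly the $O(kn^{\di})$ bound claimed in the lemma.
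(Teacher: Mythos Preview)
Your proposal is correct and matches the paper's argument essentially verbatim: the paper likewise invokes the preceding lemma to get width-$O(n^{\di})$ ROABPs for each $Q_i$ in a common variable order, then introduces a new start node (with edges of weight $a_i$) and a new end node (with edges of weight $1$) to stack the $k$ ROABPs in parallel, yielding width $O(kn^{\di})$. Your additional remarks recapitulating how the common order $<^*$ was engineered are accurate summaries of the preceding lemma's proof.
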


Hence, from Theorem~\ref{thm:ROABPhs} we get,
\begin{theorem}[$\di$-distance Hitting Set]
\label{thm:dDistancehs}
Let $C(\mathbf{x})$ be a depth-$3$, $\di$-distance, 
$n$-variate multilinear circuit with top fan-in $k$. Then
there is a $(nk)^{O(\di \log n)}$-time hitting-set for $C(\mathbf{x})$. 
\end{theorem}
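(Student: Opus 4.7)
The statement is a direct corollary of the two preceding results, so the plan is simply to chain them together and verify that the parameters multiply out correctly.

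First, I would apply Lemma~\ref{lem:dDistROABP} to the given circuit $C(\mathbf{x})$. Since $C$ is a depth-$3$, $\di$-distance, multilinear circuit with top fan-in $k$ on $n$ variables, that lemma produces an ROABP computing $C$ of width $w = O(kn^{\di})$. The lemma does not specify a particular variable order, but this is irrelevant because the hitting-set result we are about to invoke handles the unknown-order case.

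Next, I would invoke Theorem~\ref{thm:ROABPhs} on this ROABP. Because $C$ is multilinear, every variable has individual degree $\degree = 1$, so the sparsity parameter $s$ of each factor is $O(1)$ (each layer's univariate polynomial is of the form $a + bx$). The theorem yields a hitting-set of size $\poly(n, w, \degree)^{\log n}$. Substituting $w = O(kn^{\di})$ and $\degree = 1$, this becomes $\poly(n, kn^{\di})^{\log n}$, which simplifies to $(nk)^{O(\di \log n)}$ as desired.

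There is essentially no obstacle here, since the hard technical work has already been done in Lemma~\ref{lem:dDistROABP} (constructing a common variable order using the refinement property of the $\P'_i$'s and combining sparse ROABPs for each friendly neighborhood) and in Theorem~\ref{thm:ROABPhs} (the basis isolating weight assignment). The only thing to check carefully is the arithmetic: that the exponent $\di$ in the width bound becomes the exponent $\di$ in the final $(nk)^{O(\di \log n)}$ bound, which it does because $\log(kn^{\di}) = O(\di \log(nk))$ and this enters the exponent of $n^{O(\log n)}$ from the ROABP hitting-set construction. So the proof reduces to a single line citing both results.
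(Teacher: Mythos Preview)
Your proposal is correct and matches the paper's approach exactly: the paper states Theorem~\ref{thm:dDistancehs} immediately after Lemma~\ref{lem:dDistROABP} with nothing more than the phrase ``Hence, from Theorem~\ref{thm:ROABPhs} we get,'' so the entire proof is precisely the two-step chaining you describe. Your parameter check (multilinear $\Rightarrow \degree=1$, width $O(kn^{\di})$, giving $\poly(n,kn^{\di})^{\log n}=(nk)^{O(\di\log n)}$) is the only content, and it is correct.
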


\subsection{Base sets with $\di$-distance}
\label{sec:baseSets}
In this section we describe our second tool towards finding
a hitting set for sum of constantly many set-multilinear polynomials.
We further generalize the class 
of polynomials, for which we can give an efficient test, 
beyond low-distance. 
Basically, it is enough to have low-distance ``projections". 

\begin{definition}
A multilinear depth-$3$ circuit $C(\mathbf{x})$ is said to have 
$m$-base-sets-$\di$-distance
if there is a partition of the 
variable set $\mathbf{x}$
into base sets $\{\lis{\mathbf{x}}{,}{m}\}$ such that 
for any $i \in [m]$, 
restriction of $C$ on the $i$-{th} base set 
(i.e.\ other variables are considered as field constants), 
 has $\di$-distance. 
\end{definition}

We will show that there is an efficient hitting set for this class of polynomials.
In fact, we can show a general easy result for a polynomial whose
restriction on one base set falls into a class $\mathsf{C}$,
for which a hitting set is already known. 

\begin{lemma}[Hybrid Argument]
\label{lem:baseSets}
Let $\Hit$ be the hitting set for a class of (n-variate) polynomials $\mathsf{C}$.
Let $\mathbf{x}$ be a union of $m$ disjoint sets of variables 
$\mathbf{x}_1 \sqcup \mathbf{x}_2 \sqcup \dotsm \sqcup \mathbf{x}_m $, 
called base sets, each with size at most $n$.
Let $C(\mathbf{x})$ be a polynomial such that its restriction to the
base set $\mathbf{x}_i$ (i.e.\ the other variables are considered as field constants), 
is in class $\mathsf{C}$, for all $i \in [m]$.
Then there is a hitting set for $C(\mathbf{x})$ of size ${\abs{\Hit}}^m$ 
(with the knowledge of the base sets).
\end{lemma}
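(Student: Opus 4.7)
The plan is to show that the Cartesian product $H := \underbrace{\Hit \times \Hit \times \dotsm \times \Hit}_{m \text{ copies}}$, where the $i$-th copy is interpreted as assigning values to the base set $\mathbf{x}_i$, is a hitting set for $C(\mathbf{x})$. This is a hybrid argument: given a nonzero $C$, I build a good evaluation point $(\alpha_1,\dots,\alpha_m)\in H$ one coordinate-block at a time, maintaining at every stage the invariant that $C$ is still nonzero after the partial substitution.

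More concretely, I would proceed by induction on $j$ from $0$ up to $m$, preserving the invariant that the polynomial
\[ C_j(\mathbf{x}_{j+1},\dots,\mathbf{x}_m) := C(\alpha_1,\dots,\alpha_j,\mathbf{x}_{j+1},\dots,\mathbf{x}_m) \]
is nonzero. The base case $C_0=C$ is the hypothesis. For the inductive step, I have a nonzero $C_{j-1}$ and must pick $\alpha_j\in\Hit$ such that $C_j$ remains nonzero. The natural move is to first pick field values $\beta_{j+1},\dots,\beta_m$ for the later base sets so that the polynomial $P_j(\mathbf{x}_j):=C(\alpha_1,\dots,\alpha_{j-1},\mathbf{x}_j,\beta_{j+1},\dots,\beta_m)$ is a nonzero polynomial in $\mathbf{x}_j$ alone. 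Such $\beta$'s exist because $C_{j-1}$ is nonzero: expand $C_{j-1}$ in monomials in $\mathbf{x}_j$, take the coefficient of any monomial with a nonzero coefficient (a nonzero polynomial in $\mathbf{x}_{j+1},\dots,\mathbf{x}_m$), and use e.g.\ Schwartz–Zippel (or pass to a field extension) to find a point where that coefficient does not vanish. Now by the hypothesis on $C$, the restricted polynomial $P_j$ lies in the class $\mathsf{C}$, so the hitting set $\Hit$ must contain some $\alpha_j$ with $P_j(\alpha_j)\neq 0$. Since $C_j$ specialized at $(\beta_{j+1},\dots,\beta_m)$ equals $P_j(\alpha_j)\neq 0$, the polynomial $C_j$ itself is nonzero, completing the inductive step. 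After $m$ steps I have $(\alpha_1,\dots,\alpha_m)\in H$ with $C(\alpha_1,\dots,\alpha_m)\neq 0$, giving $|H|=|\Hit|^m$.

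The only subtle point I expect is the interpretation of the assumption "$\mathsf{C}$ has a hitting set $\Hit$" when applied to $P_j$: one must know that $P_j$, which is $C$ restricted to the base set $\mathbf{x}_j$ with the \emph{specific} field constants $\alpha_1,\dots,\alpha_{j-1},\beta_{j+1},\dots,\beta_m$, is itself in $\mathsf{C}$. This is exactly what the hypothesis on $C$ is saying, so the obstacle is cosmetic rather than substantive; it just needs to be invoked carefully. A minor matter is that $|\mathbf{x}_j|$ may be strictly less than $n$, but we can pad by dummy variables or simply use the fact that a hitting set for $n$-variate $\mathsf{C}$ restricts to a hitting set for the smaller variable set (again assuming mild closure properties of $\mathsf{C}$, which are met by all classes we apply this to). There is no cleverness beyond the hybrid itself, and the $|\Hit|^m$ bound is immediate from counting the choices.
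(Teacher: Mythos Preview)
Your proof is correct and follows essentially the same hybrid/induction approach as the paper: build $(\alpha_1,\dots,\alpha_m)\in\Hit^m$ one block at a time, maintaining that the partially substituted polynomial is nonzero. The only difference is cosmetic: in the inductive step the paper simply asserts that $C'$ ``restricted to $\mathbf{x}_{i+1}$'' is in $\mathsf{C}$ and therefore $\Hit$ hits it, leaving implicit the fact that the remaining variables $\mathbf{x}_{i+2},\dots,\mathbf{x}_m$ are being treated as field constants; you make this explicit by first choosing field values $\beta_{j+1},\dots,\beta_m$ so that $P_j(\mathbf{x}_j)$ is a genuine nonzero polynomial in $\F[\mathbf{x}_j]$ before invoking the hitting set. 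Your version is arguably the more careful write-up of the same step. The dummy-variable padding you mention is exactly what the paper does as well.
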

\begin{proof}
Let us assume that the set $\mathbf{x}_i$ has cardinality $n$, for all $i \in [m]$. 
If not, then we can introduce dummy variables. 
Now, we claim that if $C(\mathbf{x}) \neq 0$ then there exists $m$ points 
$\mathbf{h}_1, \mathbf{h}_2, \dots, \mathbf{h}_m \in \Hit$, such that 
$C(\mathbf{x}_1 =  \mathbf{h}_1, \mathbf{x}_2 =  \mathbf{h}_2, \mathbf{x}_m =  \mathbf{h}_m) \neq 0$.

We prove the claim inductively.

\emph{Base Case:} The polynomial $C(\mathbf{x}_1, \mathbf{x}_2, \dots, \mathbf{x}_m) \neq 0$.
It follows from the assumption.

\emph{Induction Hypothesis:} There exists points $\mathbf{h}_1, \mathbf{h}_2, \dots, \mathbf{h}_i \in \Hit$ 
such that the partially evaluated polynomial 
$C'(\mathbf{x_{i+1}}, \dots, \mathbf{x}_m) := 
C(\mathbf{x}_1 =  \mathbf{h}_1, \dots, \mathbf{x}_i =  \mathbf{h}_i, \mathbf{x}_{i+1}, \dots, \mathbf{x}_m) \neq 0$.

\emph{Induction Step:} We show that there exists $\mathbf{h_{i+1}} \in \Hit$ such that 
the polynomial $C'(\mathbf{x}_{i+1}=\mathbf{h}_{i+1}, \mathbf{x}_{i+2}, \dots, \mathbf{x}_m) \neq 0$.

The polynomial $C'$ is nothing but the polynomial 
$C$ evaluated at $\mathbf{x}_1, \dots, \mathbf{x}_i$.
Hence, the polynomial $C'$ restricted to 
the set $\mathbf{x}_{i+1}$, is also in the class $\mathsf{C}$.
So, there must exist a point $\mathbf{h}_{i+1} \in \Hit$ 
such that $C'(\mathbf{x}_{i+1} = \mathbf{h}_{i+1}) \neq 0$.

Thus, the claim is true.
Now, to construct a hitting set for $C$, one needs to substitute the set $\Hit$
for each base set $\mathbf{x}_i$, i.e.\ the cartesian product 
$\Hit \times \Hit \times \dots \times \Hit$ ($m$ times).
Hence, we get a hitting set of size ${\abs{\Hit}}^m$.

\end{proof}

Note that, in the above proof the knowledge of the base sets is crucial.
This lemma, together with Theorem~\ref{thm:dDistancehs}, gives us the following:

\begin{theorem}[$m$-base-sets-$\di$-distance PIT]
\label{thm:baseSetsHS}
If $C(\mathbf{x})$ is a depth-$3$ multilinear circuit, 
with top fan-in $k$, 
having $m$ base sets (known) with $\di$-distance, 
then there is a $(nk)^{O(m \di \log n)}$-time hitting-set for $C$. 
\end{theorem}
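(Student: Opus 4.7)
The plan is to directly compose the two tools developed just before the theorem: the $\di$-distance hitting set of Theorem~\ref{thm:dDistancehs} will play the role of the inner ingredient, and Lemma~\ref{lem:baseSets} (the hybrid argument) will lift it to the base-set setting. Concretely, I would set $\mathsf{C}$ in Lemma~\ref{lem:baseSets} to be the class of depth-$3$ multilinear circuits on a single variable block (of size $\le n$) with top fan-in at most $k$ and distance at most $\di$; Theorem~\ref{thm:dDistancehs} provides a hitting set $\Hit$ for $\mathsf{C}$ of size $(nk)^{O(\di \log n)}$, constructible in the same time.

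The only thing to verify is that the hypothesis of Lemma~\ref{lem:baseSets} is satisfied, namely that for each base set $\mathbf{x}_i$ the partial restriction $C|_{\mathbf{x}_i}$, obtained by substituting field constants for the variables outside $\mathbf{x}_i$, lies in $\mathsf{C}$. This is essentially the definition of $m$-base-sets-$\di$-distance: fixing variables to constants preserves depth-$3$ structure, preserves multilinearity, cannot increase the top fan-in (it stays $\le k$), and by hypothesis the restriction has $\di$-distance. So $C|_{\mathbf{x}_i}\in\mathsf{C}$ for every $i\in[m]$, and Lemma~\ref{lem:baseSets} applies directly (the base sets are \emph{known}, which is exactly the theorem's assumption and is crucial—otherwise one would have to enumerate candidate partitions).

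Applying Lemma~\ref{lem:baseSets} yields a hitting set for $C(\mathbf{x})$ of size $\abs{\Hit}^m = (nk)^{O(m\di \log n)}$, namely the Cartesian product of $m$ copies of $\Hit$ spread over the base sets. The construction time is dominated by enumerating this product and equals $(nk)^{O(m\di \log n)}$, matching the bound claimed in the theorem.

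There is no real technical obstacle here: the proof is a one-line composition of Theorem~\ref{thm:dDistancehs} with Lemma~\ref{lem:baseSets}. The only mild point to be careful about is that substituting constants for ``outside'' variables in a depth-$3$ multilinear circuit genuinely leaves the top fan-in, multilinearity, and the $\di$-distance property intact on $\mathbf{x}_i$; once this bookkeeping is spelled out, everything else follows by direct quotation of the two previous results.
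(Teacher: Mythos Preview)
Your proposal is correct and matches the paper's own argument exactly: the paper simply states that Lemma~\ref{lem:baseSets} together with Theorem~\ref{thm:dDistancehs} gives Theorem~\ref{thm:baseSetsHS}, which is precisely the composition you spell out. Your additional bookkeeping (that substituting constants preserves depth-$3$ structure, multilinearity, top fan-in, and $\di$-distance) is a helpful elaboration of a step the paper leaves implicit.
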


\subsection{Sum of set-multilinear circuits reduces to $m$-base-sets-$\di$-distance}
\label{sec:sumSetMult}
In this section, we will reduce the PIT for 
sum of constantly many set-multilinear depth-$3$ circuits, 
to the PIT for depth-$3$ circuits with 
$m$-base-sets-$\di$-distance, where $m \di = o(n)$. 
Thus, we get a subexponential time whitebox algorithm 
for this class (from Theorem~\ref{thm:baseSetsHS}). 
Note that a sum of constantly many set-multilinear depth-$3$ circuits
is equivalent to a depth-$3$ multilinear circuit such that 
the number of distinct partitions, induced by its product gates,
is constant.

We first look at the case of two partitions. 
For a partition $\P$ of $[n]$, 
let $\P|_B$ denote the restriction of $\P$ on a base set
$B \subseteq [n]$. 
E.g., if $\P = \{ \{1,2\}, \{3,4\}, \{5,6, \dots, n\}\}$ and $B = \{1,3,4\}$
then $\P|_B = \{ \{1\}, \{3,4\} \}$.
Recall that $d(\P_1,\P_2, \dots, \P_c)$ denotes the {\em distance}
of the partition sequence $(\P_1,\P_2, \dots, \P_c)$ (Definition~\ref{def:distance}).  
For a partition sequence $(\P_1, \P_2, \dots \P_c)$, 
and a base set $B \subseteq [n]$, 
let $d_B(\P_1, \P_2, \dots, \P_c)$ denote the distance of the partition sequence
when restricted to the base set $B$, i.e.\
$d(\P_1|_B, \P_2|_B, \dots, \P_c|_B)$.

\begin{lemma}
\label{lem:twoPartitions}
For any two partitions $\{ \P_1, \P_2 \}$ of the set $[n] $,
there exists a partition of $[n]$,
into at most $2 \sqrt{n}$ base sets $\{B_1, B_2, \dots, B_{m} \}$
 $(m < 2 \sqrt{n})$,
such that for any $i \in [m]$,
either $d_{B_i}(\P_1, \P_2) = 1$ or $d_{B_i}(\P_2, \P_1) =1$.
\end{lemma}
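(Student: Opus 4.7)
My plan is to build the $B_i$'s by thresholding the colors of $\P_1$ by size at $\sqrt n$ and exploiting the fact that distance~$1$ on each base set is allowed in either order, $(\P_1,\P_2)$ or $(\P_2,\P_1)$. Call a color $X \in \P_1$ \emph{big} if $|X| > \sqrt n$ and \emph{small} otherwise. Crucially, no structural information about $\P_2$ will be used.

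For the big colors I would take each big $X$ itself as a base set $B := X$. Then $\P_1|_B = \{X\}$ is a single color, so every color of $\P_1|_B$ is automatically a union of colors of $\P_2|_B$ (namely all of them), which is exactly $d_B(\P_2,\P_1) = 1$. Since the big $\P_1$-colors are disjoint, each of size strictly larger than $\sqrt n$, and the total is at most $n$, the number of big colors is strictly less than $\sqrt n$.

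For the remaining variables $R := [n] \setminus \bigcup_{X \text{ big}} X$ every variable of $R$ lies in a small $\P_1$-color (size $\leq \sqrt n$). I would form the graph $H$ on $R$ whose edges connect pairs sharing a $\P_1$-color; this $H$ is a disjoint union of cliques each of size $\leq \sqrt n$, so a greedy proper vertex coloring uses at most $\lfloor \sqrt n \rfloor$ colors. Take the color classes to be the remaining base sets. By construction each such $B_i$ meets every $\P_1$-color in at most one variable, so $\P_1|_{B_i}$ is a partition into singletons. Singletons refine any partition, so $d_{B_i}(\P_1,\P_2) = 1$.

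Totalling: strictly fewer than $\sqrt n$ big-color base sets plus at most $\lfloor \sqrt n \rfloor$ small-color base sets, giving $m < 2\sqrt n$ (both the integer and non-integer cases of $\sqrt n$ check out). The argument is really just a counting-and-coloring exercise; the one observation that drives it is that allowing both orders $(\P_1,\P_2)$ and $(\P_2,\P_1)$ lets us handle the ``thick'' and ``thin'' parts of $\P_1$ by the two complementary kinds of distance-$1$ behavior (single-color restriction versus singleton restriction of $\P_1$). I do not foresee any real obstacle --- the only small point to double-check is splitting the counts across the two parity cases of $\sqrt n$ to ensure the strict inequality $m < 2\sqrt n$.
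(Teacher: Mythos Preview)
Your proposal is correct and follows essentially the same approach as the paper: threshold the colors of $\P_1$ at $\sqrt{n}$, take each large color as its own base set (forcing $d_B(\P_2,\P_1)=1$), and split the remaining variables so that each base set meets every small $\P_1$-color in at most one point (forcing $d_B(\P_1,\P_2)=1$). Your greedy coloring of the union-of-cliques graph is exactly the paper's ``take the $a$-th element of each small color'' construction in different clothing, and your threshold choice ($>\sqrt n$ vs.\ the paper's $\ge\sqrt n$) is an immaterial variant that gives the same strict bound $m<2\sqrt n$.
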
          
\begin{proof}
Let us divide the set of colors in the partition $\P_1$, into 
two types of colors: One with at least $\sqrt{n}$ elements and the other with less than
$\sqrt{n}$ elements. 
In other words, $\P_1 = \{X_1, X_2, \dots, X_{r} \} \cup \{ Y_1, Y_2, \dots, Y_{q}\}$
such that $\abs{X_i} \geq \sqrt{n}$ and $\abs{Y_j} < \sqrt{n}$,
 for all $i \in [r], \; j \in [q]$.
Let us make each $X_i$ a base set, i.e.\ $B_i = X_i$, $\forall i \in [r]$. 
As $\abs{X_i} \geq \sqrt{n}, \; \forall i \in [r]$, we get $r \leq \sqrt{n}$.
Now, for any $i \in [r]$, $\P_1|_{B_i}$ has only one color.
Hence, irrespective of what colors $\P_2|_{B_i}$ has, 
$d_{B_i}(\P_2, \P_1) = 1$, for all $i \in [r]$.

Now, for the other kind of colors, we will make base sets  
which have exactly one element from each color $Y_j$.
More formally, let $Y_j = \{y_{j,1}, y_{j,2}, \dots, y_{j, r_j} \}$, for all $j \in [q]$.
Let $r' = \max\{ r_1, r_2, \dots, r_q\}$ ($r' < \sqrt{n}$). 
Now define base sets $B'_1, B'_2, \dots, B'_{r'}$ such that
for any $a \in [r']$, $B'_a = \{y_{j,a} \mid j \in [q], \; \abs{Y_j} \geq a \} $. 
In other words, all those $Y_j$s which have at least $a$ elements, contribute
their $a$-th element to $B'_a$.
Now for any $a \in [r']$, $\P_1|_{B'_a} = \{ \{y_{j,a}\} \mid j \in [q], \; \abs{Y_j} \geq a \}$,
i.e.\ it has exactly one element in each color. 
Clearly, irrespective of what colors $\P_2|_{B'_a}$ has, 
$d_{B'_a} (\P_1,\P_2) = 1$, for all $a \in [r']$.

$\{B_1, B_2, \dots, B_r\} \cup \{B'_1, B'_2, \dots, B'_{r'}\}$ is our final set of base sets. 
Clearly,  they form a partition of $[n]$. 
The total number of base sets, $m = r + r' < 2 \sqrt{n}$.

\end{proof}

Now, we generalize Lemma~\ref{lem:twoPartitions} to any constant number of partitions, by induction. 

\begin{lemma}[Reduction to $m$-base-sets-$1$-distance]
\label{lem:cPartitions}
For any set of $c$ partitions $\{ \P_1, \P_2, \dots, \P_c \} \subseteq \Part([n])$,
there exists a partition of the set $[n]$,
into $m$ base sets $\{B_1, B_2, \dots, B_{m} \}$
with $m < 2^{c-1} \cdot n^{1 - (1/2^{c-1})}$
such that for any $i \in [m]$,
there exists a permutation of the partitions, 
$(\P_{i_1} , \P_{i_2}, \dots, \P_{i_c})$
with $d_{B_i} (\P_{i_1} , \P_{i_2}, \dots, \P_{i_c}) = 1$.
\end{lemma}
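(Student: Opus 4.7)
The natural approach is induction on $c$, with the base case $c = 2$ supplied by Lemma~\ref{lem:twoPartitions} (which yields at most $2 n^{1/2} = 2^{c-1} n^{1 - 1/2^{c-1}}$ base sets). For the inductive step we mimic the construction of Lemma~\ref{lem:twoPartitions} using $\P_1$ with threshold $\sqrt n$, and then recurse inside each resulting piece on the remaining $c - 1$ partitions.

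Concretely, split the colors of $\P_1$ into \emph{large} colors (size $\geq \sqrt n$, of which there can be at most $\sqrt n$) and \emph{small} colors (size $< \sqrt n$). For each large color $X$, apply the inductive hypothesis to the $c-1$ partitions $(\P_2|_X, \dots, \P_c|_X)$ of $X$ to sub-divide $X$ into base sets, each carrying its own distance-$1$ ordering of these $c-1$ partitions. For the small colors, form the groups $B'_a := \{y_{j,a} \mid |Y_j| \geq a\}$ exactly as in Lemma~\ref{lem:twoPartitions}; there are fewer than $\sqrt n$ such groups, and on each $B'_a$ we invoke the inductive hypothesis on $(\P_2|_{B'_a}, \dots, \P_c|_{B'_a})$.

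The distance-$1$ extension is where the asymmetry of the two cases is exploited. On a sub-base-set $B \subseteq X$ of a large color, $\P_1|_B$ is a single color, the coarsest possible partition, so appending $\P_1$ at the \emph{end} of the distance-$1$ ordering from the recursion preserves distance $1$. On a sub-base-set $B \subseteq B'_a$, $\P_1|_B$ consists solely of singletons, the finest possible partition, so prepending $\P_1$ at the \emph{start} preserves distance $1$. In both cases the output base sets admit a permutation of $(\P_1, \dots, \P_c)$ with distance $1$, as required.

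The counting is the technical part. Writing $\alpha := 1 - 1/2^{c-2}$, the inductive bound promises fewer than $2^{c-2} m^\alpha$ base sets on a ground set of size $m$. Concavity of $m \mapsto m^\alpha$ bounds the large-color contribution $\sum_X 2^{c-2}|X|^\alpha$ by $2^{c-2} n^{(1+\alpha)/2}$ (the worst case under $|X| \geq \sqrt n$, $\sum_X |X| \leq n$ is $\sqrt n$ colors each of size $\sqrt n$), and Jensen's inequality applied to $\leq \sqrt n$ groups with $\sum_a |B'_a| \leq n$ bounds the small-groups contribution by $2^{c-2} \sqrt n \cdot (n/\sqrt n)^\alpha = 2^{c-2} n^{(1+\alpha)/2}$. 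Summing the two halves gives $2 \cdot 2^{c-2} n^{(1+\alpha)/2} = 2^{c-1} n^{1 - 1/2^{c-1}}$, closing the induction. I expect the main obstacle to be the concavity book-keeping --- specifically, verifying that the two halves contribute equal bounds and that the exponent recurrence $\alpha_c = (1 + \alpha_{c-1})/2$ with $\alpha_2 = 1/2$ solves cleanly to $\alpha_c = 1 - 1/2^{c-1}$ to give the claimed constant $2^{c-1}$.
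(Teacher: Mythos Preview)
Your proposal is correct and matches the paper's proof essentially line by line: the same induction on $c$, the same $\sqrt{n}$-threshold split of $\P_1$ into large and small colors, the same recursion on the remaining $c-1$ partitions inside each piece, and the same append/prepend trick (coarsest vs.\ finest restriction of $\P_1$) for extending the distance-$1$ ordering. The only difference is cosmetic bookkeeping in the count --- the paper further splits the groups $B'_a$ by whether $|B'_a| \geq \sqrt{n}$ and merges the large ones with the large colors before applying concavity, whereas you apply Jensen directly to each of the two collections using the count bounds $r,\, r' < \sqrt{n}$; both routes yield the same $2^{c-1} n^{1 - 1/2^{c-1}}$.
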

\begin{proof}
Let $f(c, n) := 2^{c-1} \cdot n^{1 - (1/2^{c-1})} $.
The proof is by induction on the number of partitions. 

{\em Base case:} For $c = 2$, $f(c,n)$ becomes 
$2 \sqrt{n}$. Hence, the statement follows from Lemma~\ref{lem:twoPartitions}.

{\em Induction hypothesis:} The statement is true for any $c-1$ partitions. 
 
{\em Induction step:} Like in Lemma~\ref{lem:twoPartitions}, we divide the 
set of colors in $\P_1$ into two types of colors. 
Let $\P_1 = \{X_1, X_2, \dots, X_{r} \} \cup \{ Y_1, Y_2, \dots, Y_{q}\}$
such that $\abs{X_i} \geq \sqrt{n}$ and $\abs{Y_j} < \sqrt{n}$,
 for all $i \in [r], \; j \in [q]$.
Let us set $B_i = X_i$ and let $n_i := \abs{B_i}$, $\forall i \in [r]$ . 
Our base sets will be further subsets of these $B_i$s. 
For a fixed $i \in [r]$, let us define $\P'_h = \P_h|_{B_i} $, as a partition of the set $B_i$,
for all $h \in [c]$.
Clearly, $\P'_1$ has only one color. 
Now, we focus on the partition sequence $(\P'_2, \P'_3, \dots, \P'_c)$.
From the inductive hypothesis, 
there exists a partition of $B_i$ into $m_i$ base sets
$\{ B_{i,1}, B_{i,2}, \dots, B_{i,m_i} \}$ ($m_i \leq f(c-1, n_i)$) such that 
for any $u \in [m_i]$, there exists a permutation of 
$(\P'_2, \P'_3, \dots, \P'_c)$, given by $(\P'_{i_2}, \P'_{i_3}, \dots, \P'_{i_c})$,  
with $d_{B_{i,u}} (\P'_{i_2}, \P'_{i_3}, \dots, \P'_{i_c}) = 1$.
As $\P'_1$ has only one color, so does $\P'_1|_{B_{i,u}}$. 
Hence, $d_{B_{i,u}} (\P'_{i_2}, \P'_{i_3}, \dots, \P'_{i_c}, \P'_1)$ is also $1$.
From this, we easily get $d_{B_{i,u}} (\P_{i_2}, \P_{i_3}, \dots, \P_{i_c}, \P_1) = 1$.
The above argument can be made for all $i \in [r]$.

Now for the other colors, 
we proceed as in Lemma~\ref{lem:twoPartitions}.
Let $Y_j = \{y_{j,1}, y_{j,2}, \dots, y_{j, r_j} \}$, for all $j \in [q]$.
Let $r' = \max\{ r_1, r_2, \dots, r_q\}$ ($r' < \sqrt{n}$). 
Now define sets $B'_1, B'_2, \dots, B'_{r'}$ such that
for any $a \in [r']$, $B'_a = \{y_{j,a} \mid j \in [q], \; \abs{Y_j} \geq a \} $. 
In other words, all those $Y_j$s which have at least $a$ elements, contribute
their $a$-th element to $B'_a$.
Let $n'_a := \abs{B'_a}$, for all $a \in [r']$.
Our base sets will be further subsets of these $B'_a$s. 
For a fixed $a \in [r']$, let us define $\P'_h = \P_h|_{B'_a} $, 
as a partition of the set $B'_a$,
for all $h \in [c]$.
Clearly, $\P'_1$ has exactly one element in each of its colors. 
Now, we focus on the partition sequence $(\P'_2, \P'_3, \dots, \P'_c)$.
From the inductive hypothesis, 
there exists a partition of $B'_a$ into $m'_a$ base sets
$\{ B'_{a,1}, B'_{a,2}, \dots, B'_{a,m'_a} \}$ ($m'_a \leq f(c-1, n'_a)$) such that 
for any $u \in [m'_a]$, there exists a permutation of 
$(\P'_2, \P'_3, \dots, \P'_c)$, given by $(\P'_{i_2}, \P'_{i_3}, \dots, \P'_{i_c})$,  
with $d_{B'_{a,u}} (\P'_{i_2}, \P'_{i_3}, \dots, \P'_{i_c}) = 1$.
As $\P'_1$ has exactly one element in each of its colors, so does $\P'_1|_{B'_{a,u}}$. 
Hence, $d_{B'_{a,u}} (\P'_1, \P'_{i_2}, \P'_{i_3}, \dots, \P'_{i_c})$ is also $1$.
From this, we easily get $d_{B'_{a,u}} (\P_1, \P_{i_2}, \P_{i_3}, \dots, \P_{i_c}) = 1$.
The above argument can be made for all $a \in [r']$.

Our final set of base sets will be $\{B_{i,u} \mid i \in [r], \; u \in [m_i] \} \cup
\{ B'_{a,u} \mid a \in [r'], \; u \in [m'_a] \}$. 
As argued above, when restricted to any of these base sets, the given partitions 
have a sequence, which has distance $1$. 
Now, we need to bound the number of these base sets, 
$$m = \sum_{i \in [r]} m_i + \sum_{a \in [r']} m'_a .$$
From the bounds on $m_i$ and $m'_a$, we get
$$m \leq \sum_{i \in [r]} f(c-1, n_i) + \sum_{a \in [r']} f(c-1, n'_a) .$$
Recall that $n_i \geq \sqrt{n}$. 
We break the second sum, in the above equation,
into two parts.
Let $R_1 = \{a \in [r'] \mid n'_a \geq \sqrt{n} \}$ 
and $R_2 = \{ a \in [r'] \mid n'_a < \sqrt{n} \}$.
\begin{equation}
\label{eq:threeParts}
m \leq \sum_{i \in [r]} f(c-1, n_i) +  \sum_{a \in R_1} f(c-1, n'_a) + \sum_{a \in R_2} f(c-1, n'_a).
\end{equation}

Let us first focus on the third sum. 
Note that $\abs{R_2} \leq r' < \sqrt{n}$.
For $a \in R_2$, $n'_a < \sqrt{n}$ and hence 
$f(c-1, n'_a) < f(c-1, \sqrt{n}) = 
2^{c-2} \cdot n^{1/2 - (1/2^{c-1})}$.
So, 
\begin{equation}
\label{eq:sum3}
\sum_{a \in R_2} f(c-1, n'_a) < \sqrt{n} \cdot 2^{c-2} \cdot n^{1/2 - (1/2^{c-1})} 
= 2^{c-2} \cdot n^{1 - (1/2^{c-1})} .
\end{equation}

Now, we focus on first two sums in Equation~(\ref{eq:threeParts}). As, $n_i \geq \sqrt{n}, \; \forall i \in [r]$ and 
$n'_a \geq \sqrt{n}, \; \forall a \in R_1$, we combine these two sums (with an abuse of notation)
and write the sum as follows,
$$ \sum_{i \in [r'']} f(c-1, n_i) ,$$
where $r'' = r + \abs{R_1}$, and $n_i \geq \sqrt{n}, \; \forall i \in [r''] $. 
As each $n_i \geq \sqrt{n}$, we know $r'' < \sqrt{n}$ (as $\sum n_i \leq n$). 

Observe that $f(c-1, z)$, as a function of $z$, is a concave function (its derivative is monotonically decreasing, when $z > 0$). From the properties of a concave function, we know,
$$ \frac{1}{r''}\sum_{i \in [r'']} f(c-1, n_i) \leq 
       f \left( c-1, \frac{1}{r''} \sum_{ i \in [r'']} n_i \right).
$$
Now, $\sum_{i \in [r'']} n_i \leq n$ and $f(c-1, z)$ is an increasing function (when $z >0$). Hence, 
$$ \frac{1}{r''}\sum_{i \in [r'']} f(c-1, n_i) \leq 
       f \left( c-1, \frac{1}{r''} n \right).
$$
Equivalently,

\begin{eqnarray*} 
\sum_{i \in [r'']} f(c-1, n_i) &\leq &
      r''  \cdot 2^{c-2} \cdot (n/r'')^{1 - (1/2^{c-2})} \\
	&=& 2^{c-2} \cdot n^{1- (1/2^{c-2})} \cdot (r'')^{1/2^{c-2}} \\
	& < & 2^{c-2} \cdot n^{1- (1/2^{c-2})} \cdot n^{1/2^{c-1}} \\
	& = & 2^{c-2} \cdot n^{1- (1/2^{c-1})}. 
\end{eqnarray*}

Using this with Equation~(\ref{eq:sum3}) and substituting in Equation~(\ref{eq:threeParts}),
we get
$$ m < 2^{c-1} \cdot n^{1- (1/2^{c-1})} .$$
\end{proof}

Now, we combine these results with our hitting-sets for 
depth-$3$ circuits having $m$ base sets with $\di$-distance. 

\begin{reptheorem}{thm:c-setmultihs}
Let $C(\mathbf{x})$ be a $n$-variate polynomial, which can be computed by a sum of $c$ 
set-multinear depth-$3$ circuits, each having top fan-in $k$. 
Then there is a $(nck)^{O(2^{c-1} n^{1 - \epsilon} \log n) }$-time whitebox PIT test for $C$,
where $\epsilon := 1/2^{c-1}$.
\end{reptheorem}
\begin{proof}
As mentioned earlier, the polynomial $C(\mathbf{x})$ can be viewed as 
being computed by a depth-$3$ multilinear circuit, such that 
its product gates induce at most $c$-many distinct partitions. 
From Lemma~\ref{lem:cPartitions}, we can partition the variable set into
 $m$ base sets, such that for each of these base sets, the partitions can be sequenced
to have distance $1$, where $m := 2^{c-1} n^{1 - \epsilon}$.
Hence, the polynomial $C$ has $m$ base sets with $1$-distance and top fan-in $ck$.
Moreover, from the proof of Lemma~\ref{lem:cPartitions}, it is clear that 
such base sets can be computed in $n^{O(c)}$-time. 
From Theorem~\ref{thm:baseSetsHS}, we know that there is $(nck)^{O(m \log n)}$-time
whitebox PIT test for such a circuit. 
 Substituting
the value of $m$, we get the result. 
\end{proof}

\subsection*{Tightness of this method}
Lemma~\ref{lem:twoPartitions} can be put in other words as: 
Any two partitions have $m$-base-sets-$\di$-distance 
with $m \di = O(\sqrt{n})$. We can, in fact, show that this result is tight. 

\emph{Showing the lower bound:} Let $d(\P_1, \P_2) = \di$. Then each color of $\P_2$ has 
a friendly neighborhood (of at most $\di$ colors) which is exactly partitioned in
$\P_1$. Now construct $\di$ base sets such that $i$-th base set takes the variables 
of $i$-th color from every neighborhood of $\P_2$. Clearly, when restricted to one of these bases sets,
$d(\P_1,\P_2)$ is $1$. In other words $\P_1$ and $\P_2$ have $\di$-base-sets-$1$-distance.
Similarly, one can argue that if $\P_1$ and $\P_2$ have $m$-base-sets-$\di$-distance
then they also have $m \di$-base-sets-$1$-distance.
Now, we will show that if we want $m$-base-sets-$1$-distance for two partitions then $m = \Omega(\sqrt{n})$.

Consider the following example (assuming $n$ is a square): 

$\P_1 = \{\{1,2, \dots \sqrt{n}\}, 
\{\sqrt{n}+1, \sqrt{n} + 2, \dots, 2\sqrt{n} \}, 
\dots, \{\sqrt{n}(\sqrt{n}-1)+1, \sqrt{n}(\sqrt{n}-1)+2, \dots, n \} \}$ 
and 

$\P_2 = \{\{1,\sqrt{n}+1, \dots, n- \sqrt{n}+1\}, 
\{2,\sqrt{n}+2, \dots, n- \sqrt{n}+2\}, 
\dots, \{ \sqrt{n}, 2 \sqrt{n}, \dots, n\} \}$. 
Basically, $\P_2$ has the residue classes (mod $\sqrt{n}$).
\begin{observation}
A base set $B,$ such that $d_B(\P_1,\P_2) = 1$,
has at most $\sqrt{n}$ variables. 
\end{observation}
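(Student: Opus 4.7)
The plan is to visualize the setup as a $\sqrt{n} \times \sqrt{n}$ grid: identify the integer $(i-1)\sqrt{n} + j$ (for $1 \le i,j \le \sqrt{n}$) with the grid cell in row $i$ and column $j$. Then $\P_1$ is the partition of $[n]$ into rows of this grid, and $\P_2$ is the partition into columns. With this picture in mind, two distinct elements lie in the same color of $\P_1$ iff they are in the same row, and in the same color of $\P_2$ iff they are in the same column.

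Next, I would unfold what $d_B(\P_1, \P_2) = 1$ means using Definition~\ref{def:distance} with $k = 2$, $i = 2$, and $\di = 1$. The condition forces that every color of $\P_2|_B$ is, by itself (friendly neighborhood of size at most $1$), a union of some colors of $\P_1|_B$. Equivalently, $\P_1|_B$ is a \emph{refinement} of $\P_2|_B$: any two elements of $B$ lying in the same color of $\P_1$ must also lie in the same color of $\P_2$.

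Now I would apply this refinement property to the grid. Suppose for contradiction that $B$ contains two distinct elements $a, b$ sharing a row. Then $\Color_{\P_1}(a) = \Color_{\P_1}(b)$, so the refinement property forces $\Color_{\P_2}(a) = \Color_{\P_2}(b)$, i.e.\ $a$ and $b$ also share a column. But two distinct cells of the grid cannot share both a row and a column, contradiction. Hence $B$ contains at most one element of each row of the grid, and since there are only $\sqrt{n}$ rows, $|B| \le \sqrt{n}$.

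No step looks like a real obstacle here; the only thing to be careful about is correctly reading the direction of the distance-$1$ condition (whose ``upper'' partition is refined by whose ``lower'' partition), which is why I would start by explicitly pinning down that $\P_1|_B$ refines $\P_2|_B$ before invoking the row/column geometry.
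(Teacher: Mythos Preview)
Your proof is correct and follows essentially the same approach as the paper. The paper argues by pigeonhole that if $|B|>\sqrt{n}$ then some color of $\P_1$ contributes two variables to $B$, notes these must lie in different colors of $\P_2$, and concludes $d_B(\P_1,\P_2)\ge 2$; your grid picture and explicit unpacking of ``$\P_1|_B$ refines $\P_2|_B$'' make the same contradiction more transparent, but the underlying argument is identical.
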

\begin{proof}
Suppose it has more than $\sqrt{n}$ variables.
Then, there is at least one color in $\P_1$ which
contributes two variables to $B$. These two variables have to be in two different colors of $\P_2$ 
(because of our design of $\P_1$ and $\P_2$).
So, $d_B(\P_1,\P_2)$ is at least $2$. We get a 
contradiction.
\end{proof}

The number of such base sets has to be at least $\sqrt{n}$.
Combining this with the reduction from $m$-base-sets-$\di$-distance
to $m \di$-base-sets-$1$-distance, we get $m \di = \Omega(\sqrt{n})$.

It is not clear if Lemma~\ref{lem:cPartitions} is tight. We conjecture
that for any set of partitions, $m \di = O(\sqrt{n})$ can be achieved.

\section{Sparse-Invertible Width-$w$ ROABP: Theorem~\ref{thm:invROABPHS}}
\label{sec:invROABP}
\comment{
An ABP is a directed graph with $d+1$ layers of vertices
$\{V_0,V_1, \dots, V_{d}\}$ such that the edges are 
only going from $V_{i-1}$ to $V_i$ for any $i \in [d]$. 
As a convention, $V_0$ and $V_d$ have only one node each,
 let the nodes be $v_0$ and $v_d$ respectively.
A width-$w$ ABP has $\abs{V_i} \leq w$ for all $i \in [d]$.
Let the set of nodes in $V_i$ be $\{v_{i,j} \mid j \in [w]\}$.
All the edges in the graph have weights from $\F[\mathbf{x}]$,
for some field $\F$. For an edge $e$, let us denote
its weight by $w(e)$.
For a path $p$ from $v_0$ to $v_d$,
its weight $w(p)$ is defined to be the product of weights of all the edges
in it, i.e.\ $\prod_{e \in p}w(e)$. 
Consider the polynomial $C(\mathbf{x}) = \sum_{p \in \paths(v_0,v_d)} w(p)$ 
which is the sum of the weights of all the paths from $v_0$ to $v_d$.
This polynomial $C(\mathbf{x})$ is said to be computed by the ABP.

It is easy to see that this polynomial is the same as
$D_0^{\top} (\prod_{i=1}^{d-2} D_i ) D_{d-1} $,
where $D_0,D_{d-1} \in (\F[\mathbf{x}])^w$ 
and $D_i$ for $1 \leq i \leq d-2$
is a $w \times w$ matrix such that 
\begin{eqnarray*}
D_0(\ell) &=& w(v_{0},v_{1,\ell}) \text{ for } 1 \leq \ell \leq w\\
D_i(k, \ell) &=& w(v_{i,k},v_{i+1,\ell}) \text{ for } 1 \leq \ell,k \leq w \text{ and } 1 \leq i \leq d-2\\
D_{d-1}(k) &=& w(v_{d-1,k},v_{d}) \text{ for } 1 \leq k \leq w
\end{eqnarray*}
}
As mentioned in Section~\ref{sec:preliminaries},
a polynomial $C(\x)$ computed by $s$-sparse-factor width-$w$ ROABP
can be written as $D_0^{\top} (\prod_{i=1}^d D_i ) D_{d+1}  $,
where $D_i \in \F^{w \times w}[\x_i]$ is an $s$-sparse polynomial 
for all $i \in [d]$, and $\x_1,\x_2 \dots, \x_d$ are disjoint 
sets of variables. 

We will show a hitting-set for a sparse-factor ROABP 
$D_0 (\prod_{i=1}^{d} D_i ) D_{d+1} $ with 
$D_i$ being an {\em invertible} matrix, for all $i \in [d]$.
Hence, we name this model \emph{sparse-invertible-factor ROABP}.  
To be more general, we take $D_0$ and $D_{d+1}$ also to be polynomials
in some sets of variables disjoint from $\x_1, \x_2, \dots, \x_d$.

For a polynomial $D$, let its sparsity  
$\sp(D)$ be the number of monomials in $D$
with nonzero coefficients and
let $\mu(D)$ be the maximum support 
of any monomial in $D$.

\begin{reptheorem}{thm:invROABPHS}
\label{thm:ROABPHS-rep}
Let $\mathbf{x} = \mathbf{x}_0 \sqcup \dotsm \sqcup \mathbf{x}_{d+1}$, with $\abs{\mathbf{x}} = n$.
Let $C(\mathbf{x}) = D_0^{\top} D D_{d+1} \in \F[\mathbf{x}]$ be a polynomial
with $D(\mathbf{x}) = \prod_{i=1}^{d} D_i(\mathbf{x}_i)$, where
$D_0 \in \F^w[\mathbf{x}_0]$ and $D_{d+1} \in \F^w[\mathbf{x}_{d+1}]$
and for all $i \in [d]$, 
$D_i \in \F^{w \times w}[\mathbf{x}_i]$ is an invertible matrix.
For all $i \in \{0, 1, \dots, d+1 \}$, $D_i$ has degree bounded
by $\delta$, $\sp(D_i) \leq s$ and $\mu(D_i) \leq \mu$.
Let $\ell := 1 + 2 \min\{ \ceil{\log (w^2 \cdot s)}, \mu \}$.
Then there is a hitting-set of size 
$\poly((n \delta s)^{\ell w^2})$ for $C(\mathbf{x})$.
\end{reptheorem}

\begin{remark}
If $\mu=1$, e.g.\ each $D_i$ is either a univariate or a linear polynomial, then
we get poly-time for constant $w$. Also if both $w$
and the sparsity-bound $s$ are constant, we get poly-time.
\end{remark}

Like \cite{ASS13} and \cite{FSS13}, we find a hitting-set by showing
a {\em low-support concentration}. 
Low support concentration in the polynomial $D(\x) = \prod_{i=1}^{d} D_i $
means that the coefficients of the low support monomials in $D(\x)$ span the whole
coefficient space of $D(\x)$. 

Let $\x$ be $\{x_1, x_2 \dots, x_n\}$.
For any $e \in \Zp^n$, 
support of the monomial $\mathbf{x}^e$ is defined as 
$\Supp(e) := \{i \in [n] \mid e_i \neq 0 \}$
and support size is defined as $\suppo(e) := \abs{\Supp(e)}$.
Now, we define \emph{$\ell$-concentration} for a polynomial 
$ D(\x) \in \F^{w \times w}[\x]$.

\begin{definition}[$\ell$-concentration]
Polynomial $ \Dx \in \F^{w \times w}[\x]$ is $ \ell $-concentrated if
$
\rank_\F \{\coeff_D (\x^e) \mid e \in \Zp^n, \; \suppo(e) < \ell\} 
= \rank_\F \{\coeff_D (\x^e) \mid e \in \Zp^n\}.
$
\end{definition}

We will later see that the low support concentration in
polynomial $D(\x)$ implies low support concentration in polynomial $C(\x)$ 
(defined similarly).
In other words, $C(\x)$ will have 
a nonzero coefficient for at least one of the low support monomials. 
Thus, we get a hitting set by testing these low support coefficients.
We use the following lemma from \cite{ASS13}.

\begin{lemma}
\label{lem:hsFromlConc}
If $C(\x) \in \F[\x]$ is an $n$-variate, $\ell$-concentrated polynomial 
with highest individual degree $\degree$, 
then there is a $(n \degree)^{O(\ell)}$-time hitting-set for $C(\x)$.
\end{lemma}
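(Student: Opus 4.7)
The plan is to exploit the fact that $\ell$-concentration for a scalar polynomial reduces to the existence of a nonzero low-support coefficient. Concretely, since $C \in \F[\x]$ has coefficient space of dimension at most $1$, the condition $\rank_\F\{\coeff_C(\x^e) : \suppo(e) < \ell\} = \rank_\F\{\coeff_C(\x^e)\}$ forces that if $C \neq 0$, then some exponent vector $e^*$ with $\suppo(e^*) < \ell$ satisfies $\coeff_C(\x^{e^*}) \neq 0$. So it suffices to design a hitting-set that certifies nonzero-ness via some low-support monomial.

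Next, I would construct the hitting-set by enumerating possible supports. For every subset $S \subseteq [n]$ with $|S| \le \ell-1$, consider the restriction $C_S := C|_{x_i = 0 \text{ for all } i \notin S}$. Monomials whose support lies inside $S$ survive this restriction (with their original coefficients), while all other monomials vanish. Hence for $S = \Supp(e^*)$, the monomial $\x^{e^*}$ survives with its original nonzero coefficient, so $C_S \neq 0$. It therefore suffices to hit each such $C_S$.

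Each $C_S$ is an $|S|$-variate polynomial (over the variables indexed by $S$) with individual degree at most $\degree$. For such a polynomial the standard Schwartz--Zippel grid --- taking $\degree + 1$ distinct values of $\F$ (working in a small field extension if $|\F| < \degree + 1$) for each variable $x_i$ with $i \in S$, and $0$ for $i \notin S$ --- is a hitting-set of size $(\degree+1)^{|S|} \le (\degree+1)^{\ell-1}$. Taking the union over all $S$ with $|S| \le \ell - 1$, the total hitting-set size is
$$\sum_{j=0}^{\ell-1} \binom{n}{j} (\degree+1)^j \;\le\; (n(\degree+1))^{\ell} \;=\; (n\degree)^{O(\ell)},$$
and the set is generated in time polynomial in this size.

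There is essentially no difficult step here: the argument is a clean reduction from the concentration hypothesis to the classical low-support (i.e.~few-variables-at-a-time) PIT primitive. The only point that requires a brief remark is the field-size issue, which is handled by passing to an extension of size $O(\degree)$ and does not affect the asymptotic bound. An alternative construction would replace the grid-plus-enumeration with a single Kronecker substitution $x_i \mapsto t^{w(x_i)}$ whose weight function (from Lemma~\ref{lem:kronecker}) separates a particular low-support monomial from all other monomials; this yields the same $(n\degree)^{O(\ell)}$-type bound but is a bit more indirect, so I would present the subset-enumeration proof as the main argument.
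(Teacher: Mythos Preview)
Your proof is correct and is essentially identical to the paper's own argument: reduce $\ell$-concentration of a scalar polynomial to the existence of a nonzero $(<\ell)$-support monomial, enumerate subsets $S$ of size at most $\ell-1$, zero out the variables outside $S$, and hit the resulting $(\ell-1)$-variate polynomial with the standard grid of $(\degree+1)^{\ell-1}$ points. The only cosmetic difference is that the paper enumerates subsets of size exactly $\ell-1$ (which suffices since any smaller support can be padded up), whereas you enumerate all sizes $\le \ell-1$; this does not affect the bound or the idea.
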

\begin{proof}
$\ell$-concentration for $C(\x)$ simply means that 
it has at least one $(< \ell)$-support monomial
with nonzero coefficient. 
We will construct a hitting set which essentially will test all these 
$(< \ell)$-support coefficients. 
We go over all subsets $S$ of $\x$ with size $\ell-1$ and do the following:
Substitute $0$ for all the variables outside the set $S$.
There will be at least one choice of $S$, for which the polynomial $C(\x)$
remains nonzero after the substitution.
Now, it is an $(\ell -1)$-variate nonzero polynomial. 
We take the usual hitting set $\Hit^{\ell-1}$ for this, where
$\Hit \subseteq \F$ is a set of size $\degree +1$ (see, for example, \cite[Fact 4.1]{SY10}).
In other words, each of these $\ell -1 $ variables are assigned values from the set $\Hit$.

The number of sets $S$ we need to try are ${n}\choose{\ell -1}$. 
Hence, the overall hitting set size is $(n \degree)^{O(\ell)}$.
\end{proof}

Now, we move on to show how to achieve low support 
concentration in $ D(\x) = \prod_{i=1}^{d} D_i $.
To achieve that we will use some efficient shift.
By shifting by a point $\mathbf{\alpha} := (\alpha_1, \alpha_2, \dots, \alpha_n)$,
we mean replacement of $x_i$ with $x_i + \alpha_i$.
Note that $D(\x + \mathbf{\alpha} ) \neq 0$
if and only if $D(\x) \neq 0$.
Hence, a hitting set for $D(\x + \mathbf{\alpha})$
gives us a hitting set for $D(\x)$. 
Instead of constants, we will be actually shifting $D(\x)$ by univariate polynomials,
say, given by the map $\phi \colon \mathbf{t} \to \{t^a\}_{a \geq 0} $, where
$\mathbf{t} := \{t_1, t_2, \dots, t_n \} $.
The $\phi$ is said to be an efficient map if  
 $\phi(t_i)$ is efficiently computable, for each $i \in [n]$.

\noindent {\bf Proof Idea-} 
As all the matrices in the matrix product $D(\mathbf{x}) = \prod_{i=1}^{d} D_i(\mathbf{x}_i)$
are over disjoint sets of variables,
any coefficient in the polynomial $D(\mathbf{x})$ can be uniquely written as a product of 
$d$ factors, each coming from one $D_i$. 
We start with the assumption that the constant term of each polynomial $D_i$, 
denoted by $D_{i {\bf 0}}$, is an invertible matrix.
Using this we define a notion of {\em parent} and {\em child}
between all the coefficients (also see Figure~\ref{fig:tree}): 
If a coefficient can be obtained
from another coefficient by replacing one of its constant factors $D_{i {\bf 0}}$
with another term (with non-trivial support) from $D_i$, 
then former is called a parent
of the latter. 
Observe that if we want to do this replacement by a multiplication of some matrix, 
then $D_{i {\bf 0}}$ should be invertible.
Moreover, all the factors on its right side (or its left side) 
also need to be
constant terms in their respective matrices (this is because of non-commutativity).
For a coefficient, the set of matrices $D_i$ which contribute a non-trivial
factor to it, is said to form the {\em block-support} of the coefficient.

Our next step is to show that if a coefficient linearly depends on its descendants 
then the dependence can be lifted to its parent (by dividing and multiplying appropriate factors), i.e.\ its parent also linearly depends 
on its descendants. 
As the dimension of the matrix algebra is constant, 
if we take an appropriately large (constant) child-parent chain,
there will be a linear dependence among the coefficients in the chain.
As the dependencies lift to the parent, they can be lifted all the way up. 
By an inductive argument it follows that every coefficient depends on 
the coefficients with low-block-support. 
Now, this can be translated to low-support concentration in $D$,
if a low-support concentration is assumed in each $D_i$. 

To achieve low-support concentration in each $D_i$,
we use an appropriate shift. The sparsity of $D_i$ is used crucially 
in this step. 
To make $D_{i {\bf 0}}$ invertible, again an appropriate shift is used. 
Note that $D_{i {\bf 0}}$ can be made invertible by a shift only when 
$D_i$ itself is invertible, hence the invertible-factor assumption.

\subsection{Building the Proof of Theorem~\ref{thm:invROABPHS}}
\label{sec:proofinvROABP}
Our first focus will be on
the matrix product $D(\mathbf{x}) := \prod_{i=1}^{d} D_i$
which belongs to $\F^{w \times w}[\mathbf{x}]$. 
We will show low-support concentration in $D(\mathbf{x})$
over the matrix algebra $\F^{w \times w}$ (which is non-commutative!).


\subsubsection{Low Block-Support} 
Let the matrix product $D(\mathbf{x}) := \prod_{i=1}^{d} D_i$
correspond to an ROABP such that
$D_i \in \F^{w \times w}[\mathbf{x}_i]$ for all $i \in [d]$.
Let $n_i$ be the cardinality of $\mathbf{x}_i$ and
let $n = \sum_{i=1}^d n_i$. 
For an exponent $e = (e_1, e_2, \dots, e_m) \in \Zp^m$, 
and for a set of variables $\mathbf{y}= \{y_1,y_2, \dots, y_m\}$,
${\mathbf{y}}^e$ will denote ${{y}_1}^{e_1} {y_2}^{e_2} \dots {y_m}^{e_m}$.

Viewing $D_i$ as belonging to $\F^{w \times w}[{\mathbf{x}}_i]$,
one can write $D_i := \sum_{e \in \Zp^{n_i}} D_{ie} {\mathbf{x}}_i^e$,
where $D_{ie} \in \F^{w \times w}$, for all $e \in \Zp^{n_i}$.
In particular $D_{i {\bf 0}}$ refers to
the constant part of the polynomial $D_i$. 

For any $e \in \Zp^n$, 
support of the monomial $\mathbf{x}^e$ is defined as 
$\Supp(e) := \{i \in [n] \mid e_i \neq 0 \}$
and support size is defined as $\suppo(e) := \abs{\Supp(e)}$.
In this section, we will also define block-support
of a monomial. 
Any monomial $\mathbf{x}^e$ for $e \in \Zp^n$, 
can be seen as a product $\prod_{i=1}^d {\mathbf{x}_i}^{e_i}$,
where $e_i \in \Zp^{n_i}$ for all $i \in [d]$,
such that $e = (e_1, e_2, \dots, e_d)$.
We define {\em block-support} of $e$, $\bS(e)$  as
$\{i \in [d] \mid e_i \neq {\bf 0} \}$ and 
{\em block-support size} of $e$, $\bs(e) = \abs{\bS(e)}$.
 
Next, we will show low block-support concentration 
of $D(\mathbf{x})$ when {\em each $D_{i {\bf 0}}$ is invertible}. 

As each $D_i$ is a polynomial over a different set of variables,
we can easily see that the coefficient of any 
monomial $\xe = \prod_{i=1}^{d} \xei$ in $D(\mathbf{x})$
is 
\begin{equation}
\label{eq:coeffSplits}
D_e := \prod_{i=1}^d D_{ie_i}.
\end{equation}
Now, we will define a relation of {\em parent} and {\em children} 
between these coefficients. 

\begin{definition}
For $e^*,e \in \Zp^n$, $D_{e^*}$ is called a parent of $D_{e}$
if 
$\exists j \in [d]$,  
$j > \max \bS(e)$ or $j < \min  \bS(e) $,
such that
$\bS(e^*) = \bS(e) \cup \{j\}$
and $e^*_i = e_i$, $\forall i \in [d]$ with $i \neq j$.
\end{definition}

\begin{figure}
\centering
%
  \input{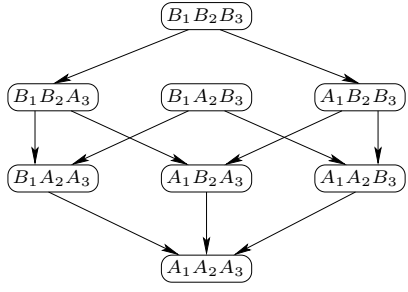}%

\caption{An edge represents the child-parent relationship among the coefficients.
The arrow points towards the child.}
\label{fig:tree}
\end{figure}

If $D_{e^*}$ is a parent of $D_{e}$ then 
$D_{e}$ is a {\em child} of $D_{e^*}$.
Note that {\em a coefficient has at most two children},
on the other hand it can have many parents.
In the case when $j > \max \bS(e)$
 we call $e$, the {\em left} child of $e^*$ and 
in the other case we call it the {\em right} child.
Figure~\ref{fig:tree} shows this relationship between the coefficients 
for the polynomial $(A_1 + B_1 x_1)(A_2 + B_2 x_2)(A_3 + B_3 x_3)$,
where $A_i,B_i \in \F^{w \times w}$, for all $i \in \{1,2,3\}$.

To motivate this definition, 
observe that if $j > \max  \bS(e)$ then by Equation~(\ref{eq:coeffSplits}) 
we can write
$D_{e^*} = D_e A^{-1} B$, where $A := \prod_{i=j}^{d} D_{i{\bf 0}}$
and $B := D_{j e^*_{j}}\prod_{i=j+1}^{d} D_{i{\bf 0}}$.
We will denote the product $A^{-1}B$ as $D_{e^{-1}e^*}$.
Similarly, if $j < \min \bS(e)$ then one can write
$D_{e^*} = B A^{-1} D_e $, where $A := \prod_{i=1}^{j} D_{i{\bf 0}}$
and $B := \left(\prod_{i=1}^{j-1} D_{i{\bf 0}} \right) D_{j e^*_{j}}$.
In this case we will denote the product $B A^{-1}$ as $D_{e^*e^{-1}}$.
Note that the invertibility of $D_{i {\bf 0}}$s is crucial here. 

We also define {\em descendants} of a coefficient $D_e$
as $\descend(D_e) := \{D_f \mid f \in \Zp^n, \; \bS(f) \subset \bS(e) \}$.
Note that, the set of descendants of a coefficient 
could be bigger than the set of its children, grand-children, etc.
Now, we will view the coefficients as $\F$-vectors 
and look at the linear dependence between them. 
The following lemma shows how these dependencies lift to the parent. 

\begin{lemma}[Child to parent]
Let $D_{e^*}$ be a parent of $D_{e}$. 
If $D_{e}$ is linearly dependent on its descendants,
then $D_{e^*}$ is linearly dependent on its descendants. 
\label{lem:parentchild}
\end{lemma}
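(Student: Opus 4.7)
The plan is to handle the two parent types symmetrically and exploit that each coefficient $D_f$ factorises over disjoint variable blocks, so that changing one block inside the product yields another coefficient. The invertibility of the constant terms $D_{i\mathbf{0}}$ is what lets us make this ``block replacement'' as a right or left matrix multiplication.

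First I would fix notation by treating the left-child case: $j > \max \bS(e)$ and $\bS(e^*)=\bS(e)\cup\{j\}$. Following the discussion just before the lemma, set $A := \prod_{i=j}^{d} D_{i\mathbf{0}}$ and $B := D_{j\, e^*_j}\prod_{i=j+1}^{d} D_{i\mathbf{0}}$, and write $M := A^{-1} B = D_{e^{-1}e^*}$ (well-defined because every $D_{i\mathbf{0}}$ is invertible). Then $D_{e^*} = D_e \cdot M$. By hypothesis there exist scalars $c_f \in \F$ with
\begin{equation*}
D_e \;=\; \sum_{f \in T} c_f\, D_f, \qquad T \subseteq \{f \in \Zp^n : \bS(f) \subsetneq \bS(e)\}.
\end{equation*}
Multiply on the right by $M$ to get $D_{e^*} = \sum_{f \in T} c_f\, D_f M$.

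The second step is to identify each $D_f M$ as a coefficient of $D(\mathbf{x})$. Since $\bS(f) \subseteq \bS(e)$ and $j > \max \bS(e)$, we have $f_i = \mathbf{0}$ for all $i \geq j$, so
\begin{equation*}
D_f \;=\; \Bigl(\prod_{i=1}^{j-1} D_{i f_i}\Bigr)\cdot A.
\end{equation*}
Hence $D_f M = \bigl(\prod_{i=1}^{j-1} D_{i f_i}\bigr)\cdot B = D_{f^*}$, where $f^*$ agrees with $f$ on coordinates $1,\dots,j-1$, has $f^*_j = e^*_j \neq \mathbf{0}$, and is $\mathbf{0}$ on coordinates $>j$. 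Thus $\bS(f^*) = \bS(f)\cup\{j\}$. Because $\bS(f) \subsetneq \bS(e)$ and $j \in \bS(e^*)\setminus\bS(e)$, we obtain $\bS(f^*) \subsetneq \bS(e^*)$, so $D_{f^*} \in \descend(D_{e^*})$. This yields $D_{e^*} = \sum_{f\in T} c_f\, D_{f^*}$, expressing $D_{e^*}$ as a linear combination of its descendants.

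The right-child case ($j < \min \bS(e)$) is completely analogous after flipping sides: now $D_{e^*} = M' \cdot D_e$ with $M' := D_{e^*e^{-1}}$, and we left-multiply the given relation, absorbing the $j$-th block of $M'$ into $D_f$'s leading constant factor block. I expect the only subtlety — which is really the point of the definition of ``parent'' — is precisely that $j$ must sit at an extreme of $\bS(e)$: otherwise non-commutativity would prevent $D_f \cdot M$ (or $M' \cdot D_f$) from reassembling into a single coefficient of $D(\mathbf{x})$, since the inserted block $D_{j e^*_j}$ must be slid between already-present non-trivial blocks of $D_f$, which is not possible over a non-commutative matrix algebra. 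The extreme-position condition plus the invertibility of the $D_{i\mathbf{0}}$'s are exactly what make the block replacement go through cleanly.
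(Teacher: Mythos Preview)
Your proof is correct and follows essentially the same approach as the paper's: write $D_{e^*}=D_e\cdot D_{e^{-1}e^*}$, push the linear combination through, and verify that each $D_f\cdot D_{e^{-1}e^*}$ is the coefficient $D_{f^*}$ with $\bS(f^*)=\bS(f)\cup\{j\}\subsetneq\bS(e^*)$. Your write-up is slightly more explicit in unpacking why $D_fM=D_{f^*}$ (spelling out $D_f=(\prod_{i<j}D_{if_i})\cdot A$), and your closing remark on why the extreme-position constraint on $j$ is forced by non-commutativity is a helpful piece of intuition that the paper leaves implicit.
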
 
\begin{proof}
Let $D_{e}$ be the left child of $D_{e^*}$ 
(the other case is similar).
So, we can write 
\begin{equation}
D_{e^*} =  D_{e} D_{e^{-1}e^*}.
\label{eq:parentchild}
\end{equation}
Let the dependence of $D_e$ on its descendants be the following: 
$$
D_{e} = \sum_{\substack{f \\ \bS(f) \subset \bS(e)}} \alpha_f D_{f}.
$$
Using Equation~(\ref{eq:parentchild}) we can write,
$$
D_{e^*} = \sum_{\substack{f \\ \bS(f) \subset \bS(e)}} \alpha_f D_{f} D_{e^{-1}e^*}.
$$
Now, we just need to show that
for any $D_f$ with $\bS(f) \subset \bS(e)$,
 $D_{f} D_{e^{-1}e^*}$ 
is a valid coefficient of some monomial in $\Dx$ and 
also that it is a descendant of $D_{e^*}$.
Recall that $D_{e^{-1}e^*} = A^{-1}B$, 
where $A := \prod_{i=j}^{d} D_{i{\bf 0}}$
and $B := D_{j e^*_{j}}\prod_{i=j+1}^{d} D_{i{\bf 0}}$
and $\bS(e^*) = \bS(e) \cup \{j\}$.
We know that $j > \max\{\bS(e)\}$. 
Hence, $j > \max\{\bS(f)\}$ as $\bS(f) \subset \bS(e)$.
So, it is clear that $D_{f} D_{e^{-1}e^*}$
is the coefficient of ${\mathbf{x}}^{f^*} := {\mathbf{x}}^f {\mathbf{x}_{j}}^{e^*_{j}}$.
It is easy to see that $\bS(f^*) = \bS(f) \cup \{j\} \subset \bS(e^*)$.
Hence, $D_{f^*}=D_{f} D_{e^{-1}e^*} $ is a descendant of $D_{e^*}$.

\end{proof}

Clearly, if the descendants are more than $\dim_{\F}\F^{w \times w}$,
then there will be a linear dependence among them. So, 

\begin{lemma}
Any coefficient $D_e$, with $\bs(e) = w^2$, $\F$-linearly depends on
its descendants. 
\label{lem:blockSupportk}
\end{lemma}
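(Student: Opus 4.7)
The plan is to construct a carefully chosen chain of descendants starting from $D_e$ and then use a dimension-counting argument together with Lemma~\ref{lem:parentchild} to lift any dependence all the way up to $D_e$.

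First I would build a chain $D_{e_0}, D_{e_1}, \dots, D_{e_{w^2}}$ where $e_0 = e$ and each $D_{e_{i+1}}$ is a child of $D_{e_i}$ (say, always obtained by dropping the largest index of $\bS(e_i)$, which is allowed by the definition of the child-parent relation). Since $\bs(e) = w^2$ and each step strips off exactly one index from the block-support, after $w^2$ steps we arrive at the constant coefficient, i.e.\ $\bs(e_{w^2}) = 0$. In particular each $D_{e_j}$ with $j > i$ satisfies $\bS(e_j) \subsetneq \bS(e_i)$, so $D_{e_j}$ is a descendant of $D_{e_i}$.

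Next I would invoke dimension counting in the matrix algebra. The chain gives $w^2 + 1$ elements of $\F^{w \times w}$, which has $\F$-dimension $w^2$, so they must be $\F$-linearly dependent. Let $i^*$ be the \emph{smallest} index appearing with a nonzero coefficient in some such dependence; solving for $D_{e_{i^*}}$ expresses it as an $\F$-linear combination of $D_{e_{i^*+1}}, \dots, D_{e_{w^2}}$, all of which are descendants of $D_{e_{i^*}}$. Hence $D_{e_{i^*}}$ is $\F$-linearly dependent on its descendants.

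Finally I would iterate Lemma~\ref{lem:parentchild} exactly $i^*$ times along the chain: since $D_{e_{i^*-1}}$ is a parent of $D_{e_{i^*}}$, the dependence of $D_{e_{i^*}}$ on its descendants lifts to a dependence of $D_{e_{i^*-1}}$ on \emph{its} descendants; proceeding upward through $D_{e_{i^*-2}}, \dots, D_{e_0} = D_e$ gives the desired conclusion. The only substantive thing to verify is that each step of the chain is indeed a valid child relation (which is transparent from the construction that always strips the outermost block), and that the descendants identified in each lifting step are genuine descendants of the new ancestor --- both of which follow immediately from the fact that block-support only shrinks as one moves down the chain. There is no real obstacle here; this is essentially a pigeonhole argument packaged with the lifting lemma.
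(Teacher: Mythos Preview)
Your proof is correct and follows essentially the same route as the paper's: build a length-$(w^2+1)$ child--parent chain, invoke pigeonhole in the $w^2$-dimensional space $\F^{w\times w}$, and lift the resulting dependence via Lemma~\ref{lem:parentchild}. The paper's only extra wrinkle is that it first observes a nonzero coefficient has nonzero children (so the entire chain is nonzero when $D_e\neq 0$) before picking the dependent element; your choice of $i^*$ as the \emph{smallest} index with nonzero coefficient makes that observation unnecessary.
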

\begin{proof}
First of all, we show that if a 
coefficient $D_{f^*}$ is nonzero then so are
its children. 
Let us consider its left child $D_f$ (the other case is similar).
Recall that we can write $D_{f^*} = D_f D_{f^{-1}f^*}$.
Hence if $D_f$ is zero, so is $D_{f^*}$. 

Let $k:= w^2$.
Now, consider a chain of coefficients
$D_{e_0}, D_{e_1}, \dots, D_{e_k}=D_e$,
such that for any $i \in [k]$, 
$D_{e_{i-1}}$ is a child of $D_{e_{i}}$. 
Clearly, $\bs(e_i) = i$ for $0 \leq i \leq k$. 
All the vectors in this chain are nonzero
because of our above argument, as $D_e$ is nonzero
(The case of $D_e = 0$ is trivial).
These $k+1$ vectors lie in $\F^{k}$, 
hence, there exists an $i \in [k]$
such that $D_{e_{i}}$ is linear dependent on 
$\{ D_{e_0}, \dots, D_{e_{i-1}}\}$. 
As descendants include children, grand-children, etc.,
we can say that $D_{e_i}$ is linearly dependent on
its descendants. 
Now, by applying Lemma~\ref{lem:parentchild}
repeatedly, we conclude $D_{e_{k}} = D_e$ 
is dependent on its descendants. 
\end{proof}

Note that, for a coefficient $D_e$ with $\bs(e) = i$,
its descendants have block-support strictly smaller than
$i$.
So, Lemma~\ref{lem:blockSupportk} means that coefficients
with block-support $w^2$ depend on coefficients with block-support
$\leq w^2-1$.
Now, we show $w^2$-block-support-concentration in $\Dx$,
i.e.\ any coefficient is   
dependent on the coefficients with block-support $\leq w^2-1$.

\begin{lemma}[$w^2$-Block-concentration]
\label{lem:bSConc}
Let $\Dx = \prod_{i=1}^{d} D_i (\mathbf{x}_i) \in \F^{w \times w}[\mathbf{x}]$
 be a polynomial
with $D_{i\bf{0}}$ being invertible for each $i \in [d]$.
Then $\Dx$ has $w^2$-block-support concentration. 
\end{lemma}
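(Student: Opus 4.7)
The plan is to prove the statement by induction on the block-support $\bs(e)$, using Lemma~\ref{lem:blockSupportk} as the key black box (together with a slight strengthening of it) and Lemma~\ref{lem:parentchild} implicitly through the former. The goal is to show that for every $e \in \Zp^n$, the coefficient $D_e$ lies in the $\F$-span of $\{D_f : \bs(f) < w^2\}$.

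First, I would observe that the proof of Lemma~\ref{lem:blockSupportk} in fact gives a slightly stronger statement: for any $e$ with $\bs(e) \geq w^2$, $D_e$ is $\F$-linearly dependent on its descendants. Indeed, starting from any $e' \in \Zp^n$ with $\bS(e') \subseteq \bS(e)$ and $\bs(e') = \bs(e) - w^2$, one can form a chain $D_{e_0}, D_{e_1}, \dots, D_{e_{w^2}} = D_e$ of length $w^2 + 1$ in which $D_{e_{i-1}}$ is a child of $D_{e_i}$ (choosing one new block-coordinate at each step from $\bS(e) \setminus \bS(e')$ at an extreme end). These $w^2 + 1$ nonzero vectors lie in $\F^{w \times w}$, a $w^2$-dimensional space, so some $D_{e_i}$ linearly depends on $D_{e_0}, \dots, D_{e_{i-1}}$, which are descendants of it. Lifting this dependence up the chain via Lemma~\ref{lem:parentchild} exactly as in the proof of Lemma~\ref{lem:blockSupportk} yields that $D_e$ itself depends on its descendants.

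With that in hand, the main induction is on $\bs(e)$. The base case covers all $e$ with $\bs(e) < w^2$: there is nothing to prove, since $D_e$ is trivially in the span of the low-block-support coefficients. For the inductive step with $\bs(e) \geq w^2$, the strengthened Lemma~\ref{lem:blockSupportk} gives
\[
D_e \in \Span_\F \{ D_f : \bS(f) \subsetneq \bS(e) \}.
\]
Every such $f$ satisfies $\bs(f) < \bs(e)$, so by the inductive hypothesis each $D_f$ on the right-hand side lies in $\Span_\F\{D_g : \bs(g) < w^2\}$. Substituting these expressions, $D_e$ itself lies in $\Span_\F\{D_g : \bs(g) < w^2\}$, which is exactly $w^2$-block-support concentration.

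I do not expect a genuine obstacle here; the only subtle point is the mild strengthening of Lemma~\ref{lem:blockSupportk} to all $\bs(e) \geq w^2$, which requires checking that the child chain of the desired length exists and that Lemma~\ref{lem:parentchild} can be applied iteratively — both essentially immediate from the existing definitions and the invertibility of every $D_{i\mathbf{0}}$, which ensures that children of a nonzero coefficient along either end are again valid, nonzero coefficients and that the product expressions $D_{e^{-1}e^*}$ and $D_{e^*e^{-1}}$ used in the lifting are well defined.
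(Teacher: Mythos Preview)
Your proposal is correct and follows essentially the same route as the paper: induction on $\bs(e)$, with the inductive step showing that any $D_e$ with $\bs(e)\ge w^2$ lies in the span of its descendants and then applying the hypothesis to those descendants. The only organizational difference is that you re-run the chain argument of Lemma~\ref{lem:blockSupportk} to get the strengthened ``$\bs(e)\ge w^2$'' version up front, whereas the paper obtains it inside the induction by noting that a child $D_{e'}$ (with $\bs(e')=i-1$) already depends on its descendants by the induction hypothesis and then invoking Lemma~\ref{lem:parentchild} once.
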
 
\begin{proof}
Let $k:= w^2$. We will actually show that
for any coefficient $D_e$ with $\bs(e) \geq k$ (the case when $\bs(e) < k$ is trivial), 
$$D_e \in \Span\{ D_f \mid f \in \Zp^n, \; \bS(f) \subset \bS(e) \text{ and } \bs(f) \leq k-1 \}.$$
We will prove the statement by induction on
the block-support of $D_e$, $\bs(e)$.

{Base case:} When $\bs(e)=k$, it has been already shown in 
Lemma~\ref{lem:blockSupportk}.

{\em Induction Hypothesis:} For any coefficient $D_e$
with $\bs(e) = i-1$ for $i-1 \geq k$, 
$$D_e \in \Span\{ D_f \mid f \in \Zp^n, \; \bS(f) \subset \bS(e) \text{ and } \bs(f) \leq k-1 \}.$$

{\em Induction step:}
Let us take a coefficient $D_e$ with $\bs(e) = i$. 
Consider any child of $D_e$, denoted by $D_{e'}$. 
As $\bs(e')=i-1$, by our induction hypothesis, 
$D_{e'}$ is linearly dependent on its descendants. 
So, from Lemma~\ref{lem:parentchild}, 
$D_{e}$ is linearly dependent on its descendants. 
In other words, 
\begin{equation}
D_e \in \Span\{ D_f \mid \bS(f) \subset \bS(e) \text{ and } \bs(f) \leq i-1 \}.
\label{eq:De}
\end{equation}
Again, by our induction hypothesis, for any coefficient $D_f$, 
with $\bs(f) \leq i-1$, 
\begin{equation}
D_f \in \Span\{ D_g \mid \bS(g) \subset \bS(f) \text{ and } \bs(g) \leq k-1 \}.
\label{eq:Df}
\end{equation}
Combining Equations (\ref{eq:De}) and (\ref{eq:Df}),
we get
$$D_e \in \Span\{ D_g \mid \bS(g) \subset \bS(e) \text{ and } \bs(g) \leq k-1 \}.$$
\end{proof}

Now, we show low block-support concentration in
the actual polynomial computed by an ROABP, i.e.\
in $C(\mathbf{x}) = D_0^{\top} (\prod_{i=1}^d D_i) D_{d+1}$, 
where $D_0,D_{d+1} \in F^w[\mathbf{x}]$.
Note that in context of $C$, the definition of block support
is appropriately modified. Block support of a monomial now
is a subset of $\{0,1, \dots, d+1\}$. As before it will contain
the index $i$, if the monomial has a non-trivial support from $\x_i$,
for $0 \leq i \leq d+1$.

\begin{lemma}
\label{lem:CbSConc}
Let $\mathbf{x} = \mathbf{x}_0 \sqcup \mathbf{x}_1 \sqcup \dotsm \sqcup \mathbf{x}_{d+1}$.
Let $D(\o{x}) \in \F^{w \times w}[\mathbf{x}_1, \dots, \mathbf{x}_d]$
 be a polynomial described in Lemma~\ref{lem:bSConc}.
Let $C(\mathbf{x}) = D_0^{\top} D D_{d+1} \in \F[\mathbf{x}]$
 be a polynomial
with $D_0 \in \F^w[\mathbf{x}_0]$, $D_{d+1} \in \F^w[\mathbf{x}_{d+1}]$.
Then $C(\mathbf{x})$ has $(w^2+2)$-block-support concentration.  
\end{lemma}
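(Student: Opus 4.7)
\medskip

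\noindent\textbf{Proof plan for Lemma~\ref{lem:CbSConc}.} The plan is to lift the block-concentration of $D(\mathbf{x})$ established in Lemma~\ref{lem:bSConc} through the left and right multiplication by $D_0^{\top}$ and $D_{d+1}$. Since $\mathbf{x}_0, \mathbf{x}_1, \dots, \mathbf{x}_{d+1}$ are disjoint variable sets, every monomial of $C(\mathbf{x})$ factorises uniquely as $\mathbf{x}_0^{e_0}\cdot \mathbf{x}^e \cdot \mathbf{x}_{d+1}^{e_{d+1}}$, where $\mathbf{x}^e$ is a monomial in the variables of $D$. The corresponding coefficient in $C$ is then the scalar
\[
\coeff_C\bigl(\mathbf{x}_0^{e_0}\mathbf{x}^e\mathbf{x}_{d+1}^{e_{d+1}}\bigr) \;=\; D_{0,e_0}^{\top}\, D_e\, D_{d+1,e_{d+1}},
\]
where $D_{0,e_0}\in \F^w$ and $D_{d+1,e_{d+1}}\in \F^w$ are the relevant coefficient vectors, and $D_e\in \F^{w\times w}$ is the coefficient of $\mathbf{x}^e$ in $D$.

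Next I would formally redefine block-support for $C$-monomials as a subset of $\{0,1,\dots,d+1\}$, counting index $0$ (resp.\ $d+1$) precisely when $e_0\neq \mathbf{0}$ (resp.\ $e_{d+1}\neq \mathbf{0}$), and counting $i\in[d]$ precisely when the $\mathbf{x}_i$-part of $e$ is nonzero. Now take any $C$-monomial whose block-support size is $\geq w^2+2$. At least $w^2$ of those indices must lie in $\{1,\dots,d\}$, so the inner exponent $e$ satisfies $\bs(e)\geq w^2$. Applying Lemma~\ref{lem:bSConc} inside the matrix algebra $\F^{w\times w}$ expresses
\[
D_e \;=\; \sum_{\substack{f:\;\bS(f)\subset \bS(e)\\ \bs(f)\le w^2-1}} \alpha_f\, D_f.
\]

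Multiplying this identity on the left by $D_{0,e_0}^{\top}$ and on the right by $D_{d+1,e_{d+1}}$ is an $\F$-linear operation, so the same scalar identity holds for the $C$-coefficients:
\[
\coeff_C\bigl(\mathbf{x}_0^{e_0}\mathbf{x}^e\mathbf{x}_{d+1}^{e_{d+1}}\bigr) \;=\; \sum_f \alpha_f\, \coeff_C\bigl(\mathbf{x}_0^{e_0}\mathbf{x}^f\mathbf{x}_{d+1}^{e_{d+1}}\bigr).
\]
Each monomial appearing on the right has block-support contained in $\{0\}\cup \bS(f)\cup \{d+1\}$ (where the end indices appear only if $e_0$, $e_{d+1}$ are nonzero to begin with), hence of size at most $(w^2-1)+2 = w^2+1 < w^2+2$. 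Thus every coefficient of block-support $\geq w^2+2$ lies in the $\F$-span of coefficients of strictly smaller block-support, and an easy induction on block-support (identical in shape to the induction in Lemma~\ref{lem:bSConc}) concludes that every $C$-coefficient lies in the span of those with block-support at most $w^2+1$, giving $(w^2+2)$-block-support concentration.

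The argument is essentially bookkeeping, so the only mild subtlety I foresee is making sure the reduced monomials $\mathbf{x}_0^{e_0}\mathbf{x}^f\mathbf{x}_{d+1}^{e_{d+1}}$ are accounted for correctly when $e_0$ or $e_{d+1}$ equals $\mathbf{0}$; these cases are handled by trivially dropping the corresponding index from the block-support.
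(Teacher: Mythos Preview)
Your proposal is correct and follows essentially the same route as the paper: factor each coefficient of $C$ as $D_{0,e_0}^{\top} D_e D_{d+1,e_{d+1}}$, apply the $w^2$-block-support concentration of $D$ from Lemma~\ref{lem:bSConc} to $D_e$, and sandwich the resulting linear combination by the two end vectors to get block-support at most $(w^2-1)+2$. The paper does this for every $C_e$ at once (not only those with block-support $\ge w^2+2$), so your case restriction and the closing remark about an induction are unnecessary --- the bound $w^2+1$ already drops out directly --- but the argument is otherwise the same.
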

\begin{proof}
Let $k:=w^2$.
Lemma~\ref{lem:bSConc} shows that $\Dx$
 has $k$-block-support concentration. 
The coefficient of $\mathbf{x}^e$ in $C$ 
is $C_e := D_{0e_0} \prod_{i=1}^d D_{ie_i} D_{(d+1)e_{d+1}}$,
where $e = (e_0, e_1, \dots,e_d, e_{d+1})$. 
Let $D_e := \prod_{i=1}^d D_{ie_i}$.
By $k$-block-support concentration of $\Dx$,
$$D_e \in \Span\{ D_f \mid \bs(f) \leq k-1 \}.$$
Which implies, 
$$C_e \in \Span\{ D_{0e_0} D_f D_{(d+1)e_{d+1}} \mid \bs(f) \leq k-1 \}.$$ 
Clearly, $D_{0e_0} D_f D_{(d+1)e_{d+1}}$ is the coefficient of the monomial 
$x_0^{e_0} x_1^{f_1} \dotsm x_d^{f_d} x_{d+1}^{e_{d+1}}$. 
Hence, $C_e \in \Span\{  C_f \mid \bs(f) \leq k+1 \}$.
\end{proof}

\subsection{Low-support concentration}
Now, we want to show that if 
$C(\mathbf{x}) = D_0^{\top} (\prod_{i=1}^{d} D_i)  D_{d+1}$ has
low block-support concentration and moreover if each $D_i$ has 
low-support concentration then $C(\x)$ has an appropriate low-support concentration. 

\begin{lemma}[Composition]
Let $C(\x)$ be a polynomial $D_0^{\top} D  D_{d+1}$
as described in Lemma~\ref{lem:CbSConc}. 
If $C(\x)$ has $\ell$-block-support concentration and
$D_i(\mathbf{x}_i)$ has $\ell'$-support concentration for all $i \in [d]$ then 
$C(\x)$ has $\ell \ell'$-support concentration.  
\label{lem:ll'}
\end{lemma}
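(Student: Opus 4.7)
\bigskip

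\noindent\textbf{Proof proposal for Lemma~\ref{lem:ll'}.}
The plan is to do the two concentrations in sequence: first use $\ell$-block-support concentration of $C$ to cut the number of ``active'' blocks down to at most $\ell-1$, and then use $\ell'$-support concentration inside each active block $D_i$ to cut the support contributed by that block down to at most $\ell'-1$. Since the blocks are on disjoint variable sets, supports add up, and the total support will be at most $(\ell-1)(\ell'-1)<\ell\ell'$.

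More concretely, pick any exponent $e=(e_0,e_1,\dots,e_{d+1})\in\Zp^n$, and write
$
C_e \;=\; D_{0e_0}^{\top}\bigl(\textstyle\prod_{i=1}^{d}D_{ie_i}\bigr)D_{(d+1)e_{d+1}}.
$
By the assumed $\ell$-block-support concentration of $C$, we can write
$C_e=\sum_{f\,:\,\bs(f)<\ell}\alpha_f\,C_f$
for scalars $\alpha_f\in\F$. So it suffices to show that each such $C_f$ lies in $\Span\{C_g:\suppo(g)<\ell\ell'\}$; then transitivity of linear dependence gives the claim.

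Fix $f$ with $\bs(f)\le\ell-1$ and let $I=\bS(f)\subseteq\{0,1,\dots,d+1\}$. For each $i\in I$, the $\ell'$-support concentration of $D_i$ yields an expansion
$
D_{if_i}\;=\;\sum_{g_i\,:\,\suppo(g_i)<\ell'}\beta_{i,g_i}\,D_{ig_i},
$
with $\beta_{i,g_i}\in\F$. (For $i\notin I$ we keep $D_{if_i}=D_{i\mathbf{0}}$ unchanged, which has $\suppo=0<\ell'$ trivially.) Now substitute these expansions, one block at a time, into the expression for $C_f$. Because the factors $D_0,D_1,\dots,D_{d+1}$ are over pairwise disjoint variable sets, matrix multiplication is $\F$-linear in each factor separately, so each substitution turns the single term $C_f$ into an $\F$-linear combination of terms of the form
$
D_{0g_0}^{\top}\bigl(\prod_{i=1}^{d}D_{ig_i}\bigr)D_{(d+1)g_{d+1}}\;=\;C_g,
$
where $g=(g_0,g_1,\dots,g_{d+1})$. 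Doing this for all $i\in I$ yields $C_f=\sum_g\gamma_g\,C_g$ with $\suppo(g_i)\le\ell'-1$ for every $i\in I$ and $\suppo(g_i)=0$ for $i\notin I$. Summing,
$
\suppo(g)\;=\;\sum_{i\in I}\suppo(g_i)\;\le\;(\ell-1)(\ell'-1)\;<\;\ell\ell',
$
which is exactly the required support bound. Composing this with the first reduction gives $C_e\in\Span\{C_g:\suppo(g)<\ell\ell'\}$.

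The only subtle point, and the one that really needs the hypotheses to match up cleanly, is that the linear relations inside a single block $D_i$ (which live in the matrix algebra $\F^{w\times w}$) must lift to linear relations among the full coefficients $C_f$ in $\F$. This is what the disjointness of the variable sets buys us: the product of coefficient matrices is exactly the coefficient of the corresponding product monomial, and multiplication is bilinear, so scalar dependencies transfer through the product without loss. There is no real obstacle beyond keeping the indexing straight; the quantitative bound $(\ell-1)(\ell'-1)<\ell\ell'$ is a direct consequence of supports being additive across disjoint blocks, which is why the two notions of concentration compose multiplicatively.
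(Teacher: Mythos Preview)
Your proposal is correct and follows essentially the same argument as the paper: reduce to coefficients with block-support at most $\ell-1$, then inside each active block replace the factor $D_{if_i}$ by a span of low-support factors using bilinearity of the matrix product, yielding total support at most $(\ell-1)(\ell'-1)<\ell\ell'$. The only cosmetic difference is the order in which you invoke the two concentration hypotheses; the paper expands within blocks first and then applies block-support concentration, but the two orders are interchangeable.
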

\begin{proof}
Recall that as $D_i$'s are polynomials over disjoint sets of variables,
any coefficient $C_f$ in $C(\x)$ can be written as
$D_{0 f_0}^{\top} (\prod_{i=1}^d D_{if_i}) D_{(d+1)f_{d+1}}$, 
where $f = (f_0, f_1, f_2, \dots, f_{d+1})$ and
$D_{if_i}$ is the coefficient corresponding to the monomial $\mathbf{x}_i^{f_i}$
in $D_i$ for all $0 \leq i \leq d+1$.
From the definition of $\bS(f)$, we know that
$f_i = 0$, for any $i \notin \bS(f)$.
From $\ell'$-support concentration of $D_i(\mathbf{x}_i)$, 
we know that for any coefficient $D_{if_i}$,
$$D_{if_i} \in \Span \{D_{ig_i} \mid g_i \in \Zp^{n_i}, \; \suppo(g_i) \leq \ell'-1 \}.$$
Using this, we can write
\begin{equation}
\begin{split}
C_f \in \Span \left\{ D_{0g_0}^{\top} \prod_{i = 1}^d D_{ig_i} D_{(d+1)g_{d+1}} \right. & \mid  
 g_i \in \Zp^{n_i}, \; \suppo(g_i) \leq \ell'-1, \; \forall i \in [[d+1]]  \\ 
& \left. \text{ and } g_i={\bf 0}, \; \forall i \notin \bS(f) \vphantom{\prod_{i = 1}^d} \right\} .
\end{split}
\end{equation}
Note that the product $D_{0g_0}^{\top} \prod_{i = 1}^d D_{ig_i} D_{(d+1)g_{d+1}}$ 
will be the coefficient of a monomial $\mathbf{x}^g$ such that $\bS(g) \subseteq \bS(f)$
because $g_i={\bf 0}, \; \forall i \notin \bS(f)$.
Clearly, if $\suppo(g_i) \leq \ell'-1, \; \forall i \in \bS(f)$ then $\suppo(g) \leq (\ell'-1)\bs(f)$.
So, one can write
\begin{equation}
C_f \in \Span\{ C_g \mid g \in \Zp^n, \; \suppo(g) \leq (\ell'-1)\bs(f)  \} .
\label{eq:span-bsf}
\end{equation}
From $\ell$-block-support concentration of $C(\x)$,
 we know that for any 
coefficient $C_e$ of $C(\x)$, 
\begin{equation}
C_e \in \Span\{ C_f \mid f \in \Zp^n, \; \bs(f) \leq \ell - 1 \}.
\label{eq:span-l-1}
\end{equation}
Using Equations (\ref{eq:span-bsf}) and (\ref{eq:span-l-1}),
we can write for any 
coefficient $C_e$ of $C(\x)$, 
$$C_e \in \Span\{ C_g \mid g \in \Zp^n, \; \suppo(g) \leq (\ell'-1)(\ell - 1) \}.$$
Hence, $C(\x)$ has $((\ell - 1)(\ell'-1)+1)$-support concentration and
hence $\ell \ell'$-support concentration.
\end{proof}

Now, we just need to show low-support concentration 
of each $D_i$. 
To achieve that we will use some efficient shift.
Shifting will serve a dual purpose. 
Recall that for Lemma~\ref{lem:bSConc}, we need
invertibility of the constant term in $D_i$, i.e.\ $D_{i\bf{0}}$, 
for all $i \in [d]$.
In case $D_{i \bf{0}}$ is not invertible for some $i \in [d]$, 
after a shift it might become invertible, 
since $D_i$ is assumed invertible in the sparse-invertible model. 
For the shifted polynomial $D'_i(\mathbf{x}_i) := D_i(\mathbf{x}_i + \phi(\mathbf{t}_i))$,
its constant term $D'_{i\bf{0}}$ is just an evaluation of $D_i(\mathbf{x})$, 
i.e.\ $D_i|_{\mathbf{x}_i = \phi(\mathbf{t}_i)}$.
Now, we want a shift for $D_i$ which would ensure that 
$\det(D'_{i{\bf 0}}) \neq 0$ and that $D'_i$ has low-support concentration. 
For both the goals we use the sparsity of the polynomial. 

For a polynomial $D$, let its sparsity set 
$\Sp(D)$ be the set of monomials in $D$
with nonzero coefficients 
and $\sp(D)$ be its
sparsity, i.e.\ $\sp(D) = \abs{\Sp(D)}$.
Let,
for a polynomial $D(\mathbf{x}) \in \F^{w \times w}[\mathbf{x}]$,
$S = \Sp(D)$ and $s = \abs{S}$. 
Then it is easy to see that 
for its determinant polynomial 
$\Sp(\det(D)) \subseteq S^w$,
where $S^w := \{ m_1 m_2 \dotsm m_w \mid m_i \in S, \; \forall i \in [w]   \}$.
Hence $\sp(\det(D)) \leq s^w$.
Now, suppose $\det(D) \neq 0$. 
We will describe an efficient shift which will make the
constant term, of the shifted polynomial, invertible. 
Let $\phi \colon \mathbf{t} \to \{t^i\}_{i=0}^{\infty}$ be a monomial map
which separates all the monomials in $\det(D(\mathbf{t}))$,
i.e.\
for any two $\mathbf{t}^{e_1},{\mathbf{t}}^{e_2} \in \Sp(\det(D(\mathbf{t})))$,
$\phi(\mathbf{t}^{e_1}) \neq \phi(\mathbf{t}^{e_2})$. 
It is easy to see that if we shift each $x_i$ by $\phi(t_i)$
to get $D'(\mathbf{x}) = D(\mathbf{x} + \phi(\mathbf{t}))$ then 
$\det(D'_{i{\bf 0}}) = \det(D|_{\mathbf{x} = \phi(\mathbf{t})}) \neq 0$.

For sparse polynomials, Agrawal et al.\ \cite[Lemma 16]{ASS13} have 
given an efficient shift to achieve low-support concentration. 
Here, we rewrite their lemma. 
The map $\phi_{\ell'} \colon \mathbf{t} \to \{t^i\}_{i=0}^{\infty}$ is 
said to be separating $\ell'$-support monomials of degree $\delta$,
if for any two monomials $\mathbf{t}^{e_1}$ and $\mathbf{t}^{e_2}$ which have
support bounded by $\ell'$ and degree bounded by $\delta$, 
$\phi_{\ell'}(\mathbf{t}^{e_1}) \neq \phi_{\ell'}(\mathbf{t}^{e_2})$. 
For a polynomial $\Dx$, let $\mu(D)$ be the maximum support 
of a monomial in $D$, i.e.\ $\mu(D) := \displaystyle\max_{\mathbf{x}^e \in \Sp(D)} \suppo(e)$.

\begin{lemma}[\cite{ASS13}]
\label{lem:sparse}
Let $V$ be a $\F$-vector space of dimension $k$.
Let $D(\mathbf{x}) \in V[\mathbf{x}]$ be a polynomial with
degree bound $\delta$. 
Let $\ell := 1 + 2 \min\{ \ceil{\log (k \cdot \sp(D))}, \mu(D) \}$ and $\phi_{\ell}$ be a
monomial map separating $\ell$-support monomials of degree $\delta$.
Then $D(\mathbf{x}+\phi_{\ell}(\mathbf{t}))$ has $\ell$-concentration
over $\F(t)$.
\end{lemma}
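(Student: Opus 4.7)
The plan is a reduction-to-scalar followed by a sparse-isolation argument, following the ASS13 strategy for this type of statement. Writing $D(\mathbf{x}) = \sum_e D_e \mathbf{x}^e$ with $D_e \in V$ and setting $D'(\mathbf{x}) := D(\mathbf{x}+\phi_\ell(\mathbf{t}))$, the binomial expansion gives its coefficients as
\[
D'_a \ = \ \sum_{e \geq a} \binom{e}{a}\, \phi_\ell(\mathbf{t})^{e-a}\, D_e \ \in \ V \otimes_\F \F(\mathbf{t}).
\]
Let $V' := \Span_\F\{D_e\}_{e}$, of dimension $r \leq k$, and let $V'_\ell := \Span_{\F(\mathbf{t})}\{D'_a : \suppo(a) < \ell\}$. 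The goal is to show $V'_\ell = V' \otimes_\F \F(\mathbf{t})$.

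Reducing to scalars: if $V'_\ell$ is a proper subspace of $V' \otimes \F(\mathbf{t})$, I would extract an $\F$-linear functional $L : V \to \F$ that annihilates every $D'_a$ with $\suppo(a) < \ell$ but is nonzero somewhere on $V'$, obtained by clearing $\mathbf{t}$-denominators in a witnessing $\F(\mathbf{t})$-functional and specializing $\mathbf{t}$ at a point preserving the nontrivial action. Applying $L$ coefficient-wise and using its commutation with the shift yields a nonzero scalar polynomial $Q := L(D) \in \F[\mathbf{x}]$ with $\sp(Q) \leq s$, $\mu(Q) \leq \mu$, $\deg(Q) \leq \delta$, whose shifted form $Q(\mathbf{x}+\phi_\ell(\mathbf{t}))$ has every support-$<\ell$ coefficient vanishing. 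The problem thus reduces to showing this cannot happen for any nonzero $s$-sparse scalar polynomial.

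For the scalar isolation I would use the weight function $w$ behind $\phi_\ell$ (so $\phi_\ell(t_i) = t^{w(x_i)}$) and locate an appropriate nonzero monomial $\mathbf{x}^{e^*}$ of $Q$ of small support whose contribution can be traced to a specific low-support coefficient of the shift. The clean case is when $\mu(Q) < \ell$: every monomial of $Q$ has support $< \ell$, so picking $e^*$ with minimum $w$-weight among nonzero $q_e$'s makes the coefficient of $\mathbf{x}^{e^*}$ in the shift equal to $q_{e^*}$ plus strictly positive-$t$-degree terms (since $e \geq e^*$ and $e \neq e^*$ forces $w(e-e^*) > 0$), hence nonzero. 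The harder case $\mu(Q) \geq \ell$ is the main obstacle: here one must combine a halving/pigeonhole argument on the $s$ monomials of $Q$ (to locate a distinguishable low-support sub-exponent $a \leq e^*$) with the $\ell$-support-separation property of $\phi_\ell$ to rule out weight collisions among the $t^{w(e-a)}$ that might cancel the isolated contribution. The specific bound $\ell = 1 + 2\min\{\lceil\log(ks)\rceil,\mu\}$ arises from this analysis: the $\log s$ accounts for halving among the $s$ monomials, the extra $\log k$ factor is needed because the reduction through $L$ must work uniformly for a basis of the $k$-dimensional coefficient space of $D$, and the factor of $2$ accommodates that the relevant weight separations are on exponents of the form $e - a$ involving supports of both $e$ and $a$, not single monomials.
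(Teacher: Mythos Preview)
The paper does not prove this lemma at all: it is imported verbatim from \cite[Lemma 16]{ASS13}, and the only argument supplied here is the one-line observation that shifting and concentration use just the additive structure of the coefficient space, so the $\H_k(\F)$ statement in \cite{ASS13} transfers to an arbitrary $k$-dimensional vector space $V$. Your proposal therefore attempts something the paper never does, namely a self-contained proof; that is fine, but it means there is nothing here to compare against.

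That said, your reduction-to-scalar step has a genuine gap. You want an $\F$-linear functional $L:V\to\F$ annihilating every $D'_a$ with $\suppo(a)<\ell$; you propose to obtain it by taking an $\F(t)$-functional $\Lambda=\sum_i c_i(t)e_i^*$ witnessing $V'_\ell\subsetneq V'\otimes\F(t)$, clearing denominators, and specializing $t\mapsto t_0$. But the vectors $D'_a$ themselves live in $V\otimes\F(t)$ and depend on $t$: the identity $\sum_i c_i(t)(D'_a)_i(t)=0$ does \emph{not} imply $\sum_i c_i(t_0)(D'_a)_i(t)=0$ as an element of $\F(t)$. If instead you specialize $t$ everywhere, you lose the shift $\phi_\ell$ altogether (it becomes a shift by fixed field constants with no separation structure). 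The correct reduction keeps $\Lambda$ over $\F(t)$ and studies $Q(\mathbf{x}):=\Lambda(D(\mathbf{x}))\in\F(t)[\mathbf{x}]$, but then the coefficients $q_e=\Lambda(D_e)\in\F(t)$ already carry $t$-dependence, and your ``minimum-$w$-weight monomial'' isolation can fail through $t$-cancellations (e.g.\ $q_{e_1}=t$, $q_{e_2}=-1$ with $w(e_2)=w(e_1)+1$ kills the intended leading term). So one is forced back to a genuinely vectorial argument, and the appearance of $k$ in $\log(ks)$ must be explained on those grounds, not via the hand-wave ``the reduction through $L$ must work uniformly for a basis''.

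Separately, you acknowledge that the $\mu(Q)\ge\ell$ case is ``the main obstacle'' and give only a loose indication of a halving/pigeonhole argument; this is precisely the substance of the ASS13 proof and is not something one can wave past. As written, the proposal is a plausible outline of the ASS13 strategy but does not close either the reduction or the combinatorial core.
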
 

The \cite{ASS13} version of the Lemma~\ref{lem:sparse} gave
 a concentration result
about sparse polynomials over $\H_k(\F)$. But observe that
the process of shifting and the definition of concentration 
only deal with the additive structure of $\H_k(\F)$,
and the multiplication structure is irrelevant. Hence,
the result is true over any
$\F$-vector space,
in particular,
over the matrix algebra.  
By combining these observations, we have the following. 

\begin{lemma}
\label{lem:ROABPconc}
Let $D(\mathbf{x}) = \prod_{i=1}^{d} D_i(\mathbf{x}_i)$ be a polynomial 
in $\F^{w \times w}[\mathbf{x}]$ with $\det(D) \neq 0$
such that for all $i \in [d]$,
 $D_i$ has degree bounded
by $\delta$, $\sp(D_i) \leq s$ and $\mu(D_i) \leq \mu$.
Let $\ell := 1 + 2 \min\{ \ceil{\log (w^2 \cdot s)}, \mu \}$
and $M :=  \poly(s^w (n \delta)^{\ell})$.
Then there is a set of $M$ monomial maps with degree bounded by
$M \log M$ 
such that for at least one of the maps $\phi$,
$C' := C(\mathbf{x}+\phi(\mathbf{t}))$ has $\ell (w^2+2)$-concentration. 
\end{lemma}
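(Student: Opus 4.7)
\textbf{Proof plan for Lemma~\ref{lem:ROABPconc}.} The plan is to construct a single monomial map $\phi$ that simultaneously (a)~makes each shifted constant term $D_i(\mathbf{x}_i+\phi(\mathbf{t}_i))|_{\mathbf{x}_i=\mathbf{0}}$ an invertible matrix, and (b)~gives $\ell$-support concentration in each shifted factor. Once such a $\phi$ exists, the conclusion follows by chaining Lemmas~\ref{lem:bSConc}, \ref{lem:CbSConc}, and \ref{lem:ll'}.

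First I would analyze the invertibility requirement. For a fixed $i \in [d]$, $D_i(\phi(\mathbf{t}_i))$ is invertible iff $\det(D_i)(\phi(\mathbf{t}_i))\neq 0$. Since $\Sp(\det(D_i)) \subseteq \Sp(D_i)^w$ has size at most $s^w$ and each entry of $D_i$ has degree $\le \delta$, the polynomial $\det(D_i)(\mathbf{t}_i)$ is an $s^w$-sparse polynomial of degree $\le w\delta$. A monomial map $\phi$ makes $\det(D_i)(\phi(\mathbf{t}_i))\neq 0$ provided it separates all $O(s^{2w})$ monomial pairs of $\det(D_i)(\mathbf{t}_i)$. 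Totalling over $i\in[d]$ we need $\phi$ to separate $O(d\,s^{2w})$ such pairs. Second, for the low-support-concentration requirement, Lemma~\ref{lem:sparse} tells us it suffices that $\phi$ be $\ell$-support separating for degree-$\delta$ monomials, i.e.\ separates pairs of monomials in $\mathbf{t}_i$ of support $\le \ell$ and degree $\le \delta$; there are at most $\binom{n}{\ell}(\delta+1)^\ell \le (n\delta)^{O(\ell)}$ such monomials per block, so $O(d\,(n\delta)^{2\ell})$ pairs in total.

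Next, invoking the efficient Kronecker construction (Lemma~\ref{lem:kronecker}) on the union of the two pair-sets above, I can produce a family of $M = \poly(s^w(n\delta)^\ell)$ candidate weight functions of height at most $M\log M$, at least one of which separates every pair in the union. Each weight function gives a monomial map $\phi\colon t_i \mapsto t^{\phi(t_i)}$, so we obtain a set of $M$ monomial maps with degree bounded by $M\log M$, and for at least one $\phi$ in this set both (a) and (b) hold simultaneously.

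Finally I would assemble the concentration statement. For the chosen $\phi$, set $D'_i(\mathbf{x}_i) := D_i(\mathbf{x}_i+\phi(\mathbf{t}_i))$. By (a), every constant term $D'_{i\mathbf{0}}$ is invertible, so Lemma~\ref{lem:bSConc} applied to $D'(\mathbf{x})=\prod_i D'_i$ yields $w^2$-block-support concentration; extending by the outer vectors via Lemma~\ref{lem:CbSConc} promotes this to $(w^2+2)$-block-support concentration for $C'(\mathbf{x}) := C(\mathbf{x}+\phi(\mathbf{t}))$. By (b) together with Lemma~\ref{lem:sparse}, each $D'_i$ has $\ell$-support concentration over $\F(t)$. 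Composing these two concentration statements via Lemma~\ref{lem:ll'} gives $\ell(w^2+2)$-support concentration for $C'$, as required. The main subtlety is book-keeping that one Kronecker map handles both goals at once; the concentration assembly itself is a mechanical application of the already-established lemmas.
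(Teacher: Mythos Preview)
Your proposal is correct and follows essentially the same approach as the paper: build one Kronecker-type monomial map that simultaneously separates the $O(d\,s^{2w})$ determinant-monomial pairs (to force each $D'_{i\mathbf{0}}$ invertible) and the $(n\delta)^{O(\ell)}$ low-support monomial pairs (to invoke Lemma~\ref{lem:sparse}), then chain Lemmas~\ref{lem:bSConc}, \ref{lem:CbSConc}, and \ref{lem:ll'} exactly as you do. The only cosmetic difference is that the paper counts the $\ell$-support monomials globally over all $n$ variables rather than per block, but since $d\le n$ this is absorbed in the same $\poly(s^w(n\delta)^\ell)$ bound.
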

\begin{proof}
Let $\phi \colon \mathbf{t} \to \{t^i\}_{i=0}^{\infty}$ be a map 
such that it separates all the monomials in $\Sp(\det(D_i(\mathbf{t}_i)))$,
for all $i \in [d]$. There are $ds^{2w}$ such monomial pairs. 
Also assume that $\phi$ separates all monomials of support bounded by $\ell$. 
There are $(n \delta)^{O(\ell)}$ such monomials. 
Hence, total number of 
monomial pairs which need to be separated are $s^{O(w)} + (n \delta)^{O(\ell)}$.
From Lemma~\ref{lem:kronecker}, 
we know that there is a set of $M$ monomial maps ($t_i \mapsto t^{w(t_i)}$)
with highest degree $M\log M$ such that at least one of the maps
$\phi$ separates the desired monomials, 
where $M = \poly(s^w (n \delta)^{\ell})$.
As the map $\phi$ separates all the monomials in $\Sp(\det(D_i(\mathbf{t}_i)))$,
$\det(D_i(\phi(\mathbf{t}_i))) \neq 0$ and hence, 
$D'_{i{\bf 0}}$ is invertible for all $i \in [d]$.
So, $C'(\mathbf{x})$ has $(w^2+2)$-block-support concentration from Lemma~\ref{lem:bSConc}.

From Lemma~\ref{lem:sparse}, $D'_i(\mathbf{x}_i)$ has $\ell$-concentration for all 
$0 \leq i \leq d+1$.
Hence, from Lemma~\ref{lem:ll'}, $C'(\mathbf{x})$ has $\ell (w^2+2)$-concentration.  
\end{proof}

Now, we come back to the proof of Theorem~\ref{thm:invROABPHS} (restated in this section).
Combining Lemma~\ref{lem:ROABPconc} with Lemma~\ref{lem:hsFromlConc} we get a hitting set
for $C'(\x) = C(\x + \phi(\mathbf{t}))$ of size $(n \delta)^{O(\ell w^2)}$.
Each of these evaluations of $C$ will be a polynomial in $t$ 
with degree at most $\poly(s^w (n \delta)^{\ell})$.
Hence, total time complexity becomes $\poly(s^w(n \delta)^{\ell w^2})$. 

\comment{
Note that for constant width ROABP,
when $\mu(D_i)$ is bounded by a constant for each $0 \leq i \leq d+1$,
in particular when each $D_i$ is univariate, 
the parameter $\ell$ becomes constant and
the hitting-set becomes polynomial-time.  
}

\subsection{Width-$2$ Read Once ABP}
\label{sec:2ROABP}
In the previous section, the crucial part in finding a hitting-set for an ROABP,
is the assumption that the matrix product $\Dx$ is invertible. 
Now, we will show that for width-$2$ ROABP, this assumption is not required.
Via a factorization property of $2 \times 2$ matrices,
 we will show that PIT for width-$2$ sparse-factor ROABP
reduces to PIT for width-$2$ sparse-invertible-factor ROABP. 

\begin{lemma}[$2 \times 2$ invertibility]
\label{lem:width2}
Let $C(\mathbf{x}) =  D_0^{\top} \left( \prod_{i=1}^d D_i  \right) D_{d+1}$ be a polynomial computed
by a width-$2$ sparse-factor ROABP.
Then we can write $\alpha(\mathbf{x}) C(\mathbf{x}) = C_1(\mathbf{x})C_2(\mathbf{x}) \dotsm C_{m+1}(\mathbf{x})$,
for some nonzero $\alpha \in \F[\mathbf{x}]$ and some $m \leq d$, 
where $C_i(\mathbf{x})$ is a polynomial computed by a width-$2$ sparse-invertible-factor 
ROABP, for all $i \in [m+1]$. 
\end{lemma}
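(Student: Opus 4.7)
The proof is by induction on $d$, splitting the ABP at any middle matrix that fails to be invertible. The key fact is that a $2 \times 2$ matrix $M$ over a field is singular iff it has rank at most $1$, in which case $M$ factors as an outer product of a column and a row. Applied to a polynomial matrix $D_i \in \F^{2\times 2}[\mathbf{x}_i]$ with $\det(D_i) = 0$, this gives a way to ``break'' the matrix product at position $i$ into two shorter products.

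If $d = 0$, or more generally if every $D_i$ for $i \in [d]$ satisfies $\det(D_i(\mathbf{x}_i)) \neq 0$ in $\F[\mathbf{x}_i]$, then $C$ itself is already a width-$2$ sparse-invertible-factor ROABP and the claim holds with $m = 0$, $\alpha = 1$, $C_1 = C$. Otherwise, let $i \in [d]$ be the smallest index with $\det(D_i(\mathbf{x}_i)) = 0$. If $D_i \equiv 0$ then $C \equiv 0$ and the claim is trivial. Otherwise some entry is nonzero; for concreteness let $D_i = \begin{pmatrix} a & b \\ c & e \end{pmatrix}$ with $a(\mathbf{x}_i) \neq 0$ (the other pivot positions are symmetric). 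The singularity relation $ae = bc$ gives the outer-product identity
\begin{equation*}
a \cdot D_i \;=\; \begin{pmatrix} a \\ c \end{pmatrix}\begin{pmatrix} a & b \end{pmatrix}.
\end{equation*}
Setting $\beta := a(\mathbf{x}_i) \in \F[\mathbf{x}_i]\setminus\{0\}$, $u := (a, c)^{\top}$ and $v := (a, b)^{\top}$, both $u, v \in \F[\mathbf{x}_i]^2$, we obtain
\begin{equation*}
\beta(\mathbf{x}_i) \cdot C(\mathbf{x}) \;=\; \bigl(D_0^{\top} D_1 \cdots D_{i-1}\, u(\mathbf{x}_i)\bigr) \cdot \bigl(v(\mathbf{x}_i)^{\top} D_{i+1} \cdots D_d\, D_{d+1}\bigr) \;=:\; L(\mathbf{x}) \cdot R(\mathbf{x}).
\end{equation*}

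The polynomial $L$ is computed by a width-$2$ sparse-factor ROABP with start vector $D_0$, middle matrices $D_1, \ldots, D_{i-1}$ (each invertible by minimality of $i$), and end vector $u$; the variable blocks $\mathbf{x}_0, \mathbf{x}_1, \ldots, \mathbf{x}_i$ remain disjoint. Hence $L$ is a width-$2$ sparse-invertible-factor ROABP. The polynomial $R$ is computed by a width-$2$ sparse-factor ROABP with only $d - i \leq d - 1$ middle matrices, so by the inductive hypothesis $\alpha'(\mathbf{x}) R(\mathbf{x}) = C_2(\mathbf{x}) \cdots C_{m'+2}(\mathbf{x})$ with $\alpha' \in \F[\mathbf{x}] \setminus \{0\}$, $m' \leq d - i$, and each $C_j$ a width-$2$ sparse-invertible-factor ROABP. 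Taking $C_1 := L$, $\alpha := \beta\,\alpha'$ (a product of two nonzero polynomials in $\F[\mathbf{x}]$, hence nonzero), and $m := m' + 1 \leq d - i + 1 \leq d$, we conclude $\alpha \cdot C = C_1 C_2 \cdots C_{m+1}$.

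The only delicate point is that the rank-$1$ factors $u$ and $v$ must be polynomials in $\mathbf{x}_i$ alone, so that the variable blocks of the two halves $L$ and $R$ remain pairwise disjoint and each half is again a legal sparse-factor ROABP; this is immediate because $u$ and $v$ are formed directly from entries of $D_i$. No other technical obstacle arises beyond this book-keeping and the straightforward case analysis over the choice of pivot entry in $D_i$.
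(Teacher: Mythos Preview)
Your proof is correct and follows essentially the same approach as the paper: both exploit the fact that a singular $2\times 2$ polynomial matrix, after clearing a nonzero entry as denominator, factors as an outer product of two column vectors built from its own entries, and use this to split the ROABP at each non-invertible layer. The only difference is presentational---the paper identifies all non-invertible indices $i_1<\cdots<i_m$ at once and writes the factorization in a single step, whereas you peel off the first non-invertible layer and recurse.
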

\begin{proof}
Let us say, for some $i \in [d]$, $D_i(\mathbf{x}_i)$ is not invertible. 
Let $D_i = \left[ \begin{smallmatrix} a_i & b_i \\ c_i & d_i \end{smallmatrix} \right]$ with
$a_i,b_i,c_i,d_i \in \F[\mathbf{x}_i]$ and $a_i d_i = b_i c_i$.
Without loss of generality, at least one of $\{a_i,b_i,c_i,d_i\}$ is nonzero. 
Let us say $a_i \neq 0$ (other cases are similar).
Then we can write,
$$ \begin{bmatrix} a_i & b_i \\ c_i & d_i \end{bmatrix} 
= \frac{1}{a_i} \begin{bmatrix} a_i \\ c_i  \end{bmatrix}
\begin{bmatrix} a_i & b_i \end{bmatrix} . 
$$ 
In other words, we can write
$\alpha_i D_i = A_i B_i^{\top} $, where $A_i,B_i \in \F^2[\mathbf{x}_i]$ and 
$0 \neq \alpha_i \in \{a_i,b_i,c_i,d_i\} $.
Note that $\sp(\alpha_i), \sp(A_i),\sp(B_i) \leq \sp(D_i)$.
Let us say that the set of non-invertible $D_i$s is $\{D_{i_1}, D_{i_2}, \dots, D_{i_{m}} \}$.
Writing all of them in the above form we get, 
$$C(\mathbf{x}) \prod_{j=1}^m \alpha_{i_j} = 
\prod_{j=1}^{m+1} C_j ,
$$
where 
\begin{equation*}
C_j := \begin{cases}
D_0^{\top} \left( \prod_{i=1}^{i_1-1} D_i \right) A_{i_1} & \text{ if } j= 1, \\
 B_{i_{j-1}}^{\top} \left(\prod_{i=i_{j-1}+1}^{i_{j}-1} D_i \right) A_{i_{j}}
& \text{ if } 2 \leq j \leq m, \\
B_{i_m}^{\top} \left( \prod_{i=i_m+1}^{d} D_i \right) D_{d+1} &  \text{ if } j = m+1. 
\end{cases}
\end{equation*}
Clearly, for all $j \in [m+1]$, $C_j$ can be computed by a sparse-invertible-factor ROABP.
\end{proof}

Now, from the above lemma it is easy to construct a hitting-set. 
First we write a general result about hitting-sets
for a product of polynomials from some class \cite[Observation 4.1]{SY10}.
\begin{lemma}[Lagrange interpolation]
\label{lem:lagrange}
Suppose $\Hit$ is a hitting-set for a class of polynomials $\C$.
Let $C(\mathbf{x}) = C_1(\mathbf{x}) C_2(\mathbf{x}) \dotsm C_m(\mathbf{x})$,
where $C_i \in \C$ and has degree bounded by $\delta$, for all $i \in [m]$.
There is a hitting-set of size $m \delta \abs{\Hit} +1$ for $C(\mathbf{x})$.
\end{lemma}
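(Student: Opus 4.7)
The plan is to reduce the multivariate problem to a univariate problem by restricting $C$ to an algebraic curve that passes through all points of $\Hit$, and then use the usual univariate hitting set.

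Concretely, let $N := \abs{\Hit}$ and write $\Hit = \{\mathbf{h}_1,\dots,\mathbf{h}_N\}$. First I would use Lagrange interpolation to build a polynomial curve $\gamma(t) \in \F[t]^n$ of degree at most $N-1$ with the property that $\gamma(j) = \mathbf{h}_j$ for all $j \in [N]$ (pick any $N$ distinct scalars $\beta_1,\dots,\beta_N \in \F$ if $\F$ is small and use those instead of $1,\dots,N$). Such a curve always exists because we are interpolating $n$ univariate polynomials, one coordinate at a time, through $N$ prescribed values.

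Next, for each factor $C_i$, since $C_i \in \C$ and $C_i \not\equiv 0$, the hitting-set property of $\Hit$ gives some $j_i \in [N]$ with $C_i(\mathbf{h}_{j_i}) \neq 0$. Substituting the curve yields the univariate polynomial $\widetilde C_i(t) := C_i(\gamma(t)) \in \F[t]$, which satisfies $\widetilde C_i(j_i) = C_i(\mathbf{h}_{j_i}) \neq 0$ and is therefore nonzero as an element of $\F[t]$. Since $\deg(\gamma) \le N-1$ and $\deg(C_i) \le \delta$, we have $\deg(\widetilde C_i) \le \delta(N-1)$.

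Multiplying these together gives $\widetilde C(t) := C(\gamma(t)) = \prod_{i=1}^m \widetilde C_i(t)$, which is a nonzero univariate of degree at most $m\delta(N-1)$. Hence any set of $m\delta(N-1)+1 \le m\delta\abs{\Hit}+1$ distinct scalars $t_1,\dots,t_{m\delta N+1} \in \F$ has some $t_k$ with $\widetilde C(t_k) \neq 0$, so $\{\gamma(t_k) \mid k\}$ is the required hitting-set for $C$. (If $\F$ is too small to supply that many distinct scalars, one passes to an extension, which is standard and does not affect the size bound.) The only step that needs any care is the degree accounting on the restricted curve, and that is immediate; otherwise the argument is the textbook Lagrange-curve trick.
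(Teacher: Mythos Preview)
Your proof is correct and follows essentially the same approach as the paper: both construct a Lagrange-interpolated curve through the points of $\Hit$, observe that each $C_i$ restricted to the curve is a nonzero univariate (because the curve passes through a hitting point for $C_i$), and then bound the degree of the product to obtain the final hitting set. Your degree bookkeeping is in fact slightly sharper (you use $\deg \gamma \le N-1$ rather than $N$), but otherwise the arguments coincide.
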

\begin{proof}
Let $h = \abs{\Hit}$  and 
$\Hit = \{ \mathbf{\alpha}_1, \mathbf{\alpha}_2, \dots, \mathbf{\alpha}_{h} \}$.
Let $B := \{\beta_i\}_{i=1}^{h}$ be a set of constants.
The Lagrange interpolation $\mathbf{\alpha}(u)$ of the points in $\Hit$ is defined
as follows 
$$ \mathbf{\alpha}(u) := \sum_{i=1}^{h} \frac{\prod_{j \neq i} (u - \beta_j) }{\prod_{j \neq i} (\beta_i - \beta_j) } \mathbf{\alpha}_i . 
$$
The key property of the interpolation is that 
when we put $u = \beta_i$,
$\mathbf{\alpha}(\beta_i) = \mathbf{\alpha}_i$ for all $i \in [h]$.
For any $a \in [m]$, we know that 
$C_a(\mathbf{\alpha}_i) \neq 0$, for some $i \in [h]$. 
Hence, $C_a(\mathbf{\alpha}(u))$ as a polynomial in $u$ is nonzero 
because $C_a(\mathbf{\alpha}(\beta_i)) = C_a(\mathbf{\alpha}_i) \neq 0$.
So, we can say $C(\mathbf{\alpha}(u)) \neq 0$ as a polynomial in $u$. 
Degree of $\mathbf{\alpha}(u)$ is $h$. So, degree of $C(\mathbf{\alpha}(u))$
in $u$ is bounded by $m \delta h$. 
We can put $(m \delta h +1)$-many distinct values of $u$ 
to get a hitting-set for $C(\mathbf{\alpha}(u))$.
\end{proof}

Note that a hitting-set for $\alpha(\mathbf{x}) C(\mathbf{x})$ is also 
a hitting-set for $C(\mathbf{x})$ if $\alpha$ is a nonzero polynomial.
Recall that we get a hitting-set for invertible ROABP from
 Theorem~\ref{thm:invROABPHS}. 
Lemma~\ref{lem:width2} tells us how to 
write a width-$2$ ROABP as a product of width-$2$ invertible ROABPs. 
Combining these results with Lemma~\ref{lem:lagrange}
we directly get the following. 

\begin{theorem}
\label{thm:ROABP22HS}
Let $C(\mathbf{x}) = D_0^{\top}(\mathbf{x}_0) (\prod_{i=1}^{d} D_i(\mathbf{x}_i)) D_{d+1}(\mathbf{x}_{d+1})$
be a polynomial 
in $\F[\mathbf{x}]$ computed by a width-$2$ ROABP
 such that for all $0 \leq i \leq d+1$,
 $D_i$ has degree bounded
by $\delta$, $\sp(D_i) \leq s$ and $\mu(D_i) \leq \mu$.
Let $\ell := 1 + 2 \min\{ \ceil{\log (4 \cdot s)}, \mu \}$.
Then there is a hitting-set of size
$\poly((n \delta s)^{\ell})$.
\end{theorem}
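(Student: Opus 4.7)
The plan is to compose three results already established in this section: the factorization of a width-$2$ sparse-factor ROABP into width-$2$ sparse-invertible-factor ROABPs (Lemma~\ref{lem:width2}), the hitting-set for sparse-invertible-factor ROABPs (Theorem~\ref{thm:invROABPHS}), and the Lagrange interpolation trick for products of polynomials (Lemma~\ref{lem:lagrange}). The non-trivial issue is just to confirm that parameters propagate correctly through each step, so that the final size bound matches the one claimed.

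First I would apply Lemma~\ref{lem:width2} to write $\alpha(\mathbf{x}) C(\mathbf{x}) = C_1(\mathbf{x}) C_2(\mathbf{x}) \dotsm C_{m+1}(\mathbf{x})$ for some nonzero $\alpha \in \F[\mathbf{x}]$ and some $m \leq d$, where each $C_j$ is computed by a width-$2$ sparse-invertible-factor ROABP. I would then inspect the construction in the proof of Lemma~\ref{lem:width2}: the vectors $A_{i_j}, B_{i_j}$ are built from entries of $D_{i_j}$, which individually inherit the sparsity, support and degree bounds of $D_{i_j}$ itself. Thus each $C_j$ satisfies the hypotheses of Theorem~\ref{thm:invROABPHS} with $w = 2$, sparsity bound $s$, individual-support bound $\mu$, and degree bound $\delta$.

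Next, I would apply Theorem~\ref{thm:invROABPHS} to the class of width-$2$ sparse-invertible-factor ROABPs satisfying those parameter bounds. With $w = 2$, the value $\ell = 1 + 2 \min\{\ceil{\log(w^2 s)}, \mu\}$ in that theorem coincides exactly with the $\ell$ appearing in the statement. The theorem yields a single hitting-set $\Hit$ of size $\poly((n \delta s)^{4\ell}) = \poly((n \delta s)^{\ell})$ that hits every nonzero polynomial in this class, and in particular every nonzero $C_j$.

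Finally, since $\alpha \neq 0$, any hitting-set for $\alpha(\mathbf{x}) C(\mathbf{x}) = \prod_{j=1}^{m+1} C_j(\mathbf{x})$ is automatically a hitting-set for $C(\mathbf{x})$. Applying Lemma~\ref{lem:lagrange} to this product, with degree bound $\poly(n,\delta)$ on each factor and $m+1 \leq d+1 \leq n+1$ factors, gives a hitting-set of size at most $(m+1) \cdot \poly(n,\delta) \cdot \abs{\Hit} + 1 = \poly((n \delta s)^{\ell})$, as desired. The only step that needs any care is the bookkeeping in the first paragraph — ensuring the factorization does not blow up the sparsity, support, or degree parameters — and this is essentially immediate from the explicit formula $\alpha_i D_i = A_i B_i^\top$ used in Lemma~\ref{lem:width2}.
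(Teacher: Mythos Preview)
Your proposal is correct and follows exactly the same approach as the paper: combine Lemma~\ref{lem:width2}, Theorem~\ref{thm:invROABPHS} (with $w=2$), and Lemma~\ref{lem:lagrange}, noting that a hitting-set for $\alpha C$ is a hitting-set for $C$ when $\alpha\neq 0$. Your parameter bookkeeping is in fact more explicit than the paper's, which simply states that combining these three results ``directly'' gives the theorem.
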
 

We remark again that when 
all $D_i$s are constant-variate or linear polynomials,
the hitting-set is polynomial-time.

\section{Discussion}

The first open problem is to do basis isolation for ROABP with only a polynomially large
weight assignment.
Also, our technique of finding a basis isolating weight assignment seems general. 
It needs to be explored, for what other general classes can it be applied.
In particular, can it be used to solve depth-$3$ multilinear circuits?
An easier question, perhaps, could be to improve Theorem~\ref{thm:baseSetsHS}
to get a truly blackbox PIT for the $2$-base-sets-1-distance model.

Another question is whether we can find a similar result in the boolean setting,
i.e.\ get a psuedorandom generator for unknown order ROBP with seed length same as
the known order case.

In the case of constant width ROABP, 
we could show constant-support concentration, but
only after assuming that the factor matrices are invertible. 
It seems that the invertibility assumption restricts the computing power of ROABP
significantly.
It is desirable to have low-support concentration without the
assumption of invertibility. 

As in the case of invertible ROABP and width-$2$ ROABP, analogous results
hold in the boolean setting, it will be interesting to see if there is some connection,
at the level of techniques,
between pseudorandom generators for boolean and arithmetic models.   

\section{Acknowledgements}
We thank Chandan Saha for suggestions to improve this paper. 
Several useful ideas about $\di$-distance circuits and 
base sets came up during discussions with him.
We thank Michael Forbes for suggesting a possible reduction from $\di$-distance circuits
to ROABP (Lemma~\ref{lem:dDistROABP}).
We thank anonymous reviewers for the various simplifications and useful suggestions. 
RG thanks TCS research fellowship for support.
NS thanks DST-SERB for the funding support.

\bibliographystyle{amsalpha}
\bibliography{deltaDistance}

\end{document}